\newtheorem{thm}{Theorem}[section]
\newtheorem{lem}{Lemma} [section]
\newtheorem{definition}{Definition}
\newtheorem{alg}{Algorithm}
\newcommand{\A}{{\bf A}}
\newcommand{\M}{{\bf M}}
\def\x{{\bf x}}
\def\y{{\bf y}}
\def\w{{\bf w}}
\def\t{{\bf t}}
\def\n{{\bf n}}
\def\k{{\bf k}}
\def\z{{\bf z}}
\def\e{{\bf e}}
\def\F{{\mathcal{F}}}
\def\G{{\mathcal{G}}}
\def\W{{\bf W}}
\newcommand{\beq}{\begin{equation}}
\newcommand{\eeq}{\end{equation}}
\newcommand{\bea}{\begin{eqnarray}}
\newcommand{\eea}{\end{eqnarray}}
\newcommand{\Prob}{\ensuremath{\mathbb{P}}}
\long\def\symbolfootnote[#1]#2{\begingroup%
\def\thefootnote{\fnsymbol{footnote}}\footnote[#1]{#2}\endgroup}
\begin{document}

\title{Analyzing Weighted $\ell_1$ Minimization for Sparse Recovery with Nonuniform Sparse Models\footnote{The results of this paper were presented in part at the International Symposium on Information Theory, ISIT 2009.}}
\author{ M. Amin Khajehnejad$^\dag$, Weiyu Xu$^\dag$, A. Salman Avestimehr$^*$ and Babak Hassibi$^\dag$ \\\text{}\\$^\dag$California Institute of Technology, Pasadena CA 91125 \\ $^*$Cornell University, Ithaca NY 14853}
 \maketitle
 \vspace{1cm}
\begin{abstract}
In this paper we introduce a nonuniform sparsity model and analyze
the performance of an optimized weighted $\ell_1$ minimization over
that sparsity model.   In particular, we focus on a model where the
entries of the unknown vector fall into two sets, with entries of
each set having a specific probability of being nonzero. We propose
a weighted $\ell_1$ minimization recovery algorithm and analyze its
performance using a Grassmann angle approach. We compute explicitly
the relationship between the system parameters-the weights, the
number of measurements, the size of the two sets, the probabilities
of being nonzero- so that when i.i.d. random Gaussian measurement
matrices are used, the weighted $\ell_1$ minimization recovers a
randomly selected signal drawn from the considered sparsity model
with overwhelming probability as the problem dimension increases.
This allows us to compute the optimal weights. We demonstrate
through rigorous analysis and simulations that for the case when the
support of the signal can be divided into two different subclasses
with unequal sparsity fractions, the optimal weighted
$\ell_1$ minimization outperforms the regular $\ell_1$ minimization
substantially. We also generalize the results to an arbitrary number of classes.
\end{abstract}
\thispagestyle{empty} 


\setcounter{page}{1}

\section{Introduction}
\label{sec:Intro}

Compressed sensing is an emerging technique of joint sampling and
compression that has been recently proposed as an alternative to
Nyquist sampling (followed by compression) for scenarios where
measurements can be costly~\cite{rice}. The whole premise is that
sparse signals (signals with many zero or negligible elements over a
known basis) can be recovered with far fewer measurements than the
ambient dimension of the signal itself. In fact, the major
breakthrough in this area has been the demonstration that
$\ell_1$ minimization can efficiently recover a sufficiently sparse
vector from a system of underdetermined linear equations \cite{CT}.
$\ell_1$ minimization is usually posed as the convex relaxation of
$\ell_0$ minimization which solves for the sparsest solution of a
system of linear equation and is NP hard.

The conventional approach to compressed sensing assumes no prior
information on the unknown signal other than the fact that it is
sufficiently sparse over a particular basis. In many applications,
however, additional prior information is available. In fact, in many
cases the signal recovery problem that compressed sensing  addresses
is a detection or estimation problem in some statistical setting.
Some recent work along these lines can be found in \cite{Baraniuk
Detection}, which considers compressed detection and estimation,
\cite{Bayesian CS}, which studies Bayesian compressed sensing, and
\cite{Chevher} which introduces model-based compressed sensing
allowing for model-based recovery algorithms. In a more general
setting, compressed sensing may be the inner loop of a larger
estimation problem that feeds prior information on the sparse signal
(e.g., its sparsity pattern) to the compressed sensing algorithm
\cite{Candes Reweighted,ISIT10 Reweighted}.

In this paper we will consider a particular model for the sparse
signal where the entries of the unknown vector fall into a number $u$ of classes, with each class having a specific fraction of nonzero entries. The standard compressed sensing model
is therefore a special case where there is only one class. As mentioned above, there are many situations where such
prior information may be available, such as in natural images,
medical imaging, or in DNA microarrays. In the DNA microarrays
applications for instance, signals are often {\em block
  sparse}, i.e., the signal is more likely to be nonzero in certain
blocks rather than in others \cite{Mihailo BS-CS}. While it is
possible (albeit cumbersome) to study this model in full generality,
in this paper we will focus on the case where the entries of the
unknown signal fall into a fixed number $u$ of categories; in the
$i$th set $K_i$ with cardinality $n_i$, the fraction of nonzero entries is $p_i$. This model is rich enough to capture
many of the salient features regarding prior information. We refer
to the signals generated based on this model as \emph{nonuniform
sparse} signals.

A signal  generated based on this model could resemble the vector
representation of a natural image in the domain of some linear
transform (e.g.  Discrete Fourier Transform, Discrete Cosine
Transform, Discrete Wavelet Transform, ...) or the spatial
representation of some biomedical image, e.g., a brain fMRI image.
Although a brain fMRI image is not necessarily sparse, the
subtraction of the brain image at any moment during an experiment
from an initial background image of inactive brain mode is indeed a
sparse signal which, demonstrates the additional brain activity
during the specific course of experiment. Moreover, depending on the
assigned task, the experimenter might have some prior information.
For example it might be known that some regions of the brain are
more likely to be entangled  with the decision making process than
the others. This can be captured in the above \emph{nonuniform
sparse } model by considering a higher value $p_i$ for the more
active region. Similarly, this model is applicable to other problems
like network monitoring (see ~\cite{NetMon} for an application of
compressed sensing and nonlinear estimation in compressed network
monitoring), DNA microarrays ~\cite{DNAref,DNAref1,DNAref2},
astronomy, satellite imaging and many more practical examples.

In this paper we first analyze this model for the case where there are
$u\geq 2$ categories of entries, and demonstrate through
rigorous analysis and simulations that  the recovery performance can
be significantly boosted by exploiting the additional information.
We find a closed form expression for the recovery threshold for
$u=2$. We also generalize the results to the case of $u >2$. A further interesting question to be addressed in future work would be to characterize the gain in recovery percentage as a function of the number of
distinguishable classes $u$. It is worth mentioning that a somewhat
similar model for prior information has been considered in
\cite{Vas}. There, it has been assumed that part of the support is
 completely known a priori or due to previous processing. A modification of the regular
 $\ell_1$ minimization based on the given information is proven to
 achieve significantly better recovery guarantees. As will be discussed, this
 model can be cast as a special case of the nonuniform sparse model, where the sparsity fraction is equal to unity in one of the classes . Therefore, using the generalized tools of this work, we can explicitly find the recovery thresholds for the method proposed in \cite{Vas}. This is in contrast to the recovery guarantees of \cite{Vas} which are given in terms of the restricted isometry property (RIP).

The contributions of the paper are the following. We propose a
weighted $\ell_1$ minimization approach for sparse recovery where
the $\ell_1$ norms of different classes ($K_i$'s) are assigned different
weights $w_{K_i}$ ($1\leq i\leq u$). Clearly, one would want to give a
larger weight to the entries with a higher chance of being zero and
thus further force them to be zero.\footnote{A somewhat related
method that uses
  weighted $\ell_1$ optimization is by Cand\`{e}s et al. \cite{Candes
    Reweighted}. The main difference is that there is no prior
  information. At each step, the $\ell_1$ optimization is re-weighted
  using the estimate of the signal obtained in the last minimization
  step.} The second contribution is that we \emph{explicitly} compute the
relationship between  $p_i$, $w_{K_i}$,$\frac{n_i}{n}$, $1\leq i\leq u$ and
the number of measurements so that the unknown signal can be
recovered with overwhelming probability as $n\rightarrow\infty$ (the
so-called weak and strong thresholds) for measurement matrices drawn from an
i.i.d. Gaussian ensemble. The analysis uses the high-dimensional
geometry techniques first introduced by Donoho and Tanner
\cite{DT,D} (e.g., Grassmann angles) to obtain sharp thresholds for
compressed sensing. However, rather than the {\em
neighborliness} condition used in \cite{DT,D}, we find it more
convenient to use the null space characterization of Xu and Hassibi
\cite{Weiyu GM,StXuHa08}. The resulting Grassmannian manifold
approach is a general framework for incorporating additional factors
into compressed sensing: in \cite{Weiyu GM} it was used to
incorporate approximately sparse signals; here it is used to
incorporate prior information and weighted $\ell_1$ optimization.
Our analytic results allow us to precisely compute the optimal
weights for any $p_i$,$n_i$, $1\leq i\leq u$. We also provide certain robustness conditions for the recovery scheme for compressible signals or under model mismatch. We present
simulation results to show the advantages of the weighted method
over standard $\ell_1$ minimization. Furthermore, the results of this paper for the case of two classes ($u=2$)  builds a rigid framework for analyzing
certain classes of re-weighted $\ell_1$ minimization algorithms. In a re-weighted
$\ell_1$ minimization algorithm, the post processing information from
the estimate of the signal at each step can be viewed as additional
prior information about the signal, and can be incorporated into the
next step as appropriate weights. In a further work we have been
able to analytically prove the threshold improvement in a reweighted
$\ell_1$ minimization  using this framework \cite{KHXUAVHA Allerton}. It is worth mentioning that we have
prepared a software package based on the results of this paper for
threshold computation using weighted $\ell_1$ minimization, and it
is available in \cite{Softwate Package}.

The paper is organized as follows. In the next section we briefly
describe the notations that we use throughout the paper. In Section
\ref{sec:model} we  describe the model and state the principal
assumptions of nonuniform sparsity that we are interested in. We
also sketch the objectives that we are shooting for and, clarify
what we mean by \emph{recovery improvement} in the weighted $\ell_1$
case. In Section \ref{sec: Main results}, we skim through our
critical theorems and try to present the big picture of the main
results.  Section \ref{sec:Derivations} is dedicated to the concrete
derivation of these results. In Section \ref{sec:approximated supp recov}, we briefly introduce the reweighted $\ell_1$ minimization algorithm, and provide some insights in how the derivations of this work can be used to analyze the improved recovery thresholds. In Section \ref{sec:Simulation} some
simulation results are presented and are compared to the analytical
bounds of the previous sections. The paper ends with a conclusion
and discussion of future work in Section \ref{sec:Conclusion}.

\section{Basic Definitions and  Notations}

Throughout the paper, vectors are denoted by small boldface letters
$\x,\w,\z,\cdots$, scalars are shown by small regular letters
$a,b,\alpha,\cdots$, and matrices are denoted by bold capital
letters($\A,\bf{I},\cdots$). For referring to geometrical objects
and subspaces, we use Calligraphic notation, e.g.
$\mathcal{Z},\F,\G,\mathcal{P},\mathcal{C},\cdots$. This includes
the notations that we use to indicate the faces of a high
dimensional polytope, or the polytope itself. Sets and random
variables are denoted by regular capital letters($K,S,\cdots$). The
normal distribution with mean $\mu$ and variance $\sigma^2$ is
denoted by $\mathcal{N}(\mu,\sigma^2)$. For functions we use both
little and capital letters and it should be generally clear from the
context. We use the phrases RHS and LHS as abbreviations for Right
Hand Side and Left Hand Side respectively throughout the paper.

\begin{definition}
A random variable $Y$ is said to have a \emph{Half Normal}
distribution $HN(0,\sigma^2)$ if $Y = |X|$ where $X$ is a zero mean
normal variable $X\sim \mathcal{N}(0,\sigma^2)$.
\end{definition}

\section{Problem Description}
\label{sec:model}

We first define the signal model. For completeness, we present a
general definition.
\begin{definition}
\label{def:nonuniform sparse} Let $\mathcal{K} =
\{K_1,K_2,...,K_u\}$ be a partition of $\{1,2,\cdots,n\}$, i.e. ($K_i\cap K_j = \emptyset$ for $ i\neq j$, and
$\bigcup_{i=1}^u K_i = \{1,2,...,n\}$), and $P = \{p_1,p_2,\cdots,p_u\}$ be a set of
positive numbers in $[0,1]$. A $n\times 1$  vector $\x =
(x_1,x_2,\cdots,x_n)^T$ is said to be a \textbf{random nonuniformly
sparse} vector with sparsity fraction $p_{i}$ over the set $K_i$ for
$1\leq i\leq u$, if $\x$ is generated from the following random
procedure:

\begin{itemize}
\item Over each set $K_i$, $1 \leq i \leq u $, the set of nonzero entries of $\x$ is a random subset of size $p_i|K_i|$. In other words, a fraction $p_i$ of the entries are nonzero in $K_i$. $p_i$ is called the sparsity fraction over $K_i$. The values of the nonzero entries of $\x$ can arbitrarily be selected from any symmetric distribution. We can choose $\mathcal{N}(0,1)$ for simplicity.
 \end{itemize}
\end{definition}

In Figure \ref{fig:model}, a sample  nonuniformly sparse signal with
Gaussian distribution for nonzero entries is plotted. The number of
sets is considered to be $u=2$ and both classes have the same size
$\frac{n}{2}$, with $n=1000$. The sparsity fraction for the first class $K_1$ is $p_1
= 0.3$, and for the second class $K_2$ is $p_2=0.05$. In fact, the
signal is much sparser in the second half than it is in the first
half. The advantageous feature of this model is that all the
resulting computations are independent of the actual distribution on
the amplitude of the nonzero entries. However, as expected, it is
not independent of the properties of the measurement matrix. We
assume that the measurement matrix $\A$ is a $m\times n$ matrix with
i.i.d. standard Gaussian distributed $\mathcal{N}(0,1)$ entries,
with $\frac{m}{n}=\delta<1$. The measurement vector is denoted by
$\y$ and obeys the following:
\begin{equation}
\y = \A\x.
\end{equation}

\begin{figure}
  \centering
  \includegraphics[width=0.6\textwidth]{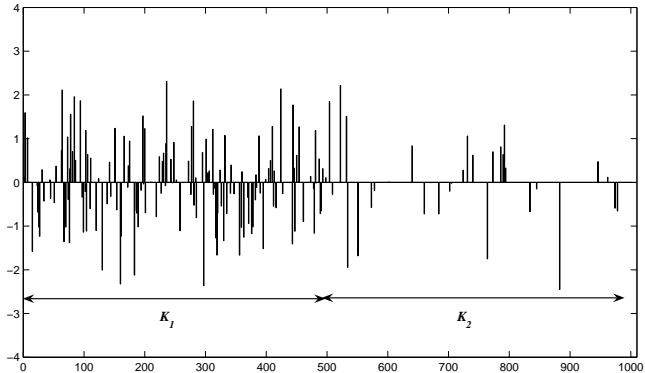}
   \caption{ \scriptsize Illustration of a nonuniformly sparse signal.}
  \label{fig:model}
\end{figure}

\noindent As mentioned in Section \ref{sec:Intro}, $\ell_1$
minimization can recover a randomly selected vector $\x$ with $k=\mu
n$ nonzero entries with high probability, provided $\mu$ is less
than a known function of $\delta$. $\ell_1$ minimization has the
following form:
\vspace*{-5pt}
\begin{equation}\label{eq:l_1}
\min_{\bf A\x=\y}{\|\x\|_1}.
\end{equation}

\noindent The reference ~\cite{DT} provides an explicit relationship
between $\mu$ and the minimum $\delta$ that guarantees success of
$\ell_1$ minimization recovery in the case of Gaussian measurements
and provides the corresponding numerical curve. The optimization in (\ref{eq:l_1}) is a
linear program and can be solved polynomially fast ($O(n^3)$).
However, it fails to encapsulate additional prior information of the
signal nature, might there be any such information available. One
can simply think of modifying (\ref{eq:l_1}) to a weighted $\ell_1$
minimization as follows:

\begin{equation} \label{eq:weighted l_1}
\min_{\bf A\x=\y}{\|\x\|_{\w,1}}=\min_{\A\x=\y}{\sum_{i=1}^{n}{w_i
|x_i|}}
\end{equation}

\noindent The index, $\w$, on the norm is an indication of the
$n\times 1$ positive weight vector. Now the questions are i) what is the
optimal set of weights for a certain set of available prior
information?, and ii) can one improve the recovery threshold using the
weighted $\ell_1$ minimization of (\ref{eq:weighted l_1}) by choosing a 
set of optimal weights? We have to be
more clear with our objective at this point and clarify what we
mean by improving the recovery threshold. Generally speaking, if a recovery method can
reconstruct all signals of a certain model with certainty, then that
method is said to be \emph{\textbf{strongly successful}} on that
signal model.  If we have a class of models that can be identified
with a parameter $\theta$, and if for all models corresponding to
$\theta < \theta_0$ a recovery scheme is strongly successful, then
the threshold $\theta_0$ is called a \emph{\textbf{strong recovery
threshold}} for the parameter $\theta$. For example, for fixed
$\frac{m}{n}$, if $k<n$ is sufficiently small, then $\ell_1$
minimization can provably recover all $k$-sparse signals, provided
that appropriate linear measurements have been made from the signal.
The maximum such $k$ is called the strong recovery threshold of the
sparsity for the success of $\ell_1$ minimization. Likewise, for a
fixed ratio $\mu=\frac{k}{n}$, the minimum ratio of measurements to
ambient dimension $\frac{m}{n}$ for which, $\ell_1$ minimization
always recovers $k$-sparse signals from the given $m$ linear
measurements is called the strong recovery threshold for the number of
measurements for $\ell_1$ minimization. In contrast, one can also
look into the \textbf{\emph{weak recovery}} threshold,
defined as the threshold below which, with very high probability a
random vector generated from the model is recoverable. For the nonuniformly sparse model, the quantity
of interest is the overall sparsity fraction of the model
defined as ($\frac{\sum_{i=1}^u p_in_i}{n}$). The question we ask is
whether  by adjusting $w_i$'s according to $p_i$'s one can extend
the strong or weak recovery threshold for sparsity fraction to a value
above the known threshold of $\ell_1$ minimization. Equivalently, for given classes $K_1,\cdots,K_u$ and sparsity fractions $p_i$'s, how much can the strong or weak threshold be improved for the minimum number of required measurements, as apposed to the case of uniform sparsity with the same overall sparsity fraction.

\section{Summary of Main Results}
\label{sec: Main results}

We state the two problems more formally using the notion of recovery thresholds
that we defined in the previous section.  We only consider the case
of $u=2$.
\begin{itemize}
\item{\textbf{Problem 1}}
Consider the random nonuniformly sparse model with two classes
$K_1,K_2$ of cardinalities $n_1=\gamma_1 n$ and $n_2= \gamma_2 n$
respectively, and given sparsity fractions $p_1$ and $p_2$. Let $\w$
be a given weight vector. As $n\rightarrow \infty$, what is the weak (strong)
recovery threshold for  $\delta = \frac{m}{n}$ so that a randomly
chosen vector (all vectors) $\x_0$ selected from the nonuniformly sparse model is successfully  recovered by the weighted
$\ell_1$ minimization of (\ref{eq:weighted l_1}) with high
probability?
\end{itemize}
Upon solving Problem.1, one can exhaustively search for the weight
vector $\w$ that results in the minimum recovery threshold for
$\delta$. This is what we recognize as the optimum set of weights.
So the second problem can be stated as:
\begin{itemize}
\item{\textbf{Problem 2}}
Consider the random nonuniformly sparse model defined by classes
$K_1,K_2$ of cardinalities $n_1$ and $n_2$ respectively, with
$\gamma_1= \frac{n_1}{n}$ and $\gamma_2= \frac{n_2}{n}$, and given
sparsity fractions $p_1$ and $p_2$. What is the optimum weight vector
$\w$ in (\ref{eq:weighted l_1}) that results in the minimum number
of measurements for almost sure recovery of signals generated from
the given random nonuniformly sparse model?
\end{itemize}

\noindent We will fully solve these problems in this paper. We first
connect the misdetection event to the properties of the measurement
matrix. For the non-weighted case, this has been considered in
\cite{StXuHa08} and is known as the null space property. We
generalize this result to the case of weighted $\ell_1$
minimization, and mention a necessary and sufficient condition for
(\ref{eq:weighted l_1}) to recover the original signal of interest.
The theorem is as follows
\begin{thm}\label{thm:Null space}
For all $n\times 1$ vectors $\x^*$ supported on the set
$K\subseteq\{1,2,...,n\}$, $\x^*$ is the unique solution to the
linear program $\min_{\A\x=\y}{\sum_{i=1}^{n}{w_i |x_i|}}$ with
$\y=\A\x^*$, if and only if for every vector
$\z=(z_1,z_2,\cdots,z_n)^T$ in the null space of $\A$, the following
holds: $\sum_{i\in K}{w_i|z_i|} \geq \sum_{i\in
\overline{K}}{w_i|z_i|}$.
\end{thm}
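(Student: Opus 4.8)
The plan is to prove the two implications of the equivalence by passing through the null space of $\A$. The first observation is that, because $\y=\A\x^*$, the feasible set of $\min_{\A\x=\y}\sum_i w_i|x_i|$ is exactly $\{\x^*+\z : \A\z=0\}$; hence $\x^*$ is the unique minimizer if and only if $\sum_i w_i|x_i^*+z_i|>\sum_i w_i|x_i^*|$ for every nonzero $\z$ with $\A\z=0$. Since $\x^*$ vanishes off $K$, I would split the weighted norm of $\x^*+\z$ as $\sum_{i\in K} w_i|x_i^*+z_i|+\sum_{i\in\overline{K}} w_i|z_i|$, which isolates the contribution of $\z$ on the support from its contribution off the support and reduces the whole question to comparing the two weighted sums $\sum_{i\in K} w_i|z_i|$ and $\sum_{i\in\overline{K}} w_i|z_i|$ that appear in the theorem.

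For the necessity direction I would argue by contraposition, showing that any nonzero null space vector $\z$ for which $\sum_{i\in K} w_i|z_i|\ge\sum_{i\in\overline{K}} w_i|z_i|$ obstructs unique recovery of some signal on $K$. The construction is explicit: choose $\x^*$ supported on $K$ that cancels $\z$ there, e.g. $x_i^*=-z_i$ for $i\in K$ (and $0$ off $K$); then the competing feasible point $\x^*+\z$ is supported on $\overline{K}$ with weighted norm $\sum_{i\in\overline{K}} w_i|z_i|$, which by assumption does not exceed $\|\x^*\|_{\w,1}=\sum_{i\in K} w_i|z_i|$. This exhibits a distinct feasible point no worse than $\x^*$, so $\x^*$ is not the unique minimizer. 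For the sufficiency direction I would use the reverse triangle inequality coordinatewise on $K$, $|x_i^*+z_i|\ge|x_i^*|-|z_i|$, to obtain $\sum_i w_i|x_i^*+z_i|-\sum_i w_i|x_i^*|\ge\sum_{i\in\overline{K}} w_i|z_i|-\sum_{i\in K} w_i|z_i|$, so that as long as every nonzero $\z$ keeps its weighted support-mass strictly below its off-support mass the right side is positive and $\x^*$ is forced to be the unique solution.

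The step I expect to be the main obstacle is pinning down the boundary between being optimal and being \emph{uniquely} optimal. The reverse triangle inequality only delivers a non-strict bound, and equality there occurs exactly when, on $K$, $z_i$ is aligned in sign with $x_i^*$ (or $x_i^*=0$); so turning the weighted null space comparison into strict uniqueness requires a careful analysis of this equality case and of the coordinates where $z_i=0$. A related subtlety is that the characterization must hold uniformly over all $\x^*$ supported on $K$: the cancellation construction in the necessity step is precisely what certifies that the comparison between $\sum_{i\in K} w_i|z_i|$ and $\sum_{i\in\overline{K}} w_i|z_i|$ is the exact dividing line for recovery, rather than a merely sufficient condition.
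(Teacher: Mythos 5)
Your proof is correct and takes essentially the same route as the paper's: you identify the feasible set as $\x^*+\mathcal{N}(\A)$, split the weighted norm over $K$ and $\overline{K}$, use the reverse triangle inequality on $K$ for sufficiency, and use the sign-cancellation construction $\x^*_K=-\z_K$ for necessity, exactly as in the paper's proof of Theorem \ref{thm:Null space} and Lemma \ref{lemma:baselemma}. Your flagged concern about the strict/non-strict boundary is in fact handled more carefully in your write-up than in the paper, whose statement of the theorem even has the inequality direction reversed relative to its own proof (which requires $\sum_{i\in K}w_i|z_i|\leq\sum_{i\in\overline{K}}w_i|z_i|$, strictly for nonzero $\z$ to force uniqueness).
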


\noindent This theorem will be proved in Section
\ref{sec:Derivations}. As will be explained in Section \ref{sec:
Bounds on Pe}, Theorem \ref{thm:Null space} along with known facts
about the null space of random Gaussian matrices, help us interpret
the probability of recovery error in terms of a high dimensional
geometrical object called the complementary Grassmann angle; namely
the probability that a uniformly chosen $(n-m)$-dimensional subspace $\mathcal{Z}$
shifted by a point $\x$ of unity weighted $\ell_1$-norm,
$\sum_{i=1}^{n}w_ix_i=1$, intersects the weighted $\ell_1$-ball
$\mathcal{P}_{\w}=\{\y\in \mathbb{R}^n~|\sum_{i=1}^{n}w_i|y_i|\leq
1\}$ nontrivially at some other point besides $\x$. The shifted
subspace is denoted by ${\mathbf{\mathcal{Z}}} + \x$. The fact that
we can take for granted, without explicitly proving it, is that due
to the identical marginal distribution of the entries of $\x$ in
each of the sets $K_1$ and $K_2$, the entries of the optimal weight
vector take at most two (or in the general case $u$) distinct values
$w_{K_1}$ and $w_{K_2}$ depending on their index. In other words
\begin{equation}
\forall i\in\{1,2,\cdots,n\} ~~~w_i=\left\{\begin{array}{c}w_{K_1}
~if~i\in K_1\\ w_{K_2} ~if~i\in K_2
\end{array}\right.
\label{eq:w_i's}
\end{equation}

\noindent Leveraging on the existing techniques for computing the
complementary Grassmann angle~\cite{santalo,McMullen}, we will be able to state and prove the following
theorem along the same lines, which upper bounds the probability
that the weighted $\ell_1$ minimization does not recover the signal. Please note that in the following theorem, the rigorous mathematical definitions to some of the terms (internal angle and external angle) is not presented, due to the extent of descriptions. They will however be defined rigorously later in the derivations of the main results in Section \ref{sec:Derivations}.
\begin{thm}
\label{thm: bound on probability of error}
Let $K_1$ and $K_2$ be two disjoint subsets of $\{1,2,\cdots,n\}$
such that $|K_1|=n_1, |K_2|=n_2$, and $p_1$ and $p_2$ be real
numbers in $[0,1]$. Also, let $k_1 = p_1n_1$, $k_2 = p_2n_2$, and $E$ be the event that a random
nonuniformly sparse vector $\x_0$ (Definition \ref{def:nonuniform
sparse}) with sparsity fractions $p_1$ and $p_2$ over the sets $K_1$
and $K_2$ respectively is recovered via the weighted $\ell_1$
minimization of (\ref{eq:weighted l_1}) with $\y=\A\x_0$. Also, let
$E^c$ denote the complement event of $E$. Then
\small{
\begin{align}
\Prob\{E^c\} &\leq \sum_{{\tiny\begin{array}{c}0\leq t_1 \leq n_1-k_1\\0\leq t_2 \leq
n_2-k_2\\ t_1+t_2 > m-k_1-k_2+1\end{array}}} 2^{t_1+t_2+1}{n_1-
k_1\choose t_1}{n_2-k_2\choose t_2}
\beta(k_1,k_2|t_1,t_2)\zeta(t_1+k_1,t_2+k_2)
\label{eq:sumformula}
\end{align}
} \normalsize where $\beta(k_1,k_2|t_1,t_2)$ is the internal angle
between a  $(k_1+k_2-1)$-dimensional face $\mathcal{F}$ of the
weighted  $\ell_1$-ball
$\mathcal{P}_{\w}=\{y\in\mathbb{R}^{n}|\sum_{i=1}^{n}w_i|y_i|\leq
1\}$ with $k_1$ vertices supported on $K_1$ and $k_2$ vertices
supported on $K_2$, and another $(k_1+k_2+t_1+t_2-1)$-dimensional
face $\G$ that encompasses $\mathcal{F}$ and has $t_1+k_1$ vertices
supported on $K_1$ and the remaining $t_2+k_2$ vertices supported on
$K_2$.  $\zeta(d_1,d_2)$ is the external angle between a face $\G$
supported on set $L$ with $|L\cap K_1|=d_1$ and $|L\cap K_2|=d_2$
and the weighted $\ell_1$-ball $\mathcal{P}_{\w}$. See Section \ref{sec: Bounds on Pe} for the definitions of integral and external angles. 
\end{thm}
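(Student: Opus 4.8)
The plan is to bound $\Prob\{E^c\}$ by the probability that the shifted null space $\mathcal{Z}+\x_0$ intersects the weighted $\ell_1$-ball $\mathcal{P}_{\w}$ nontrivially, using the null space characterization of Theorem~\ref{thm:Null space} together with the fact that the null space of a random Gaussian matrix is a uniformly distributed $(n-m)$-dimensional subspace. Concretely, $E^c$ occurs exactly when some $\z$ in the null space violates the inequality $\sum_{i\in K}w_i|z_i|\geq\sum_{i\in\overline{K}}w_i|z_i|$, which geometrically means the shifted subspace through $\x_0$ meets the polytope $\mathcal{P}_{\w}$ at a point other than $\x_0$. Since $\x_0$ lies on a $(k_1+k_2-1)$-dimensional face $\mathcal{F}$ of $\mathcal{P}_{\w}$ (the face spanned by the vertices on its support), the event of interest is that the shifted subspace intersects the interior or boundary of $\mathcal{P}_{\w}$ nontrivially, i.e.\ escapes through some higher-dimensional face $\G\supset\mathcal{F}$.

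The key quantitative tool is the Grassmann angle formula of Santal\'o and McMullen~\cite{santalo,McMullen}, which expresses the probability that a uniformly random $(n-m)$-dimensional subspace intersects the cone generated at a face $\mathcal{F}$ of a polytope as a sum, over all faces $\G$ strictly containing $\mathcal{F}$, of the product of an internal angle $\beta(\mathcal{F},\G)$ and an external angle $\zeta(\G,\mathcal{P}_{\w})$. First I would invoke this formula to write the complementary Grassmann angle as $\sum_{\G}\beta(\mathcal{F},\G)\,\zeta(\G,\mathcal{P}_{\w})$ over the relevant faces. Next I would organize the sum by combinatorial type: a face $\G\supset\mathcal{F}$ of dimension $k_1+k_2+t_1+t_2-1$ is obtained by adjoining $t_1$ new vertices supported on $K_1$ and $t_2$ new vertices supported on $K_2$, and by the symmetry induced by the two-valued weight structure~(\ref{eq:w_i's}), all faces of a given combinatorial type $(t_1,t_2)$ contribute the same internal angle $\beta(k_1,k_2|t_1,t_2)$ and the same external angle $\zeta(t_1+k_1,t_2+k_2)$.

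The counting then produces the binomial factors: there are ${n_1-k_1\choose t_1}{n_2-k_2\choose t_2}$ ways to choose which additional coordinates (from the $n_1-k_1$ zero coordinates in $K_1$ and the $n_2-k_2$ zero coordinates in $K_2$) support the new vertices of $\G$, and the factor $2^{t_1+t_2}$ accounts for the choice of sign of each newly-added vertex of the cross-polytope-like weighted $\ell_1$-ball; the extra factor of $2$ is a standard bound for the two halves of the shifted subspace (or equivalently the factor appearing in converting the Grassmann angle to a union bound over both orientations). The restriction $t_1+t_2>m-k_1-k_2+1$ arises because the Grassmann angle is nonzero only when the dimension of $\G$ is large enough relative to the codimension $m$ of the complementary subspace, so that a nontrivial intersection is geometrically possible; faces that are too small have zero escape probability and are dropped.

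The main obstacle I anticipate is not the structural union-bound argument but the careful bookkeeping of dimensions and the precise instantiation of the Santal\'o--McMullen angle-sum formula for this weighted polytope, in particular verifying that the weighted $\ell_1$-ball has exactly the face structure claimed (so that faces are indexed by signed subsets of coordinates) and that the internal and external angles depend only on the combinatorial data $(k_1,k_2,t_1,t_2)$ rather than on the specific face chosen. This reduction rests on the symmetry of $\mathcal{P}_{\w}$ under sign changes and under permutations preserving the partition $\{K_1,K_2\}$, which holds because of the two-valued weight assignment~(\ref{eq:w_i's}); once this symmetry is established the angles can be pulled out of the sum and multiplied by the combinatorial counts, yielding exactly~(\ref{eq:sumformula}). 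The explicit evaluation of $\beta(k_1,k_2|t_1,t_2)$ and $\zeta(d_1,d_2)$ as integrals is deferred, so here I only need their existence and their dependence solely on the stated parameters.
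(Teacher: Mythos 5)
Your proposal is correct and follows essentially the same route as the paper: null space characterization, reduction to a fixed support and sign pattern via symmetry, identification of the failure probability with a complementary Grassmann angle, the Santal\'o--McMullen angle-sum formula, and then the isomorphism of faces of $\mathcal{P}_{\w}$ of a given combinatorial type $(k_1,k_2,t_1,t_2)$ to pull the angles out and multiply by the count $2^{t_1+t_2}\binom{n_1-k_1}{t_1}\binom{n_2-k_2}{t_2}$. The only cosmetic difference is your gloss on the leading factor of $2$, which in the paper is simply part of the complementary Grassmann angle formula itself rather than a separate union bound.
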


\noindent The proof of this theorem will be given in Section \ref{sec:special case of u=2}. We are interested in the regimes that make the
above upper bound decay to zero as $n\rightarrow\infty$, which
requires the cumulative exponent in (\ref{eq:sumformula}) to be
negative. We are able to calculate sharp upper bounds on the
exponents of the terms in (\ref{eq:sumformula}) by using large
deviations of sums of normal and half normal variables.  More
precisely, for small enough $\epsilon$, if we assume that the sum of
the terms corresponding to particular indices $t_1$ and $t_2$ in
(\ref{eq:sumformula}) is denoted by $F(t_1,t_2)$, and define $\tau_1
= \frac{t_1}{n}$ and $\tau_2 = \frac{t_2}{n}$, then we are able to
find and compute an exponent function
$\psi_{tot}(\tau_1,\tau_2)=\psi_{com}(\tau_1,\tau_2)-\psi_{int}(\tau_1,\tau_2)-\psi_{ext}(\tau_1,\tau_2)$
so that $\frac{1}{n}\log{F(t_1,t_2)} \sim \psi_{tot}(\tau_1,\tau_2)$
as $n\rightarrow \infty $. The terms $\psi_{com}(\cdot,\cdot)$,
$\psi_{int}(\cdot,\cdot)$ and $\psi_{ext}(\cdot,\cdot)$ are
contributions to the cumulative exponent $\psi_{tot}$ by the so
called combinatorial, internal angle and external angle terms
respectively, existing in the upper bound (\ref{eq:sumformula}).
The derivations of these terms will be elaborated in Section \ref{sec:exp cal.}. Consequently, we state a key theorem that is the
implicit answer to Problem 1.

\begin{thm}\label{thm:main delta bound}
Let $\delta = \frac{m}{n}$ be the ratio of the number of
measurements to the signal dimension, $\gamma_1 = \frac{n_1}{n}$ and
$\gamma_2 = \frac{n_1}{n}$. For fixed values of $\gamma_1$,
$\gamma_2$, $p_1$, $p_2$, $\omega = \frac{w_{K_2}}{w_{K_1}}$ ,
define $E$ to be the event that a random nonuniformly sparse vector
$\x_0$ (Definition \ref{def:nonuniform sparse}) with sparsity fractions
$p_1$ and $p_2$ over the sets $K_1$ and $K_2$ respectively with
$|K_1| = \gamma_1 n$ and $|K_2|=\gamma_2 n$ is recovered via the weighted $\ell_1$
minimization of (\ref{eq:l_1}) with $\y = \A\x_0$. There exists a
critical threshold
$\delta_c=\delta_c(\gamma_1,\gamma_2,p_1,p_2,\omega)$ such that if
$\delta=\frac{m}{n}\geq \delta_c$, then  $\Prob\{E^c\}$ decays
exponentially to zero as $n\rightarrow\infty$. Furthermore,
$\delta_c$ is given by \beq \addtolength{\fboxsep}{5pt} \boxed{
\begin{gathered}
\delta_c = \min\{\delta~|~\psi_{com}(\tau_1,\tau_2)-\psi_{int}(\tau_1,\tau_2)-\psi_{ext}(\tau_1,\tau_2)<0~ \forall ~0\leq \tau_1\leq \gamma_1(1-p_1), \nonumber \\
0\leq \tau_2\leq \gamma_2(1-p_2), \tau_1+\tau_2 >
\delta-\gamma_1p_1-\gamma_2p_2 \} \nonumber
\end{gathered}
} \eeq \noindent where $\psi_{com}$, $\psi_{int}$ and $\psi_{ext}$
are obtained from the following expressions:

\noindent Define $g(x)=\frac{2}{\sqrt{\pi}}e^{-{x^2}}$,
$G(x)=\frac{2}{\sqrt{\pi}}\int_{0}^{x}e^{-y^2}dy$ and let
$\varphi(.)$ and $\Phi(.)$ be the standard Gaussian pdf and cdf
functions respectively.
\begin{enumerate}
\item(Combinatorial exponent)
\beq \psi_{com}(\tau_1,\tau_2) =
\left(\gamma_1(1-p_1)H(\frac{\tau_1}{\gamma_1(1-p_1)})+\gamma_2(1-p_2)H(\frac{\tau_2}{\gamma_2(1-p_2)})+\tau_1+\tau_2\right)\log{2}
\label{eq:comb angle}
\eeq \noindent where $H(\cdot)$ is the entropy function defined by
$H(x) = -x\log{x}-(1-x)\log(1-x)$.

\item(External angle exponent)
Define $c=(\tau_1+\gamma_1p_1)+\omega ^2(\tau_2+\gamma_2p_2)$,
$\alpha_1=\gamma_1(1-p_1)-\tau_1$ and
$\alpha_2=\gamma_2(1-p_2)-\tau_2$. Let $x_0$ be the unique solution
to $x$ of the following:
\begin{equation*}
2c-\frac{g(x)\alpha_1}{xG(x)}-\frac{\omega g(\omega
x)\alpha_2}{xG(\omega x)}=0
\end{equation*}
Then
\begin{equation}
\psi_{ext}(\tau_1,\tau_2) =
cx_0^2-\alpha_1\log{G(x_0)}-\alpha_2\log{G(\omega x_0)}
\label{eq:ext angle}
\end{equation}

\item(Internal angle exponent)
Let $b=\frac{\tau_1+\omega ^2\tau_2}{\tau_1+\tau_2}$,
$\Omega'=\gamma_1p_1+\omega ^2\gamma_2p_2$ and
$Q(s)=\frac{\tau_1\varphi(s)}{(\tau_1+\tau_2)\Phi(s)}+\frac{\omega
\tau_2\varphi(\omega s)}{(\tau_1+\tau_2)\Phi(\omega s)}$. Define the
function $\hat{M}(s)=-\frac{s}{Q(s)}$ and solve for $s$ in
$\hat{M}(s)=\frac{\tau_1+\tau_2}{(\tau_1+\tau_2)b+\Omega'}$. Let the
unique solution be $s^*$ and set $y=s^*(b-\frac{1}{\hat{M}(s^*)})$.
Compute the rate function $\Lambda^*(y)= sy
-\frac{\tau_1}{\tau_1+\tau_2}\Lambda_1(s)-\frac{\tau_2}{\tau_1+\tau_2}\Lambda_1(\omega
s)$ at the point $s=s^*$, where $\Lambda_1(s) = \frac{s^2}{2}
+\log(2\Phi(s))$. The internal angle exponent is then given by:
\begin{equation}
\psi_{int}(\tau_1,\tau_2) =
(\Lambda^*(y)+\frac{\tau_1+\tau_2}{2\Omega'}y^2+\log2)(\tau_1+\tau_2)
\label{eq:int angle}
\end{equation}
\end{enumerate}
\end{thm}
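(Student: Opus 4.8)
The plan is to start from the union bound of Theorem~\ref{thm: bound on probability of error} and extract its exponential growth rate summand by summand. Since the index ranges $0\le t_1\le n_1-k_1$ and $0\le t_2\le n_2-k_2$ contain at most $n^2$ pairs, the right-hand side of (\ref{eq:sumformula}) is a sum of at most $n^2$ nonnegative terms and hence lies between its largest summand and $n^2$ times that summand; it therefore decays exponentially precisely when every summand does, and it suffices to control the per-term exponent. Writing $F(t_1,t_2)$ for a single summand and setting $\tau_1=t_1/n$, $\tau_2=t_2/n$, I would show $\frac{1}{n}\log F(t_1,t_2)\to\psi_{com}(\tau_1,\tau_2)-\psi_{int}(\tau_1,\tau_2)-\psi_{ext}(\tau_1,\tau_2)$, the three pieces coming respectively from the combinatorial factor $2^{t_1+t_2+1}\binom{n_1-k_1}{t_1}\binom{n_2-k_2}{t_2}$, the external angle $\zeta$, and the internal angle $\beta$. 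Exponential decay of $\Prob\{E^c\}$ then reduces to $\psi_{tot}<0$ uniformly over the feasible region, whose constraints $0\le\tau_1\le\gamma_1(1-p_1)$, $0\le\tau_2\le\gamma_2(1-p_2)$ and $\tau_1+\tau_2>\delta-\gamma_1 p_1-\gamma_2 p_2$ are exactly the normalized versions of the index bounds in (\ref{eq:sumformula}), using $k_i=p_i\gamma_i n$.

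The combinatorial exponent is immediate from Stirling's formula: $\frac{1}{n}\log\binom{n_1-k_1}{t_1}\to\gamma_1(1-p_1)H\!\big(\tfrac{\tau_1}{\gamma_1(1-p_1)}\big)$ and analogously for the second block, while $\frac{1}{n}\log 2^{t_1+t_2+1}\to(\tau_1+\tau_2)\log 2$, together reproducing (\ref{eq:comb angle}). For the external angle I would invoke the classical representation of $\zeta$ (\cite{santalo,McMullen}) as a Gaussian integral over the outward normal cone of the face $\G$; for the weighted cross-polytope this factorizes into one-dimensional integrals governed by $G(\cdot)$ and $g=G'$, with the $K_2$-coordinates rescaled by $\omega$. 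Applying Laplace's method (a saddle-point evaluation) to this integral yields the stationarity condition $2c-\frac{g(x)\alpha_1}{xG(x)}-\frac{\omega g(\omega x)\alpha_2}{xG(\omega x)}=0$ for the dominant point $x_0$ and the value (\ref{eq:ext angle}), with the weighting entering through $c=(\tau_1+\gamma_1 p_1)+\omega^2(\tau_2+\gamma_2 p_2)$.

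The internal-angle exponent is the heart of the argument and is where I expect the real work to lie. First I would express $\beta(k_1,k_2\mid t_1,t_2)$ probabilistically as the probability that a suitably weighted sum of independent half-normal variables lands in the relevant cone. The key observation is that $\Lambda_1(s)=\frac{s^2}{2}+\log(2\Phi(s))$ is exactly the cumulant generating function of a half-normal $HN(0,1)$ variable, since $\stexp e^{s|X|}=2e^{s^2/2}\Phi(s)$ for $X\sim\mathcal N(0,1)$; consequently a $K_1$-coordinate contributes $\Lambda_1(s)$ and a $K_2$-coordinate contributes $\Lambda_1(\omega s)$. Cramér's theorem applied to this two-scale mixture then produces the rate function $\Lambda^*(y)=sy-\frac{\tau_1}{\tau_1+\tau_2}\Lambda_1(s)-\frac{\tau_2}{\tau_1+\tau_2}\Lambda_1(\omega s)$; solving the saddle-point equation $\hat{M}(s)=\frac{\tau_1+\tau_2}{(\tau_1+\tau_2)b+\Omega'}$ for $s^*$ and substituting $y=s^*(b-1/\hat{M}(s^*))$ identifies the dominant deviation, and combining this rate with the Gaussian normalization of the cone gives (\ref{eq:int angle}).

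Finally I would assemble $\psi_{tot}=\psi_{com}-\psi_{int}-\psi_{ext}$ and argue that the claimed exponential decay is equivalent to $\psi_{tot}(\tau_1,\tau_2)<0$ on the closed feasible region. Because increasing $\delta$ only shrinks that region through the constraint $\tau_1+\tau_2>\delta-\gamma_1 p_1-\gamma_2 p_2$, the set of admissible $\delta$ is an up-set $[\delta_c,\infty)$ and $\delta_c$ is its infimum, which is exactly the boxed formula. The main obstacle, beyond the bookkeeping of the two weight classes, is making the internal-angle estimate both rigorous and \emph{uniform} in $(\tau_1,\tau_2)$: the pointwise large-deviation rate must be shown to control the angle up to subexponential factors uniformly over the compact parameter set, so that the finite-term maximum genuinely governs the decay of the full sum in (\ref{eq:sumformula}).
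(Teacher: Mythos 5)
Your proposal follows essentially the same route as the paper: it starts from the union bound of Theorem \ref{thm: bound on probability of error}, extracts per-term exponents via Stirling for the combinatorial factor and Laplace/large-deviation (Cram\'er) analysis for the external and internal angles (the paper's Lemmas \ref{lemma:extasy} and \ref{lemma:intasy}, proved in the appendices), correctly identifies $\Lambda_1(s)=\tfrac{s^2}{2}+\log(2\Phi(s))$ as the half-normal cumulant generating function, and closes by the same monotonicity-in-$\delta$ argument, with the uniformity caveat the paper also handles. The plan is sound and matches the paper's proof in structure and substance.
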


\noindent Theorem \ref{thm:main delta bound} is a powerful result,
since it allows us to find (numerically) the optimal set of weights
for which the fewest possible measurements are needed to recover the
signals almost surely. To this end, for
fixed values of $\gamma_1$, $\gamma_2$, $p_1$ and $p_2$, one should find the ratio
$\frac{w_{K_2}}{w_{K_1}}$ for which the critical threshold
$\delta_c(\gamma_1,\gamma_2,p_1,p_2,\frac{w_{K_2}}{w_{K_1}})$ from
Theorem \ref{thm:main delta bound} is minimum. We discuss this by
some examples in Section \ref{sec:Simulation}. A generalization of theorem \ref{thm:main delta bound} for a nonuniform model with an arbitrary number of classes ($u\geq 2$) will be given in Section \ref{sec:generalizations}. 



As mentioned earlier, using Theorem \ref{thm:main delta bound}, it
is possible to find the optimal ratio
$\frac{w_{K_2}}{w_{K_1}}$. It however requires an exhaustive search over the $\delta_c$ threshold for all possible values of $\omega$. For $\gamma_1=\gamma_2=0.5$, $p_1=0.3$
and $p_2=0.05$, we have numerically computed
$\delta_c(\gamma_1,\gamma_2,p_1,p_2,\frac{w_{K_2}}{w_{K_1}})$ as a
function of $\frac{w_{K_2}}{w_{K_1}}$ and depicted the resulting
curve in Figure \ref{fig:critical delta}. This suggests that
$\frac{w_{K_2}}{w_{K_1}}\approx2.5$ is the optimal ratio that one
can choose. Later we will confirm this using simulations.

\begin{figure}
  \centering
  \subfloat[$p_1=0.4$, $p_2=0.05$.]{\label{fig:critical delta}\includegraphics[width=0.46\textwidth]{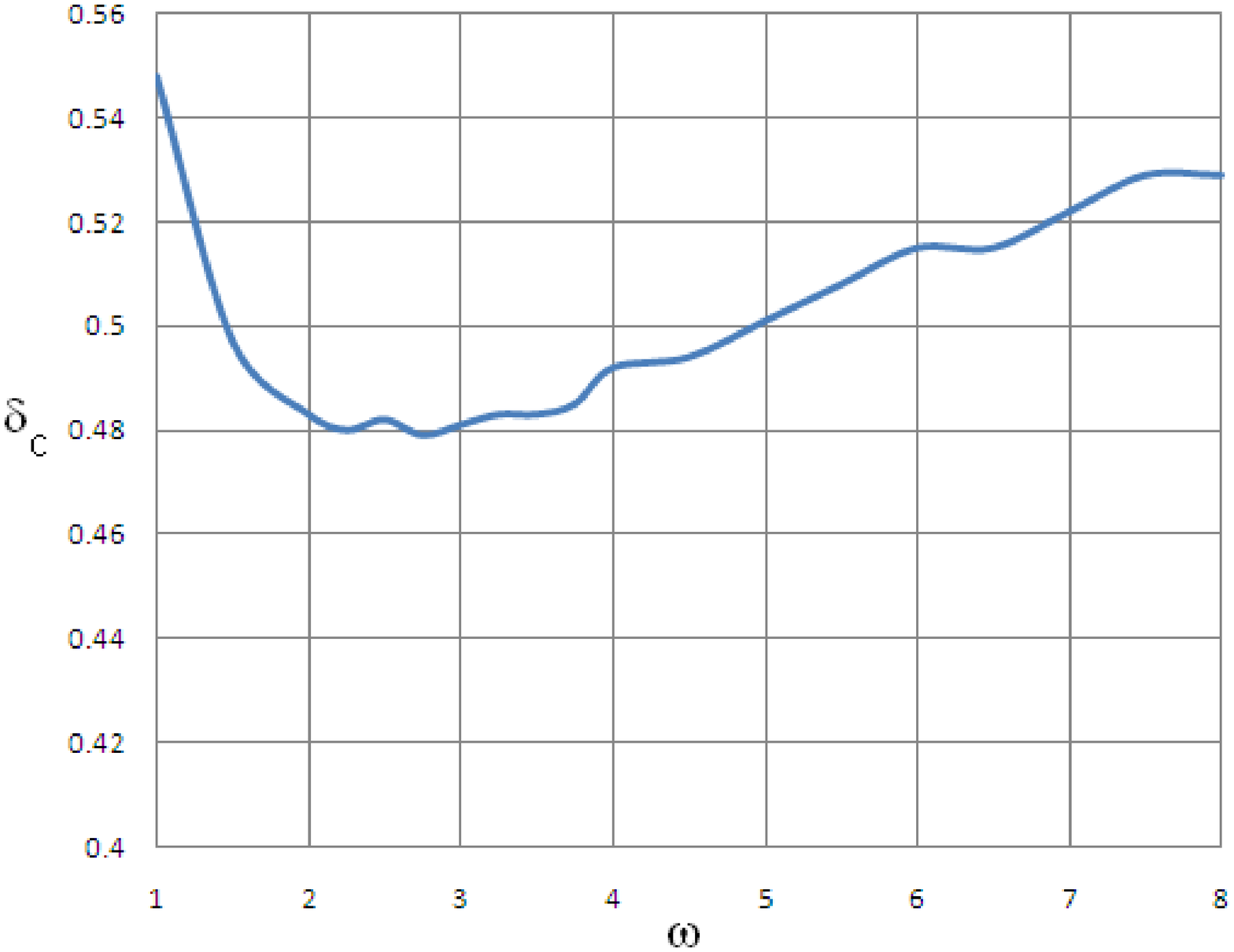}}
  \subfloat[$p_1=0.65$, $p_2=0.1$.]{\label{fig:critical delta other setting}\includegraphics[width=0.46\textwidth]{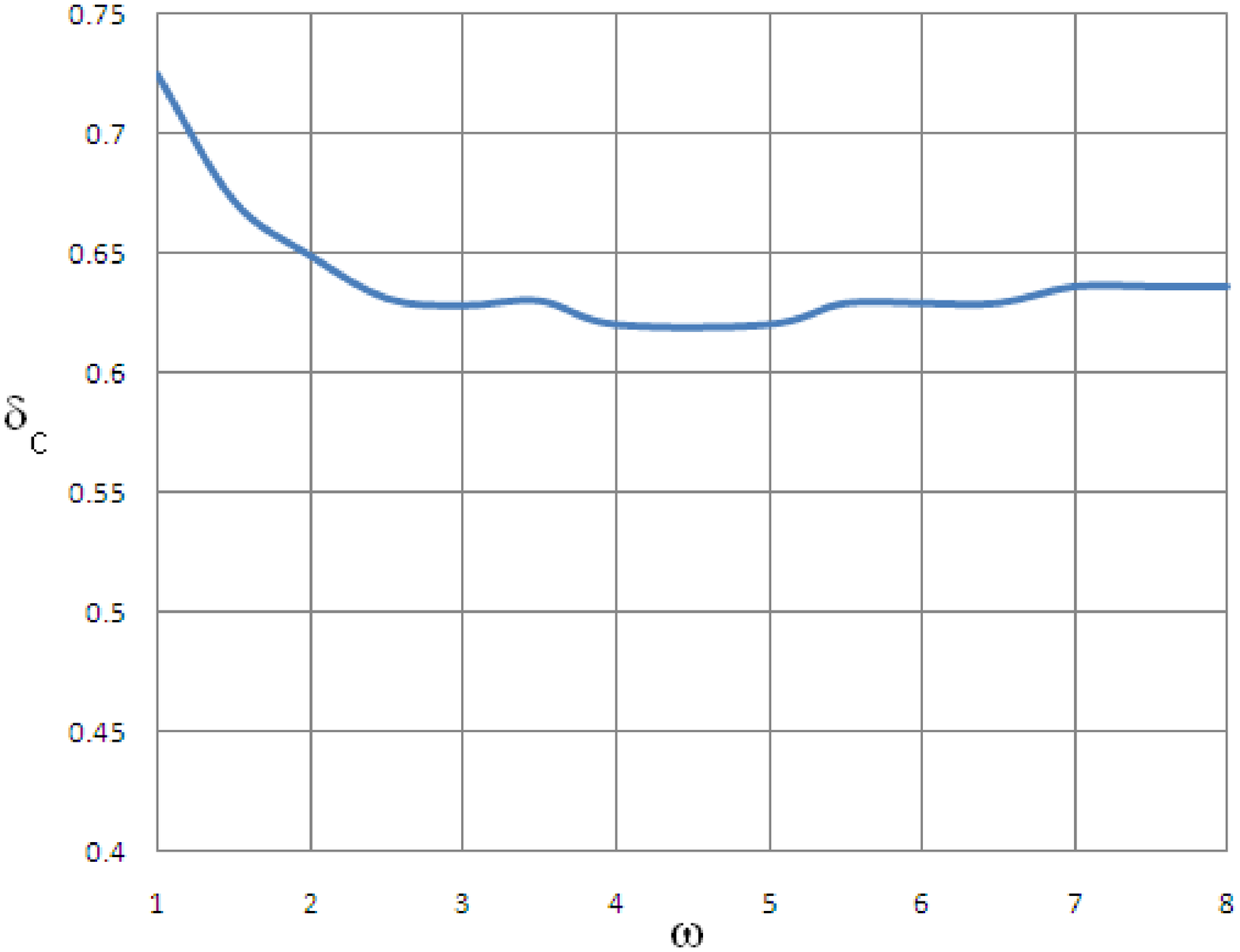}}
  \caption{\scriptsize  $\delta_c$ as a function of $\omega = \frac{w_{K_2}}{w_{K_1}}$ for $\gamma_1=\gamma_2=0.5$.}
  \label{fig:delta_c}
\end{figure}

Note that $\delta_c$ given in Theorem \ref{thm:main delta bound} is a weak bound on the ratio $\delta = \frac{m}{n}$. In other words, it determines the minimum number of measurements so that for a random sparse signal from the nonuniform sparse model and a random support set, the recovery is successful with high probability. It is possible to obtain a strong bound for $\delta$, using a union bound on all possible support sets in the model, and all possible sign patterns of the sparse vector. Similarly, a sectional bound can be defined which accounts for all possible support sets but almost all sign patterns.  Therefore, the expressions for the strong and sectional thresholds, which we denote by $\delta_c^{(S)}$ and $\delta_c^{(T)}$ are very similar to $\delta_c$ in Theorem \ref{thm:main delta bound}, except for a slight modification in the combinatorial exponent term $\psi_{com}$. This will be elaborated in Section \ref{sec:exp cal.}.

It is worthwhile to consider some asymptotic cases of the presented nonuniform model and some of their implications. First of all, when one of the subclasses is empty, e.g. $\gamma_1 = 0$, then the obtained weak and strong thresholds are  equal to the corresponding thresholds of $\ell_1$ minimization for a sparsity fraction $p = p_2$. Furthermore, if the sparsity fractions $p_1$ and $p_2$ over the two classes are equal, and a unitary weight $\omega = 1$ is used, then the weak threshold $\delta_c$ is equal to the threshold of $\ell_1$ minimization for a sparsity fraction $p=p_1=p_2$. In other words:
\beq \delta_c(\gamma_1,\gamma_2,p,p,1) = \delta_c(0,1,0,p,1).\eeq
\noindent This follows immediately from the derivations of the exponents in Theorem \ref{thm:main delta bound}. However, the latter is not necessarily true for the strong threshold. In fact the computation of the strong threshold for regular $\ell_1$ minimization involves a union bound over a larger set of possible supports, and therefore the combinatorial exponent becomes larger. Therefore:
\beq \delta_c^{(S)}(\gamma_1,\gamma_2,p,p,1) \leq \delta_c^{(S)}(0,1,0,p,1).\eeq

A very important asymptotic case is when the unknown signal is fully dense over one of the subclasses, e.g. $p_1 = 1$, which accounts for a \emph{partially known} support. This model is considered in the work of Vaswani et al.~\cite{Vas}, with the motivation that in some applications (or due to previous processing steps), part of the support set can be fully identified\footnote{Thanks to anonymous reviewers for pointing this out to us!}. If the dense subclass is $K_1$ and $K_2 = K_1^c$, then  \cite{Vas} suggests solving the following minimization program:
\beq \min_{\A\x = \y} \|\x_{K_2}\|_1 \label{eq:l1 over K_2}.\eeq
\noindent It is possible to find exact thresholds for the above problem using the weighted $\ell_1$ minimization machinery presented in this paper. First, note that (\ref{eq:l1 over K_2}) is the asymptotic solution of the following weighted $\ell_1$ minimization, when  $\omega\rightarrow\infty$
\beq \min_{\A\x = \y} \|\x_{K_1}\|_1+\omega\|\x_{K_2}\|_1 \label{eq:l1 over K_1 and K_2}.\eeq
Therefore the recovery threshold for (\ref{eq:l1 over K_2}) can be given by $\delta_c(\gamma_1,\gamma_2,1,p_2,\omega)$ for $\omega\rightarrow\infty$. We prove the following theorem about the latter threshold:
\begin{thm}
\label{thm:p1=1 theorem}
If $\omega\rightarrow\infty$, then $\delta_c(\gamma_1,\gamma_2,1,p_2,\omega)\rightarrow \gamma_1 + \gamma_2\delta_c(0,1,0,p_2,1)$. In other words, when a subset of entries of size $\gamma_1n$ are known to be nonzero, the minimum number of measurements that is required for almost surely successful recovery using (\ref{eq:l1 over K_2}) is equal to the total number of measurements needed if we were allowed to independently make measurements from the two parts and recover each using $\ell_1$ minimization.
\end{thm}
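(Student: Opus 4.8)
\emph{Proof plan.} The plan is to exploit the fact that $p_1=1$ collapses the feasible region of Theorem \ref{thm:main delta bound} to a single line, and then to analyze the one remaining exponent as $\omega\to\infty$. Since $p_1=1$ we have $\gamma_1(1-p_1)=0$, so the constraint $0\le\tau_1\le\gamma_1(1-p_1)$ forces $\tau_1=0$, and the constraint $\tau_1+\tau_2>\delta-\gamma_1p_1-\gamma_2p_2$ becomes $\tau_2>\delta-\gamma_1-\gamma_2p_2$. Reading off the definition of $\delta_c$ gives
\[
\delta_c(\gamma_1,\gamma_2,1,p_2,\omega)=\gamma_1+\gamma_2p_2+\tau_2^{\star}(\omega),\qquad
\tau_2^{\star}(\omega):=\sup\{\tau_2\in[0,\gamma_2(1-p_2)]:\psi_{com}(0,\tau_2)-\psi_{int}(0,\tau_2)-\psi_{ext}(0,\tau_2)\ge0\}.
\]
Applying the same reading to the parameters $(0,1,0,p_2,1)$ (where $\gamma_1=0$ again forces $\tau_1=0$) gives $\delta_c(0,1,0,p_2,1)=p_2+\tilde\tau_2^{\star}$, where $\tilde\psi_{com},\tilde\psi_{int},\tilde\psi_{ext}$ denote the exponents of Theorem \ref{thm:main delta bound} at those parameters and $\tilde\tau_2^{\star}$ is the corresponding supremum. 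Thus the whole theorem reduces to the single limit $\tau_2^{\star}(\omega)\to\gamma_2\,\tilde\tau_2^{\star}$.

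The strategy is to introduce the rescaling $\tau_2=\gamma_2\sigma$, $\sigma\in[0,1-p_2]$, and to show that each of the three exponents at $(0,\gamma_2\sigma)$ converges, as $\omega\to\infty$, to $\gamma_2$ times the target exponent at $(0,\sigma)$. The combinatorial term is exact and needs no limit: dropping the vanishing first entropy term and using $\gamma_2(1-p_2)H(\tfrac{\gamma_2\sigma}{\gamma_2(1-p_2)})=\gamma_2(1-p_2)H(\tfrac{\sigma}{1-p_2})$ gives $\psi_{com}(0,\gamma_2\sigma)=\gamma_2\tilde\psi_{com}(0,\sigma)$ directly. For the external angle I would use $\alpha_1=\gamma_1(1-p_1)-\tau_1=0$, which kills one term of the defining equation, and substitute $v=\omega x$. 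Since $c=\gamma_1+\omega^2(\tau_2+\gamma_2p_2)\sim\omega^2\gamma_2(\sigma+p_2)$, multiplying the equation for $x_0$ by $\omega^2$ yields in the limit $2(\sigma+p_2)=g(v)(1-p_2-\sigma)/(vG(v))$, which is exactly the target equation for $\tilde x_0$; hence $\omega x_0\to v^{\star}=\tilde x_0$, and substituting into $\psi_{ext}=cx_0^2-\alpha_2\log G(\omega x_0)$ gives $cx_0^2\to\gamma_2(\sigma+p_2)(v^{\star})^2$ and $\alpha_2\log G(\omega x_0)\to\gamma_2(1-p_2-\sigma)\log G(v^{\star})$, i.e. $\psi_{ext}(0,\gamma_2\sigma)\to\gamma_2\tilde\psi_{ext}(0,\sigma)$.

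The internal angle is the crux and I expect it to be the main obstacle, because here $b=\omega^2$ and $\Omega'=\gamma_1+\omega^2\gamma_2p_2$ both diverge and several $\omega^2$ blow-ups must be shown to cancel. The plan is to substitute $u=\omega s$, under which (with $\tau_1=0$) $\hat{M}(s)=-u\Phi(u)/(\omega^2\varphi(u))$; then the defining equation $\hat{M}(s^*)=\tau_2/(\omega^2(\tau_2+\gamma_2p_2)+\gamma_1)$, multiplied by $\omega^2$, converges to $-u\Phi(u)/\varphi(u)=\sigma/(\sigma+p_2)$, which is precisely the target equation for $\tilde s^*$, so $\omega s^*\to u^{\star}=\tilde s^*$. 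I would then track $y=s^*\!\left(b-1/\hat{M}(s^*)\right)=s^*\cdot\frac{-\omega^2\gamma_2p_2-\gamma_1}{\tau_2}$ and show $y\sim\omega\tilde y$ with $\tilde y=-u^{\star}p_2/\sigma$ the target value, so the three pieces of $\psi_{int}$ stay finite: $s^*y\to-(u^{\star})^2p_2/\sigma$, $\Lambda_1(\omega s^*)\to\Lambda_1(u^{\star})$ is already a function of $u^{\star}$ alone, and crucially the curvature term $\tfrac{\tau_2}{2\Omega'}y^2$ has a finite limit because the $\omega^2$ in $y^2$ is exactly cancelled by the $\omega^2$ in $\Omega'$, producing $(u^{\star})^2p_2/(2\sigma)$, which coincides with the target's $\tfrac{\sigma}{2p_2}\tilde y^2$. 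Collecting these shows the bracket in $\psi_{int}$ converges to the target bracket, while the prefactor $(\tau_1+\tau_2)=\gamma_2\sigma$ supplies the factor $\gamma_2$, giving $\psi_{int}(0,\gamma_2\sigma)\to\gamma_2\tilde\psi_{int}(0,\sigma)$. The delicate points here are showing the implicitly defined roots $x_0(\omega),s^*(\omega)$ are well defined and vary continuously up to the limit, and controlling the $O(1/\omega^2)$ corrections uniformly in $\sigma$ on compact subintervals of $(0,1-p_2)$.

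Finally, since all three exponents tend to $\gamma_2$ times their target counterparts and $\gamma_2>0$, we obtain $\psi_{tot}(0,\gamma_2\sigma)\to\gamma_2\tilde\psi_{tot}(0,\sigma)$, so the sign of $\psi_{tot}$ is preserved in the limit. Using local uniformity of this convergence on compacts, together with the strict sign change of $\tilde\psi_{tot}$ at its largest zero $\tilde\tau_2^{\star}$ (the degenerate boundary $\sigma\to0$, where the $1/(\tau_1+\tau_2)$ factors blow up, being treated separately as a limit), the largest zero-crossing in $\sigma$ converges, giving $\tau_2^{\star}(\omega)\to\gamma_2\tilde\tau_2^{\star}$. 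Hence
\[
\delta_c(\gamma_1,\gamma_2,1,p_2,\omega)=\gamma_1+\gamma_2p_2+\tau_2^{\star}(\omega)\longrightarrow\gamma_1+\gamma_2p_2+\gamma_2\tilde\tau_2^{\star}=\gamma_1+\gamma_2\,\delta_c(0,1,0,p_2,1),
\]
which is the claimed identity. \qed
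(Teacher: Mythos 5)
Your proposal is correct and follows essentially the same route as the paper's Appendix E proof: set $\tau_1=0$ (forced by $p_1=1$), rescale $\tau_2=\gamma_2\sigma$, and show that each of $\psi_{com},\psi_{ext},\psi_{int}$ at $(0,\gamma_2\sigma)$ tends to $\gamma_2$ times its counterpart for the parameters $(0,1,0,p_2,1)$ as $\omega\to\infty$, via the same substitutions $v=\omega x$ and $u=\omega s$ (the paper records these as $\omega x_0'=x_0''$, etc.). Your bookkeeping of the internal-angle term, with $y\sim\omega\tilde y$ cancelling against $\Omega'\sim\omega^2\gamma_2\Omega''$ in the curvature term, is in fact slightly more careful than the paper's terse statement of the corresponding identities.
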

\noindent The proof of this theorem is given in Appendix \ref{App: proof of theorem p1=1}.

A very important factor regarding the performance of any recovery method is its robustness. In other words, it is important to understand how resilient the recovery is in the case of compressible signals or in the presence of noise or model mismatch(i.e. incorrect knowledge of the the sets or sparsity factors). We address this in the following theorem.  
\begin{thm}
\label{thm:robustness}
Let $K_1$ and $K_2$ be two disjoint subsets of $\{1,2,\cdots,n\}$, with $|K_1| = \gamma_1n$,$|K_2| = \gamma_2 n$ and $\gamma_1 + \gamma_2 = 1$. Also suppose that the dimensions of the measurement matrix $\A$ satisfy $\delta =\frac{m}{n}\geq \delta_c^{(S)}(\gamma_1,\gamma_2,p_1,p_2,\omega)$ for positive real numbers $p_1$ and $p_2$ in $[0,1]$ and $\omega >0$.
For  positive $\epsilon_1,\epsilon_2$, assume that $L_1$ and $L_2$ are arbitrary subsets of $K_1$ and $K_2$ with cardinalities $(1-\epsilon_1)\gamma_1p_1n$ and $(1-\epsilon_2)\gamma_2p_2n$ respectively.
With high probability, for every vector $\x_0$, if $\hat{\x}$ is the solution to the following linear program:
\bea \min_{\A\x = \A\x_0} \|\x_{K_1}\|_1 + \omega\|\x_{K_2}\|_1.\eea
\noindent Then the following holds
\beq \|(\x_0-\hat{\x})_{K_1}\|_1 + \omega\|(\x_0-\hat{\x})_{K_2}\|_1 \leq C_{\epsilon_1,\epsilon_2} \left( \|(\x_0)_{\overline{L_1}\cap K_1}\|_1+\omega\|(\x_0)_{\overline{L_2}\cap K_2}\|_1\right),
\label{eq:robustness}
\eeq
\noindent where
\beq \nonumber  C_{\epsilon_1,\epsilon_2}  = \frac{1 + \min(\frac{\epsilon_1p_1}{1-p_1},\frac{\epsilon_2p_2}{1-p_2})}{1 -   \min(\frac{\epsilon_1p_1}{1-p_1},\frac{\epsilon_2p_2}{1-p_2})} .\eeq
\end{thm}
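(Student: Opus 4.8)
The plan is to convert the statement into a property of the null space of $\A$ and then combine a strengthened null space inequality with the optimality of the weighted minimizer. Set $\z=\x_0-\hat{\x}$; since $\A\hat{\x}=\A\x_0$, the error $\z$ lies in the null space of $\A$, and throughout I write $\|\v\|_{\w,1}=\|\v_{K_1}\|_1+\omega\|\v_{K_2}\|_1$, so that the left side of the claim is exactly $\|\z\|_{\w,1}$. Because the hypothesis $\delta\geq\delta_c^{(S)}(\gamma_1,\gamma_2,p_1,p_2,\omega)$ is precisely the \emph{strong} threshold, the uniform version of Theorem \ref{thm:Null space} guarantees that, with overwhelming probability over $\A$, the weighted null space property holds simultaneously for \emph{all} admissible supports: for every null space vector $\z$ and all $S_1\subseteq K_1$, $S_2\subseteq K_2$ with $|S_1|=\gamma_1p_1n$ and $|S_2|=\gamma_2p_2n$, writing $S=S_1\cup S_2$, one has $\|\z_{S}\|_{\w,1}\leq\|\z_{\overline{S}}\|_{\w,1}$. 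This is the only probabilistic input; everything that follows is deterministic and holds for all $\x_0$ at once.

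Next I would upgrade this full-support property into a quantitative margin for the strictly smaller sets $L_1,L_2$, which is where the factor $\eta:=\min\big(\tfrac{\epsilon_1p_1}{1-p_1},\tfrac{\epsilon_2p_2}{1-p_2}\big)$ is produced. Given $\z$, let $T_1$ and $T_2$ collect the $\epsilon_1\gamma_1p_1n$ and $\epsilon_2\gamma_2p_2n$ largest-magnitude coordinates of $\z$ inside $K_1\setminus L_1$ and $K_2\setminus L_2$ respectively, so that $S_1=L_1\cup T_1$ and $S_2=L_2\cup T_2$ have exactly the admissible cardinalities and the null space property applies to $S=S_1\cup S_2$. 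Since every coordinate of $T_j$ dominates in magnitude every coordinate of $(K_j\setminus L_j)\setminus T_j$, comparing class-wise averages (the sizes are $\gamma_j(1-p_j)n$ versus $\epsilon_j\gamma_jp_jn$) gives $\|\z_{(K_j\setminus L_j)\setminus T_j}\|_1\leq\frac{1-p_j}{\epsilon_jp_j}\|\z_{T_j}\|_1$; combining the two classes in the weighted norm yields $\|\z_{\overline{S}}\|_{\w,1}\leq\frac1\eta\|\z_{T}\|_{\w,1}$ with $T=T_1\cup T_2$. Feeding this back into $\|\z_{L}\|_{\w,1}+\|\z_{T}\|_{\w,1}\leq\|\z_{\overline{S}}\|_{\w,1}$ converts the plain null space property into a genuine margin between $\|\z_{L}\|_{\w,1}$ and $\|\z_{\overline{L}}\|_{\w,1}$.

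The optimality of $\hat{\x}$ supplies the complementary estimate. Expanding $\|\hat{\x}\|_{\w,1}\leq\|\x_0\|_{\w,1}$ over $L=L_1\cup L_2$ and $\overline{L}$ with the ordinary and reverse triangle inequalities isolates the tail $R:=\|(\x_0)_{\overline{L_1}\cap K_1}\|_1+\omega\|(\x_0)_{\overline{L_2}\cap K_2}\|_1$ and produces a cone-type inequality controlling $\|\z_{\overline{L}}\|_{\w,1}$ by $\|\z_{L}\|_{\w,1}$ plus a multiple of $R$. Substituting the margin from the previous step and recombining $\|\z\|_{\w,1}=\|\z_{L}\|_{\w,1}+\|\z_{\overline{L}}\|_{\w,1}$ then delivers a bound of the form $\|\z\|_{\w,1}\leq C_{\epsilon_1,\epsilon_2}R$, which is the assertion.

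I expect the endgame -- pinning down the \emph{sharp} constant $C_{\epsilon_1,\epsilon_2}=\frac{1+\eta}{1-\eta}$ rather than a merely qualitatively similar expression -- to be the main obstacle. Two points require care. First, the two-class weighted averaging must be arranged so that the per-class ratios $\frac{1-p_j}{\epsilon_jp_j}$ collapse to the single worst case $1/\eta$: because the $K_1$ and $K_2$ contributions carry different weights one cannot simply add them, and it is the $\min$ (equivalently the $\max$ of the reciprocals) that survives. Second, the null space margin and the optimality/cone inequality must be balanced very tightly, since a naive combination of the two loses constant factors; this balancing, not any individual inequality, is the delicate part. As a sanity check I would examine the exactly-sparse limit $\epsilon_1,\epsilon_2\to0$, where $\eta\to0$, $C_{\epsilon_1,\epsilon_2}\to1$, and $R\to0$ forces $\hat{\x}=\x_0$, in agreement with the exact strong-recovery guarantee.
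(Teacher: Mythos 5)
Your proposal is correct and follows essentially the same route as the paper: the paper likewise augments $L_1,L_2$ by the $\epsilon_j\gamma_jp_jn$ largest-magnitude null-space coordinates in $K_j\cap\overline{L_j}$, uses the same cardinality-ratio averaging to produce $\eta=\min\bigl(\tfrac{\epsilon_1p_1}{1-p_1},\tfrac{\epsilon_2p_2}{1-p_2}\bigr)$, invokes the strong null space property on the augmented support, and closes with the optimality/cone inequality (which the paper packages as a quoted lemma from the Grassmann-manifold reference rather than re-deriving via triangle inequalities as you do). The final algebra you flag as delicate is exactly the step $C'\geq 1+(C'+1)\eta\Rightarrow C'\geq\frac{1+\eta}{1-\eta}$ in the paper, so no new idea is missing.
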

The above theorem has the following implications. First, if $\x_0$ is a (compressible) vector, such that its ``significant'' entries follow a nonuniform sparse model, then the recovery error of the corresponding weighted  $\ell_1$ minimization can be bounded in terms of the $\ell_1$ norm of the ``insignificant'' part of $\x_0$(i.e. the part where a negligible fraction of the energy of the signal is located or most entries have significantly small values, compared to the other part that has an overall large norm). Theorem \ref{thm:robustness} can also be interpreted as the robustness of weighted $\ell_1$ scheme to the model mismatch. If $K_1,K_2,p_1,p_2$ are the estimates of an actual nonuniform decomposition for $\x_0$ (based on which the minimum number of required measurements have been estimated), then the recovery error can be relatively small if the model estimation error is slight. Theorem \ref{thm:robustness} will be proved in Section \ref{sec:robustness}.
\section{Derivation of the main results}
\label{sec:Derivations} In this section we provide detailed proofs
to the claims of Section \ref{sec: Main results}. Let $\x_0$ be a random nonuniformly sparse signal with sparsity
fractions $p_1$ and $p_2$ over the index subsets $K_1$ and $K_2$
respectively (Definition \ref{def:nonuniform sparse}), and let
$|K_1|=n_1$ and $|K_2|=n_2$ . Also let $K$ be the support of $\x$. Let $E$ be the event that $\x$ is
recovered exactly by (\ref{eq:weighted l_1}), and $E^c$ be its
complimentary event. In order to bound the conditional error probability
$\Prob\{E^c\}$ we adopt the idea
of~\cite{StXuHa08} to interpret the failure recovery event ($E^c$)
in terms of the null space of the measurement matrix $\A$. This is stated in Theorem \ref{thm:Null space}, which we prove here.

\begin{proof}[proof of Theorem \ref{thm:Null space}]
Suppose the mentioned null space condition holds and define $\hat{\x} = argmin_{\A\x=\y}{\sum_{i=1}^{n}{w_i |x_i|}}$. Let $\W = diag(w_1,w_2,\cdots,w_n)$. By triangular inequality, we have:

\bea
\nonumber \|\W\hat{\x}\|_1  = \|(\W\hat{\x})_K\|_1 +\|(\W\hat{\x})_{\overline{K}}\|_1 &=& \|(\W\x^*+\W\hat{\x}-\W\x^*)_K\|_1 +\|(\W\hat{\x})_{\overline{K}}\|_1  \\
\nonumber &\geq& \|(\W\x^*)_K\|_1 -\|(\W\hat{\x}-\W\x^*)_K\|_1 +
\|(\W\hat{\x}-\W\x^*)_{\overline{K}}\|_1  \\ \nonumber&\geq&
\|\W\x^*\|_1 \eea Where the last inequality is a result of the fact
that $\hat{\x}-\x^*$ is in the null space of $\A$ and satisfies the
mentioned null space condition. However, by assumption if
$\hat{\x}\neq \x^*$ then $\|\W\hat{\x}\|_1 \leq \|\W\x^*\|_1 $. This
implies that $\hat{\x}=\x^*$. Conversely, suppose there is some
vector $\z$ in $\mathcal{N}(\A)$ such that $\|(\W\z)_K\|_1 >
\|(\W\z)_{\overline{K}}\|_1$. Taking define $\x^* = (\z_K ~0)^T$ and
$\hat{\x} = (0 ~\z_{\overline{K}})^T$ implies that $\A\x^* = \A\hat{\x}$
and $\|\W\x^*\|_1 > \|\W\hat{\x}\|_1$. Therefore, $\x^*$ cannot be
recovered from the weighted $\ell_1$ minimization.
\end{proof}

From this point on, we follow closely the steps towards calculating
the upper bound on the failure probability from \cite{Weiyu GM}, but
with appropriate modifications. The key to our derivations is the
following lemma which will be proven in Appendix \ref{App: proof of
lemma baselemma}.

\begin{lem}
For a certain subset $K \subseteq \{1,2,...,n\}$ with $|K|=k$, the
event that the null-space $\mathcal{N}(A)$ satisfies
\begin{equation}\label{eq:iffcon}
\sum_{i\in K}w_i|z_i|\leq \sum_{i\in\overline{K}}w_i|z_i|, \forall
\z \in \mathcal{N}(A),
\end{equation}
is equivalent to the event that for each $\x$ supported on the set
$K$ (or a subset of $K$)
\begin{equation}
\sum_{i\in K} w_i|x_i+z_i|+ \sum_{i\in\overline{K}}w_i|z_i| \geq
\sum_{i\in K} w_i|x_i| , \forall \z \in \mathcal{N}(A).
\label{eq:xcondition}
\end{equation}
\label{lemma:baselemma}
\end{lem}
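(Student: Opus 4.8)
The plan is to prove the stated equivalence by verifying the two implications separately, each of which reduces to an application of the triangle inequality. No geometry or randomness of $\A$ enters here, since the claim is a purely deterministic statement about the arbitrary fixed subspace $\mathcal{N}(A)$; the quantifier ``$\forall\z$'' simply rides along through both directions.

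For the direction (\ref{eq:iffcon}) $\Rightarrow$ (\ref{eq:xcondition}), I would fix an arbitrary $\x$ supported on $K$ and an arbitrary $\z\in\mathcal{N}(A)$ and apply the reverse triangle inequality termwise on $K$: since $w_i|x_i+z_i|\geq w_i|x_i|-w_i|z_i|$ for each $i\in K$, summing and adding $\sum_{i\in\overline{K}}w_i|z_i|$ to both sides gives
\[
\sum_{i\in K}w_i|x_i+z_i|+\sum_{i\in\overline{K}}w_i|z_i|\ \geq\ \sum_{i\in K}w_i|x_i|-\Big(\sum_{i\in K}w_i|z_i|-\sum_{i\in\overline{K}}w_i|z_i|\Big).
\]
The parenthesized quantity is nonpositive precisely by hypothesis (\ref{eq:iffcon}) applied to this $\z$, so the right-hand side is at least $\sum_{i\in K}w_i|x_i|$, which is exactly (\ref{eq:xcondition}).

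For the converse (\ref{eq:xcondition}) $\Rightarrow$ (\ref{eq:iffcon}), the one idea needed is to feed into (\ref{eq:xcondition}) the test vector that cancels $\z$ on $K$. Given any $\z\in\mathcal{N}(A)$, I would set $\x$ to be supported on $K$ with $x_i=-z_i$ for $i\in K$; this is supported on $K$ (or a subset of it, should some $z_i$ vanish, as the lemma permits). Then $\sum_{i\in K}w_i|x_i+z_i|=0$ while $\sum_{i\in K}w_i|x_i|=\sum_{i\in K}w_i|z_i|$, so (\ref{eq:xcondition}) collapses to $\sum_{i\in\overline{K}}w_i|z_i|\geq\sum_{i\in K}w_i|z_i|$, which is (\ref{eq:iffcon}) for this $\z$. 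Since $\z$ was arbitrary, the null-space condition follows.

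The argument is short, so there is no genuine analytic obstacle; the only point requiring care is the converse, where one must recognize the cancelling choice $\x_K=-\z_K$ and confirm it is admissible under the ``$\x$ supported on $K$ or a subset of $K$'' hypothesis. I would also double-check the bookkeeping on $\overline{K}$, where $x_i=0$ forces $|x_i+z_i|=|z_i|$, to make sure the terms off the support match on both sides.
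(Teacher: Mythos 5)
Your proof is correct and follows essentially the same route as the paper's: the forward direction via the termwise reverse triangle inequality on $K$, and the converse via the cancelling test vector $\x_K=-\z_K$ (the paper phrases the converse contrapositively, but the construction and computation are identical).
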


\subsection{Upper Bound on the Failure Probability}
\label{sec: Bounds on Pe} Knowing Lemma \ref{lemma:baselemma}, we
are now in a position to derive the probability that condition
(\ref{eq:iffcon}) holds for a support set $K$ with $|K|=k$, if we
randomly choose an i.i.d. Gaussian matrix $\A$. In the case of a
random i.i.d. Gaussian matrix, the distribution of null space of
$\A$ is right-rotationally invariant, and sampling from this
distribution is equivalent to uniformly sampling a random
$(n-m)$-dimensional subspace $\mathcal{Z}$ from the Grassmann
manifold $\text{Gr}_{(n-m)}(n)$. The Grassmann manifold
$\text{Gr}_{(n-m)}(n)$ is defined as the set of all
$(n-m)$-dimensional subspaces of $\mathbb{R}^n$. We need to upper bound the complementary probability
$P=\Prob\{E^c\}$, namely the
probability that the (random) support set $K$ of $\x$ (of random sign pattern) fails the null
space condition (\ref{eq:xcondition}). We denote the null space of
$\A$ by $\mathcal{Z}$. Because $\mathcal{Z}$ is a linear space, for
every vector $\z\in \mathcal{Z}$,  $\alpha\z$ is also in
$\mathcal{Z}$ for all $\alpha\in \mathbb{R}$. Therefore, if for a
$\z\in\mathcal{Z}$ and $\x$ condition (\ref{eq:xcondition}) fails,
by a simple re-scaling of the vectors, we may assume without loss of
generality that $\x$ lies on the surface of any convex ball that
surrounds the origin. Therefore we restrict our attention to those
vectors $\x$ from the weighted $\ell_1$-sphere:
\begin{equation*}
\{\x\in \mathbb{R}^n ~|~ \sum_{i=1}^{n}w_i|x_i|=1 \}
\end{equation*}
that are only supported on the set $K$ , or a subset of it. Since we are assuming that the distribution of the nonzero entries of $\x$ is symmetric, we can write:
\vspace*{-7pt}
\begin{equation}
P = P_{K,-} \label{eq:error bound 2}
\end{equation}

\noindent where $P_{K,-}$ is the probability that for a specific
\emph{support set} $K$ , there
exist a $k$-sparse vector $\x$ of a specific \emph{sign pattern}
which  fails the condition (\ref{eq:xcondition}). By symmetry,
without loss of generality, we assume the signs of the elements of
$\x$ to be non-positive. Now we can focus on deriving the probability $P_{K,-}$. Since $\x$
is a non-positive $k$-sparse vector supported on the set $K$  and
can be restricted to the weighted $\ell_1$-sphere $\{\x\in
\mathbb{R}^n ~|~ \sum_{i=1}^{n}w_i|x_i|=1 \}$, $\x$ is also on a
$(k-1)$-dimensional face, denoted by $\mathcal{F}$, of the weighted
$\ell_1$-ball $\mathcal{P}_{\w}$:

\begin{equation}
\mathcal{P}_{\w}=\{\y\in \mathbb{R}^n~|~\sum_{i=1}^{n}w_i|y_i| \leq
1\}
\end{equation}
\begin{figure}[t]
\centering
   \subfloat[]{\label{fig:l_1 ball}\includegraphics[width=0.5\textwidth]{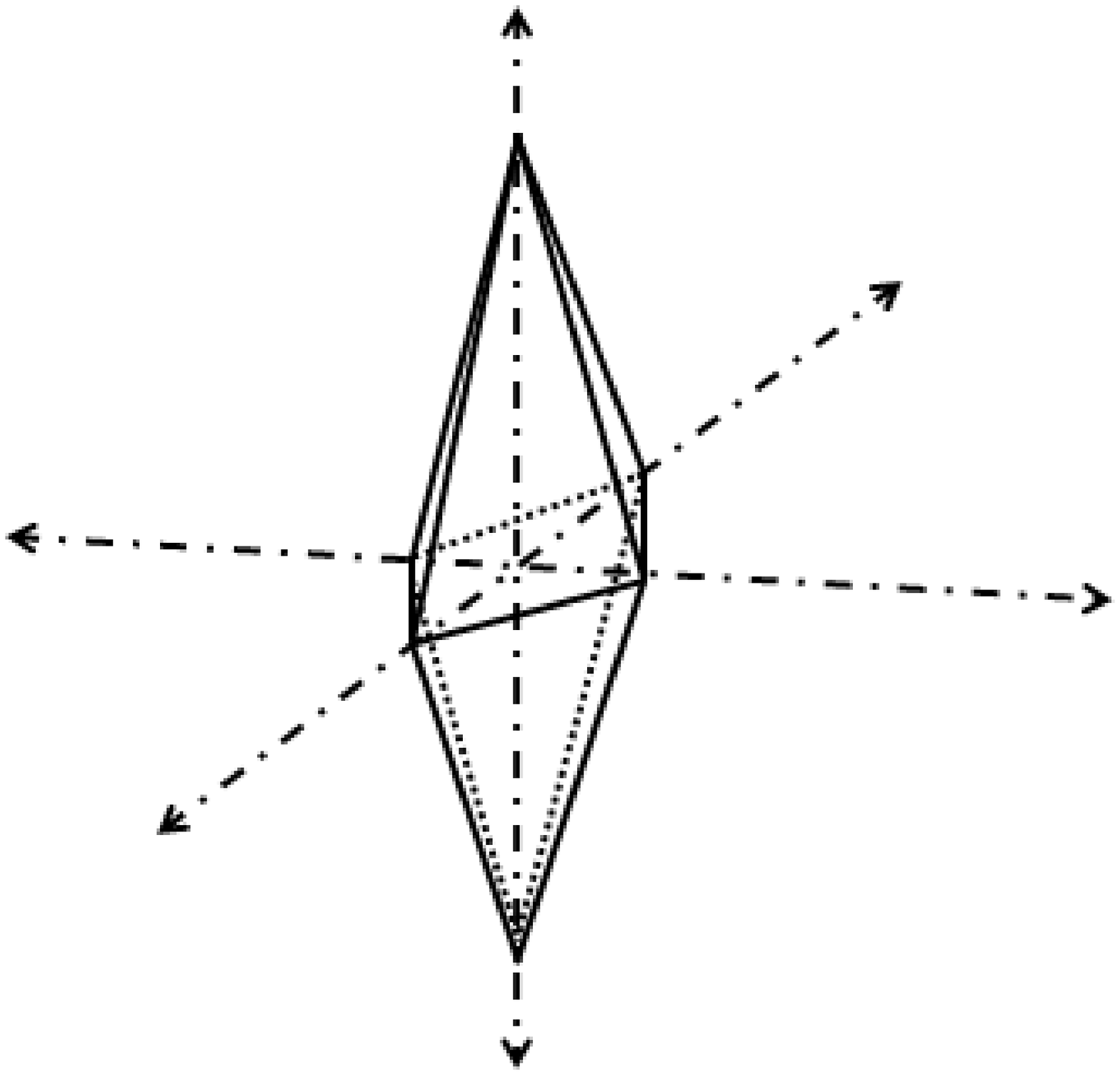}}
   \subfloat[]{\label{fig:l_1 ball
intersects}\includegraphics[width=0.5\textwidth]{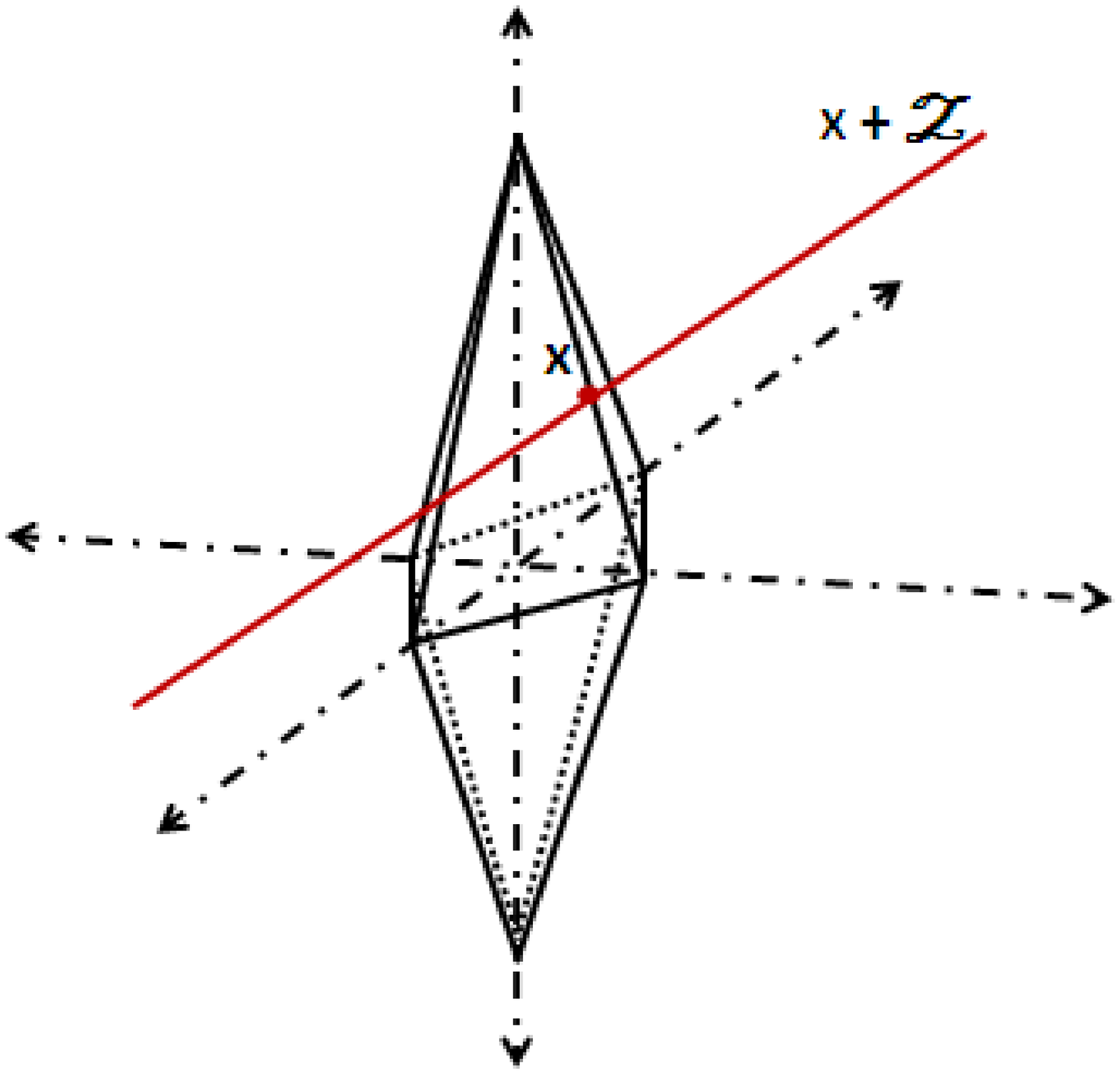}}
\caption{\scriptsize A weighted $\ell_1$-ball, $\mathcal{P}_{\w}$,
in $\mathbb{R}^3$ (a), and  a linear hyperplane $\mathcal{Z}$
passing through a point $\x$ in the interior of a one dimensional
face of $\mathcal{P}_{\w}$ (b).}
\end{figure}
\noindent The subscript $\w$ in $\mathcal{P}_{\w}$ is an indication
of the weight vector $\w=(w_1,w_2,\cdots,w_n)^T$. Figure
\ref{fig:l_1 ball} shows $\mathcal{P}_{\w}$ in $\mathbb{R}^3$ for
some nontrivial weight vector $\w$. Now the probability $P_{K,-}$ is
equal to the probability that there exists an $\x \in \F$, and there
exists a $ \z \in \mathcal{Z}$ ($\z\neq 0$) such that
\begin{equation} \sum_{i\in K}w_i|x_i+z_i|+ \sum_{i\in\bar{K}}w_i|z_i| \leq
\sum_{i\in K}w_i|x_i|=1.
\end{equation}

%
We start by studying the case for a specific point $\x \in \F$ and,
without loss of generality, we assume $\x$ is in the relative
interior of this $(k-1)$-dimensional face $\mathcal{F}$. For this
particular $\x$ on $\mathcal{F}$, the probability, denoted by
$P_{\x}'$, that there exists a $\z \in \mathcal{Z}$ ($\z\neq 0$)
such that
\begin{equation} \sum_{i\in K}w_i|x_i+z_i|+ \sum_{i\in\bar{K}}w_i|z_i| \leq
\sum_{i\in K}w_i|x_i|=1.
\end{equation}
is essentially the probability that a uniformly chosen
$(n-m)$-dimensional subspace $\mathcal{Z}$ shifted by the point
$\x$, namely $(\mathcal{Z}+\x)$, intersects the weighted
$\ell_1$-ball $\mathcal{P}_{\w}$ \emph{non-trivially}, namely, at
some other point besides $\x$ (Figure \ref{fig:l_1 ball
intersects}). From the fact that $\mathcal{Z}$ is a \emph{linear}
subspace, the event that $(\mathcal{Z}+\x)$ intersects
$\mathcal{P}_{\w}$ is equivalent to the event that $\mathcal{Z}$
intersects nontrivially with the cone $\mathcal{C}_{\w}(\x)$
obtained by observing the weighted $\ell_1$-ball $\mathcal{P}_{\w}$
from the point $\x$. (Namely,  $\mathcal{C}_{\w}(\x)$ is conic hull
of the point set $(\mathcal{P}_{\w}-\x)$ and of course
$\mathcal{C}_{\w}(\x)$ has the origin of the coordinate system as
its apex.) However, as noticed in the geometry for convex polytopes
\cite{Grunbaumpaper,Grunbaumbook}, the cones
 $\mathcal{C}_{\w}(\x)$ are identical for any $\x$ lying in the relative
interior of the face $\mathcal{F}$.  This means that the probability
$P_{K,-}$ is equal to $P_{\x}'$, regardless of the fact that $\x$ is
only a single point in the relative interior of the face
$\mathcal{F}$. There are some singularities here because $\x \in \F$
may not be in the relative interior of $\mathcal{F}$, but it turns
out that the $\mathcal{C}_{\w}(\x)$ in this case is only a subset of
the cone we get when $\x$ is in the relative interior of
$\mathcal{F}$. So we do not lose anything if we restrict $\x$ to be
in the relative interior of the face $\mathcal{F}$, namely we have
\begin{equation*}
P_{K,-}=P_{\x}'.
\end{equation*}

\noindent Now we only need to determine $P_{\x}'$. From its
definition, $P_{\x}'$ is exactly the \emph{\textbf{complementary
Grassmann angle}} \cite{Grunbaumpaper} for the face $\mathcal{F}$
with respect to the polytope $\mathcal{P}_{\w}$ under the Grassmann
manifold $\text{Gr}_{(n-m)}(n)$:
a uniformly distributed $(n-m)$-dimensional subspace $\mathcal{Z}$
from the Grassmannian manifold $\text{Gr}_{(n-m)}(n)$ intersecting
non-trivially with the cone $\mathcal{C}_{\w}(\x)$ formed by
observing the weighted $\ell_1$-ball $\mathcal{P}_{\w}$ from the
relative interior point $\x \in \F$.

Building on the works by L.A. Santal\"{o} \cite{santalo} and P.
McMullen \cite{McMullen} in high dimensional geometry and
convex polytopes, the complementary Grassmann angle for the
$(k-1)$-dimensional face $\mathcal{F}$ can be explicitly expressed
as the sum of products of internal angles and external angles
\cite{Grunbaumbook}:
\begin{equation}
2\times \sum_{s \geq 0}\sum_{G \in \Im_{m+1+2s}(\mathcal{P}_{\w})}
{\beta(\F,\G)\zeta(\G,\mathcal{P}_{\w})}, \label{eq:angformula}
\end{equation}
where $s$ is any nonnegative integer, $\G$ is any
$(m+1+2s)$-dimensional face of the $\mathcal{P}_{\w}$
($\Im_{m+1+2s}(\mathcal{P}_{\w})$ is the set of all such faces),
$\beta(\cdot,\cdot)$ stands for the internal angle and
$\zeta(\cdot,\cdot)$ stands for the external angle, and are defined as follows
\cite{Grunbaumbook,McMullen}:
\begin{itemize}
\item An internal angle $\beta(\F_1, \F_2)$ is the fraction of the
hypersphere $S$ covered by the cone obtained by observing the face
$\mathcal{F}_2$ from the face $\mathcal{F}_1$. \footnote{Note the
dimension of the hypersphere $S$ here matches the dimension of the
corresponding cone discussed. Also, the center of the hypersphere is
the apex of the corresponding cone. All these defaults also apply to
the definition of the external angles. } The internal angle
$\beta(\F_1, \F_2)$ is defined to be zero when
$\mathcal{F}_1\nsubseteq \mathcal{F}_2$ and is defined to be one if
$\mathcal{F}_1=\F_2$.

\item An external angle $\zeta(\mathcal{F}_3, \mathcal{F}_4)$ is the fraction of the
hypersphere $S$ covered by the cone of outward normals to the
hyperplanes supporting the face $\mathcal{F}_4$ at the face
$\mathcal{F}_3$. The external angle $\zeta(\mathcal{F}_3,
\mathcal{F}_4)$ is defined to be zero when $\mathcal{F}_3 \nsubseteq
\mathcal{F}_4$ and is defined to be one if
$\mathcal{F}_3=\mathcal{F}_4$.
\end{itemize}

In order to calculate the internal and external angles, it is
important to use the symmetrical properties of the weighted
cross-polytope $\mathcal{P}_{\w}$. First of all, $\mathcal{P}_{\w}$
is nothing but the convex hull of the following set of $2n$ vertices
in $\mathbb{R}^n$
\begin{equation}
\mathcal{P}_{\w}=conv\{\pm \frac{\e_i}{w_i}~|~1\leq i \leq n\}
\end{equation}
\noindent where $\e_i~1\leq i\leq n$ is the standard unit vector in
$\mathbb{R}^n$ with the $i$th entry equal to $1$. Every
$(k-1)$-dimensional face $\mathcal{F}$ of $\mathcal{P}_{\w}$ is
simply the convex hull of $k$ of the linearly independent vertices
of $\mathcal{P}_{\w}$. In that case we  say that $\mathcal{F}$
\emph{is supported} on the index set $K$ of the $k$ indices
corresponding to the nonzero coordinates of the vertices of $\F$  in
$\mathbb{R}^n$. More precisely, if
$\mathcal{F}=conv\{j_1\frac{\e_{i_1}}{w_{i_1}},j_2\frac{\e_{i_2}}{w_{i_2}},\cdots,j_k\frac{\e_{i_k}}{w_{i_k}}\}$
with $j_i\in \{-1,+1\}~\forall 1\leq i\leq k$, then $\F$ is said to
be supported on the set $K=\{i_1,i_2,\cdots,i_k\}$.

\subsection{Special Case of $u=2$}
\label{sec:special case of u=2}
The derivations of the previous section were for a general
weight vector $\w$. We now restrict ourselves to the case of two
classes, i.e. $u=2$, namely $K_1$ and $K_2$ with $|K_1|=n_1$ and
$|K_2|=n_2$.  For this case, we may assume that $w_i's$ have the
following particular form
\begin{equation}
\forall i\in\{1,2,\cdots,n\} ~~~w_i=\left\{\begin{array}{c}w_{K_1}
~if~i\in K_1\\ w_{K_2} ~if~i\in K_2
\end{array}\right.
\label{eq:w_i's 1}
\end{equation}

\begin{proof}[proof of Theorem \ref{thm: bound on probability of error}] The choice of $\w$ as in (\ref{eq:w_i's 1}) results in $\mathcal{P}_{\w}$  having two
classes of geometrically identical vertices, and many of faces of $\mathcal{P}_{\w}$
being isomorphic. In fact, two faces $\mathcal{F}$ and
$\mathcal{F}'$ of $\mathcal{P}_{\w}$ that are respectively supported
on the sets $K$ and $K'$ are geometrically isomorphic \footnote{This
means that there exists a rotation matrix $\Theta\in
\mathbb{R}^{n\times n}$ which is unitary i.e.  $\Theta^T\Theta = I$,
and maps $\F$ isometrically to $\F'$ i.e. $\F'=\Theta \F$.} if
$|K\cap K_1| = |K'\cap K_1|  $ and $|K\cap K_2| = |K'\cap
K_2|$\footnote{Remember that $K_1$ and $K_2$ are the same sets as
defined in the model description of Section \ref{sec:model}.}. In
other words the only thing that distinguishes the morphology of the
faces of $\mathcal{P}_{\w}$ is the proportion of their support sets
that is located in $K_1$ or $K_2$. Therefore for two faces $\F$ and
$\G$ with $\F$ supported on $K$ and $\G$ supported on $L$
($K\subseteq L$), $\beta(\mathcal{F},\mathcal{G})$ is only a
function of the parameters $k_1=|K\cap K_1|$, $k_2=|K\cap K_2|$,
$k_1+t_1=|L\cap K_1|$ and $k_2+t_1=|K\cap K_2|$. So, instead of
$\beta(\F,\G)$ we may write $\beta(k_1,k_2|t_1,t_2)$ to indicate the
internal angle internal angle between a  $(k_1+k_2-1)$-dimensional face $\mathcal{F}$ of $\mathcal{P}_{\w}$
with $k_1$ vertices supported on $K_1$ and $k_2$ vertices supported
on $K_2$, and  a $(k_1+k_2+t_1+t_2-1)$-dimensional face $\G$
that encompasses $\mathcal{F}$ and has $t_1+k_1$ vertices supported
on $K_1$ and the remaining $t_2+k_2$ vertices supported on $K_2$.
Similarly instead of $\zeta(\G,\mathcal{P}_{\w})$ we write
$\zeta(t_1+k_1,t_2+k_2)$ to denote the external angle between a face
$\G$ supported on set $L$ with $|L\cap K_1|=d_1$ and $|L\cap
K_2|=d_2$, and the weighted $\ell_1$-ball $\mathcal{P}_{\w}$. Using
this notation and recalling the formula (\ref{eq:angformula}) we can
write

\begin{eqnarray}
P_{K,-} &=&2\sum_{s \geq 0}\sum_{G \in \Im_{m+1+2s}(\text{SP})}
{\beta(\F,\G)\zeta(\G,\mathcal{P}_{\w})} \nonumber \\
&=& \sum_{{\tiny\begin{array}{c}0\leq t_1 \leq n_1-k_1\\0\leq t_2 \leq n_2-k_2\\ t_1+t_2 > m-k_1-k_2+1\end{array}}} 2^{t_1+t_2+1}{n_1-k_1\choose t_1}{n_2-k_2\choose t_2}\beta(k_1,k_2|t_1,t_2)\zeta(t_1+k_1,t_2+k_2),\nonumber\\
\text{} \label{eq:midsumformula}
\end{eqnarray}

\noindent where in (\ref{eq:midsumformula}) we have used the fact
that the number of faces $\G$ of $\mathcal{P}_{\w}$ of dimension
$k_1+k_2+t_1+t_2-1$ that encompass $\F$ and have $k_1+t_1$ vertices
supported on $K_1$ and its remaining $k_2+t_2$ are vertices
supported on $K_2$ is $2^{t_1+t_2}{n_1-k_1\choose
t_1}{n_2-k_2\choose t_2}$. In fact $\G$ has $k_1+k_2+t_1+t_2$
vertices including the $k_1+k_2$ vertices of $\F$. The remaining
$t_1+t_2$ vertices can each be independently in the positive or
negative orthant, therefore resulting in the term $2^{t_1+t_2}$. The
two other combinatorial terms are the number of ways one can choose
$t_1$ vertices supported on the set $K_1-K$ and $t_2$ vertices
supported on $K_2-K$.  From  (\ref{eq:midsumformula}) and (\ref{eq:error bound 2}) we can conclude theorem \ref{thm: bound on probability of error}.
\end{proof}

In the following sub-sections we will derive the internal
and external angles for a face $\F$, and a face $\G$
containing $\F$, and will provide closed form upper bounds for them.
We combine the terms together and compute the exponents using the
Laplace method in Section \ref{sec:exp cal.}, and derive thresholds
for the negativity of the cumulative exponent.

\subsubsection{Computation of Internal Angle}
\label{sec: Derivation of Inter.}
\begin{thm}\label{thm:main internal angle thm}
Let $Z$ be a random variable defined as
\begin{equation*}
Z=(k_1w_{K_1}^2+k_2w_{K_2}^2)X_1
-w_{K_1}^2\sum_{i=1}^{t_1}X_1'-w_{K_2}^2\sum_{i=1}^{t_2}X_1'',
\end{equation*}
\noindent where $X_1\sim
N(0,\frac{1}{2(k_1w_{K_1}^2+k_2w_{K_2}^2)})$ is a normal distributed
random variable, $X_i'\sim HN(0,\frac{1}{2w_{K_1}^2})~1\leq i\leq
t_1$ and $X_i''\sim HN(0,\frac{1}{2w_{K_2}^2})~1\leq i\leq t_2$ are
independent (from each other and from $X_1$) half normal distributed
random variables. Let $p_Z(\cdot)$ denote the probability
distribution function of $Z$ and $c_0=\frac{\sqrt{\pi}}{2^{l-k}}
\left((k_1+t_1)w_{K_1}^2+(k_2+t_2)w_{K_2}^2\right)^{1/2}$. Then
\begin{equation}
\label{eq:internal angle formula} \beta(k_1,k_2|t_1,t_2) = c_0p_Z(0)
\end{equation}
\end{thm}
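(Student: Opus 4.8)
The plan is to compute $\beta(k_1,k_2\mid t_1,t_2)$ directly from its definition as a normalized solid angle, express that solid angle as a Gaussian integral over the relevant cone, and then recognize the integral as $c_0 p_Z(0)$. By the symmetry reduction already established I may fix $\mathcal{F}$ to be the face spanned by the $k=k_1+k_2$ vertices $-\e_i/w_i$, $i\in K$, and $\mathcal{G}$ the face obtained by adjoining $t=t_1+t_2$ further vertices $\pm\e_j/w_j$ ($t_1$ of weight $w_{K_1}$, $t_2$ of weight $w_{K_2}$). The internal angle is the fraction of the unit sphere of $W:=\mathrm{lin}(\mathcal{G})\cap\mathrm{lin}(\mathcal{F})^{\perp}$ (which has dimension $l-k=t$) covered by the cone $C$ of directions pointing from a relative-interior point of $\mathcal{F}$ into $\mathcal{G}$; this cone is independent of the chosen interior point. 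Because $\w$ takes only two values, $\mathrm{lin}(\mathcal{F})^{\perp}$ meets the $K$-coordinates only along the single direction $\sum_{i\in K}w_i\e_i$, so an element of $W$ is parametrized by its coordinates $\dot\beta=(\dot\beta_j)_{j\in T}$ on $T$ together with that one extra direction, and I will make this parametrization explicit.

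I then use the identity $\beta(\mathcal{F},\mathcal{G})=\pi^{-t/2}\int_C e^{-\|y\|^2}\,dy$, valid since a cone's solid-angle fraction equals its Gaussian-measure fraction by radial symmetry. Describing $C$ by the barycentric velocities $\dot\beta\ge 0$ of the adjoined vertices (the cone is exactly $\{\dot\beta_j\ge 0\ \forall j\}$ inside $W$) and computing the Euclidean metric induced on $W$ from $\mathbb{R}^n$, I expect $\|y\|^2=\dot\beta^{\top}G\,\dot\beta$ with $G=\tfrac1S\mathbf{1}\mathbf{1}^{\top}+\mathrm{diag}(w_j^{-2})$, where $S:=k_1w_{K_1}^2+k_2w_{K_2}^2$; the rank-one term is the contribution of the $\sum_{i\in K}w_i\e_i$ direction and the diagonal term the contribution on $T$. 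Changing variables to $\dot\beta$ (Jacobian $\sqrt{\det G}$) gives $\beta(\mathcal{F},\mathcal{G})=\pi^{-t/2}\sqrt{\det G}\int_{\mathbb{R}_{\ge 0}^{t}}e^{-\dot\beta^{\top}G\dot\beta}\,d\dot\beta$.

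Finally I match this to $c_0p_Z(0)$. Conditioning on the half-normal part $V:=w_{K_1}^2\sum_i X_i'+w_{K_2}^2\sum_i X_i''\ge 0$ gives $Z\mid V\sim\mathcal{N}(-V,S/2)$, hence $p_Z(0)=(\pi S)^{-1/2}\,\stexp[e^{-V^2/S}]$. Writing this expectation against the half-normal densities $\tfrac{2w_{K_1}}{\sqrt\pi}e^{-w_{K_1}^2x^2}$ and $\tfrac{2w_{K_2}}{\sqrt\pi}e^{-w_{K_2}^2x^2}$, and applying the scaling $\dot\beta_j=w_j^2x_j$ (which maps $\mathbb{R}_{\ge0}^t$ to itself and turns the exponent $V^2/S+w_{K_1}^2\sum x'^2+w_{K_2}^2\sum x''^2$ into exactly $\dot\beta^{\top}G\dot\beta$), converts the expectation into $\int_{\mathbb{R}_{\ge0}^t}e^{-\dot\beta^{\top}G\dot\beta}d\dot\beta$ up to explicit powers of $w_{K_1},w_{K_2},2,\pi$. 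Collecting constants reduces $c_0p_Z(0)$ to $\pi^{-t/2}\,w_{K_1}^{-t_1}w_{K_2}^{-t_2}\sqrt{S_G/S}\int e^{-\dot\beta^{\top}G\dot\beta}$ with $S_G:=(k_1+t_1)w_{K_1}^2+(k_2+t_2)w_{K_2}^2$, so the theorem follows once I verify $\det G=w_{K_1}^{-2t_1}w_{K_2}^{-2t_2}(S_G/S)$. This last identity is immediate from the matrix-determinant lemma applied to the rank-one update: $\det G=\det(\mathrm{diag}(w_j^{-2}))\bigl(1+\tfrac1S\sum_j w_j^2\bigr)=w_{K_1}^{-2t_1}w_{K_2}^{-2t_2}\cdot\tfrac{S+t_1w_{K_1}^2+t_2w_{K_2}^2}{S}$.

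The delicate step is the geometric one: correctly identifying $W$ and the cone $C$ and computing the induced Gram matrix $G$ and its Jacobian, because the solid angle must be measured in the $t$-dimensional space orthogonal to $\mathcal{F}$ inside $\mathrm{lin}(\mathcal{G})$, and the two distinct weights make the projection onto $\mathrm{lin}(\mathcal{F})^{\perp}$ nontrivial. Once $G$ and the Gaussian cone representation are secured, the remaining pieces—the conditioning formula for $p_Z(0)$, the scaling change of variables, and the determinant lemma—are routine.
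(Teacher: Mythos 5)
Your proposal is correct and follows essentially the same route as the paper: both express $\beta(\F,\G)$ as a Gaussian integral over the cone $\mathcal{C}_{\F^{\perp},\G}$ spanned (with one linear relation) by the direction $\sum_{i\in K}w_i\e_i$ and the adjoined vertex directions, and then identify that integral with $c_0\,p_Z(0)$. The only differences are bookkeeping: you eliminate the linear constraint and compute $p_Z(0)$ by conditioning on the half-normal sum plus the matrix-determinant lemma, whereas the paper integrates over the constraint surface with a Jacobian obtained from the eigenvalues of $\M^T\M$ and reads off $p_Z(0)$ as the resulting convolution at zero; your determinant identity $\det G = w_{K_1}^{-2t_1}w_{K_2}^{-2t_2}S_G/S$ and conditional-density formula both check out.
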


We now prove this Theorem. Suppose that
$\F$ is a  $(k-1)$-dimensional face of the
weighted $\ell_1$-ball
\begin{equation*}
\mathcal{P}_{\w}=\{\y\in \mathbb{R}^n~|~ \sum_{i=1}^n w_i|y_i|\leq
1\}
\end{equation*}
supported on the subset $K$ with $|K|=k= k_1 + k_2$. Let $\G$ be a
$l-1$ dimensional face of $\mathcal{P}_{\w}$ supported on the set
$L$ with $\F\subset \G$. Also, let $|L\cap K_1|=k_1+t_1$ and $|L\cap
K_2|=k_2+t_2$.

We first state the following lemma the proof of which is given in
Appendix \ref{App:proof of lemma Con_F,G}.
\begin{lem}
\label{lemma:Con_F,G} Let $\F$ be a $(k-1)$-dimensional face of
$\mathcal{P}_{\w}$ supported on the set $K = \{1,2,\cdots,k\}$, and
$\G$ be a $l-1$-dimensional face of $\mathcal{P}_{\w}$ that contains
$\F$ and is supported on the set $L=\{1,2,\cdots,l\}. $Let
$\mathcal{C}_{\F^{\perp},\G}$ be the positive cone of all the
vectors $\x\in \mathbb{R}^{n}$ that take the form:
\begin{equation}
-\sum_{i=1}^{k}{b_i   e_i}+\sum_{i=k+1}^{l}{b_i   e_i},
\label{eq:vform}
\end{equation}
where $b_i, 1 \leq i \leq l$ are nonnegative real numbers  and
\begin{eqnarray*}
\sum_{i=1}^{k}{w_i b_i}=\sum_{i=k+1}^{l}{w_i b_{i}},   ~~~
\frac{b_1}{w_1}=\frac{b_2}{w_2}=\cdots=\frac{b_k}{w_k}.
\end{eqnarray*}

Then
\begin{eqnarray}
\int_{\mathcal{C}_{\F^{\perp},\G}}{e^{-\|\x\|^2}}\,d\x= \beta(\F,\G)
\cdot \pi^{(l-k)/2}. \label{eq:inaxchdirect}
\end{eqnarray}
\end{lem}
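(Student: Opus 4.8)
The plan is to realize $\mathcal{C}_{\F^{\perp},\G}$ as exactly the cone whose normalized solid angle equals the internal angle $\beta(\F,\G)$, and then to invoke the elementary identity relating a Gaussian integral over a cone to the fraction of the sphere the cone subtends. Recall that for any cone $C$ contained in a $d$-dimensional subspace, integrating in polar coordinates separates the radial and angular parts and gives $\int_C e^{-\|x\|^2}\,dx = \theta(C)\,\pi^{d/2}$, where $\theta(C)$ is the fraction of the unit sphere of that subspace covered by $C$ and $\pi^{d/2}=\int_{\mathbb{R}^d}e^{-\|x\|^2}\,dx$. Thus once the cone is correctly identified and its dimension shown to be $l-k$, the claim (with $d=l-k$) follows immediately; the real content is the geometric identification.

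First I would fix a relative interior point of $\F$, for which the barycenter $x_\F=-\frac{1}{k}\sum_{i=1}^{k}\frac{e_i}{w_i}$ is convenient (taking, by the symmetry noted earlier, the vertices of $\F$ in the negative orthant and the extra vertices of $\G$ in the positive orthant). By definition $\beta(\F,\G)$ is the fraction of the sphere covered by the cone obtained by observing $\G$ from $\F$; concretely this is the image of $\mathrm{cone}(\G-x_\F)$ under the orthogonal projection $P$ onto $(\mathrm{lin}(\F-x_\F))^{\perp}$, which lives in the $(l-1)-(k-1)=l-k$ dimensional space $\mathrm{lin}(\G-x_\F)\cap(\mathrm{lin}(\F-x_\F))^{\perp}$.

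The key computation I would carry out is to project a generic point $y-x_\F$, with $y=\sum_{i=1}^{k}\mu_i(-\frac{e_i}{w_i})+\sum_{i=k+1}^{l}\nu_i\frac{e_i}{w_i}$ a convex combination of the vertices of $\G$. Using $\mathrm{lin}(\F-x_\F)=\{\sum_{i=1}^{k}d_i e_i:\sum_i d_i w_i=0\}$, its orthogonal complement inside the first block is the line spanned by $(w_1,\dots,w_k)$, so the projection leaves the second block untouched and collapses the first block onto that line. A short calculation, using $\sum_{i=1}^{k}\mu_i+\sum_{i=k+1}^{l}\nu_i=1$, shows the projected first-block component equals $\frac{\sum_{i>k}\nu_i}{\sum_{j\le k}w_j^{2}}\sum_{i=1}^{k}w_i e_i$. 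Reading off the coordinates, the projected cone is precisely the set of vectors whose first block is a nonnegative multiple of $\sum_{i=1}^{k}w_i e_i$ and whose second block is nonnegative, subject to $\sum_{i\le k}w_i b_i=\sum_{i>k}w_i b_i$: proportionality of the first block to $(w_1,\dots,w_k)$ is exactly the condition $\frac{b_1}{w_1}=\cdots=\frac{b_k}{w_k}$, and the balance is exactly $\sum_{i\le k}w_i b_i=\sum_{i>k}w_i b_i$. Hence the projected cone coincides with $\mathcal{C}_{\F^{\perp},\G}$ up to flipping the signs of the first $k$ coordinates, an orthogonal reflection that preserves the solid angle, giving $\theta(\mathcal{C}_{\F^{\perp},\G})=\beta(\F,\G)$ on an $(l-k)$-dimensional span.

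The main obstacle I anticipate is the bookkeeping in this identification: verifying that orthogonality to $\mathrm{lin}(\F-x_\F)$ is exactly the equal-ratio condition, that the weight-balance condition is what remains after projection, and that the dimension is exactly $l-k$ so the normalization $\pi^{(l-k)/2}$ is correct. One must also be careful that the reflection relating $\mathcal{C}_{\F^{\perp},\G}$ to the directly projected cone is orthogonal, so that it alters neither the solid angle nor the value of the rotation-invariant Gaussian integral. Modulo this care, combining the identification with the Gaussian/solid-angle identity yields $\int_{\mathcal{C}_{\F^{\perp},\G}}e^{-\|x\|^2}\,dx=\beta(\F,\G)\,\pi^{(l-k)/2}$.
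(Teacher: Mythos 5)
Your proposal is correct and follows essentially the same route as the paper: identify $\mathcal{C}_{\F^{\perp},\G}$ as the $(l-k)$-dimensional cone representing the internal angle after quotienting out the lineality space $\mathrm{lin}(\F-x_{\F})$, and then apply the polar-coordinates identity equating the Gaussian integral over a $d$-dimensional cone with its normalized solid angle times $\pi^{d/2}$. The only cosmetic difference is that you realize the reduced cone as the orthogonal projection of $\mathrm{cone}(\G-x_{\F})$ onto $\mathrm{lin}(\F-x_{\F})^{\perp}$, whereas the paper takes the intersection with that complement; since the cone contains the linear space $\mathrm{lin}(\F-x_{\F})$, these coincide.
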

From (\ref{eq:inaxchdirect}) we can find the expression for the
internal angle. Define $U\subseteq \mathbb{R}^{l-k+1}$ as the set of
all nonnegative vectors $(x_1,x_2,\cdots,x_{l-k+1} )$ satisfying:
\begin{center}
$(\sum_{r=1}^{k}w^2_r)x_1 = \sum_{r=k+1}^{l}w^2_r x_{r-k+1}$
\end{center}
and define $f(x_1,~\cdots,~ x_{l-k+1}):U \rightarrow
\mathcal{C}_{F^{\perp},G}$ to be the following linear and bijective
map:
\begin{align*}
f(x_1,\cdots,x_{l-k+1})=-\sum_{r=1}^{k} x_1w_r \e_r+\sum_{r=k+1}^{l}
x_{r-k+1}w_r \e_r.
\end{align*}
Then

\begin{align}
\int_{ \mathcal{C}_{\F^{\perp},\G} }{e^{-\|{\x}'\|^2}}\,d{\x}' &=
\int_{U}{e^{-\|f(\x)\|^2}}\,df(\x)  = |J(\M)|
\int_{\Gamma}{e^{-\|f(x)\|^2}}\,d{x_2}
\cdots dx_{l-k+1} \nonumber \\
&=|J(\M)|
\int_{\Gamma}e^{-(\sum_{r=1}^{k}{w_r^2})x_1^2-\sum_{r=k+1}^{l}{w_r^2x_{r-k+1}^2}
} \,d{x_2} \cdots dx_{l-k+1} \label{eq:vsigular}
\end{align}

\noindent $\Gamma$ is the region described by
\begin{equation}
(\sum_{r=1}^{k}w^2_r)x_1 = \sum_{r=k+1}^{l}w^2_r x_{r-k+1} , ~x_r
\geq 0~~2 \leq r \leq l-k+1
\end{equation}
where $|J(\M)|$ is due to the change of integral variables and is
essentially the determinant of the Jacobian of the variable
transform given by the $l\times (l-k)$ matrix $\M$ below:

\begin{align}
\M_{i,j}= \left\{\begin{array}{cc} -\frac{1}{\Omega}w_iw_{k+j}^2
&{\scriptsize 1\leq i\leq k ,1\leq j\leq l-k} \\ w_{i} &
{\scriptsize k+1\leq i\leq l , j=i-k} \\ 0 & \text{Otherwise}
\end{array}\right.
\end{align}

\noindent where $\Omega = \sum_{r=1}^{k}w_r^2$. The Jacobian is
obtained by  $|J(\M)|=\det(\M^T \M)^{1/2}$. By finding the
eigenvalues of $\M^T \M$ we obtain:
\begin{equation}
|J(\M)| = w_{K_1}^{t_1}w_{K_2}^{t_2}(\frac{\Omega + t_1w_{K_1}^2 +
t_2w_{K_2}^2}{\Omega})^{1/2}
\end{equation}

\noindent Now we define a random variable
\begin{equation*}
Z = (\sum_{r=1}^{k}w^2_r)X_1 - \sum_{r=k+1}^{l}w^2_r X_{r-k+1}
\end{equation*}
where $X_1, X_2, \cdots, X_{l-k+1}$ are independent random
variables, with $X_r \sim HN(0,\frac{1}{2w_{r+k-1}^2})$, $2 \leq r
\leq (l-k+1)$, are half-normal distributed random variables and
$X_1\sim N(0, \frac{1}{2\sum_{r=1}^{k}{w_r^2}})$ is a normal
distributed random variable. Then by inspection, (\ref{eq:vsigular})
is equal to $c_1 p_{Z}(0)$, where $p_{Z}(\cdot)$ is the probability
density function for the random variable $Z$ and $p_{Z}(0)$ is the
probability density function $p_{Z}(\cdot)$ evaluated at the point
$Z=0$, and {\small
\begin{align}
c_1&=\frac{\sqrt{\pi}^{l-k+1}}{2^{l-k}}\prod_{q=k+1}^{l}\frac{1}{w_q}(\sum_{r=1}^{k}w_r^2)^{1/2}~|J(A)|
= \frac{\sqrt{\pi}^{l-k+1}}{2^{l-k}}
((k_1+t_1)w_{K_1}^2+(k_2+t_2)w_{K_2}^2)^{1/2} \label{eq:C}
\end{align}
} \normalsize Combining these results, the proof of Theorem
\ref{thm:main internal angle thm} is complete.

\subsubsection{Computation of External Angle}
\label{sec: Derivation of Extr.}

\begin{thm}\label{thm:main external angle thm}
The external angle $\zeta(\G,\mathcal{P}_{\w})=\zeta(d_1,d_2)$
between the face $\G$ and $\mathcal{P}_{\w}$ , where $G$ is
supported on the set $L$ with $|L\cap K_1|=d_1$ and $|L\cap
K_2|=d_2$ is given by:
\begin{equation}
\zeta(d_1,d_2)=\pi^{-\frac{n-l+1}{2}}2^{n-l}\int_{0}^{\infty}e^{-x^2}
\left (\int_{0}^{ \frac{w_{K_1}x}{ \xi(d_1,d_2)} } e^{-y^2}\,dy
\right)^{r_1} \left (\int_{0}^{ \frac{w_{K_2} x}{\xi(d_1,d_2)} }
e^{-y^2}\,dy \right)^{r_2}\,dx, \label{eq:external angle formula}
\end{equation}
\noindent Where $\xi^2(d_1,d_2)=\sum_{i\in
L}w_i^2=d_1w_{K_1}^2+d_2w_{K_2}^2$, $r_1=n_1-d_1$ and $r_2=n_2-d_2$.
\end{thm}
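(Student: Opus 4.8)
The plan is to compute the external angle directly from its definition as the normalized solid angle of the cone of outward normals to $\mathcal{P}_{\w}$ at $\G$, and to evaluate that solid angle via a Gaussian integral. First I would fix, without loss of generality, a representative face $\G$ whose $l = d_1+d_2$ vertices lie in the positive orthant, i.e. $\G = \mathrm{conv}\{\e_i/w_i : i\in L\}$. The facets of $\mathcal{P}_{\w}$ are the hyperplanes $\sum_i \epsilon_i w_i y_i = 1$ with $\epsilon_i\in\{\pm1\}$, whose outward normal is the vector $(\epsilon_1 w_1,\dots,\epsilon_n w_n)$; such a facet contains $\G$ precisely when $\epsilon_i = +1$ for every $i\in L$, while the signs $\epsilon_k$ for $k\notin L$ remain free. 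Taking the conic hull of these $2^{n-l}$ normals, I would show that the normal cone $N(\G)$ consists of the vectors $\v$ with $v_i = M w_i$ for $i\in L$ and $|v_k|\le M w_k$ for $k\notin L$, where $M\ge 0$. The key structural observation is that the weighting forces the $L$-coordinates onto the single ray spanned by $(w_i)_{i\in L}$, so that $N(\G)$ is $(n-l+1)$-dimensional with span $\mathrm{span}((w_i)_{i\in L})\oplus \mathbb{R}^{L^c}$.

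Next I would invoke the standard identity that the fraction of the unit sphere in a $d$-dimensional subspace covered by a cone $C$ equals $\pi^{-d/2}\int_C e^{-\|\v\|^2}\,d\v$, since $\int_{\mathbb{R}^d} e^{-\|\v\|^2}\,d\v = \pi^{d/2}$. Here $d = n-l+1$, so $\zeta(d_1,d_2) = \pi^{-(n-l+1)/2}\int_{N(\G)} e^{-\|\v\|^2}\,d\v$. I would then parametrize $N(\G)$ by $(M,(v_k)_{k\notin L})$ through the map $\v = M(w_i)_{i\in L}\oplus (v_k)_{k\notin L}$. The columns of its Jacobian are mutually orthogonal, the one in the $M$-direction having norm $\xi = (\sum_{i\in L} w_i^2)^{1/2}$ and the rest being unit vectors, so the volume element is $\xi\,dM\prod_{k\notin L} dv_k$, while $\|\v\|^2 = M^2\xi^2 + \sum_{k\notin L} v_k^2$.

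Finally, substituting $x = M\xi$ (so that $dM = dx/\xi$ cancels the Jacobian factor $\xi$) reduces the integral to $\int_0^\infty e^{-x^2}\prod_{k\notin L}\bigl(\int_{-w_k x/\xi}^{w_k x/\xi} e^{-v_k^2}\,dv_k\bigr)\,dx$. Each inner integral is even, contributing a factor $2\int_0^{w_k x/\xi} e^{-y^2}\,dy$; grouping the indices $k\notin L$ according to whether $k\in K_1$ (there are $r_1 = n_1-d_1$ of them, with $w_k=w_{K_1}$) or $k\in K_2$ (there are $r_2 = n_2-d_2$ of them, with $w_k=w_{K_2}$), and using $r_1+r_2 = n-l$ to collect the factor $2^{n-l}$, yields exactly (\ref{eq:external angle formula}).

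The main obstacle is the correct identification of the normal cone — in particular recognizing that the weighting collapses the $L$-coordinates onto a one-dimensional ray rather than a full orthant, which is precisely what fixes the ambient dimension at $n-l+1$ and produces the single Gaussian factor $e^{-x^2}$ together with the weight $\xi$ in the scaled bounds $w_{K_j}x/\xi$. Care is also needed to confirm that the dimension of the sphere $S$ in the definition matches $\dim N(\G)$, so that the normalizing power of $\pi$ is exactly $-(n-l+1)/2$; the remaining manipulations are routine Gaussian computations.
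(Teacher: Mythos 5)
Your proposal is correct and follows essentially the same route as the paper: identify the outward normal cone at $\G$ as the positive hull of the $2^{n-l}$ facet normals, express the external angle as a normalized Gaussian integral over that $(n-l+1)$-dimensional cone, parametrize it by the ray coordinate along $(w_i)_{i\in L}$ together with the free coordinates outside $L$ (Jacobian $\xi$), and rescale $x=M\xi$. Your explicit description of the cone as $\{v_i=Mw_i\ (i\in L),\ |v_k|\le Mw_k\ (k\notin L)\}$ is exactly the paper's region $U$ under its change-of-variables map, so the two arguments coincide step for step.
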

\begin{proof}
Without loss of generality, assume that the support set of $\G$ is
given by $L=\{n-l+1,n-l+2,\cdots,n\}$ and consider the
$(l-1)$-dimensional face
\begin{equation*}
\G=\text{conv}\{\frac{\e_{n-l+1}}{w_{n-l+1}} , ...
,\frac{\e_{n-k}}{w_{n-k}} , \frac{\e_{n-k+1}}{w_{n-k+1}},
...,\frac{\e_{n}}{w_{n}} \}
\end{equation*}
of the weighted $\ell_1$-ball $\mathcal{P}$. The $2^{n-l}$ outward
normal vectors of the supporting hyperplanes of the facets
containing $\G$ are given by
\begin{equation*}
\{\sum_{i=1}^{n-l} j_{i}w_i \e_i+\sum_{p=n-l+1}^{n} w_i \e_i,
j_{i}\in\{-1,1\}\}.
\end{equation*}

\noindent Then the outward normal cone $\mathcal{C}_{\G,
\mathcal{P}_{\w}}^\perp$ at the face $\G$ is the positive hull of
these normal vectors. Thus
\begin{align}
\int_{\mathcal{C}_{\G,
\mathcal{P}_{\w}}^\perp}{e^{-\|x\|^2}}\,dx&=\zeta(\G,\mathcal{P}_{\w})
V_{n-l}(S^{n-l}) \int_{0}^{\infty}{e^{-r^2}}r^{n-l}\,dx \nonumber\\
&=\zeta(\G,\mathcal{P}_{\w}).\pi^{(n-l+1)/2},
 \label{eq:axch}
\end{align}
where $V_{n-l}(S^{n-l})$ is the spherical volume of the
$(n-l)$-dimensional unit sphere $S^{n-l}$. 
Now define $U$ to be the set
\begin{equation*}
\{x \in R^{n-l+1} \mid x_{n-l+1} \geq 0, | x_i/w_i| \leq x_{n-l+1},
1\leq i \leq (n-l)\}
\end{equation*}
and define $f(x_1,~\cdots,~ x_{n-l+1}):U \rightarrow
\mathcal{C}_{\G, \mathcal{P}_{\w}}^\perp$ to be the linear and
bijective map
\begin{eqnarray*}
f(x_1,~\cdots,~ x_{n-l+1})&=&\sum_{i=1}^{n-l} x_i
\e_i+\sum_{i=n-l+1}^{n} w_i x_{n-l+1}\e_i .
\end{eqnarray*}
Then \small{
\begin{align}
&\int_{\mathcal{C}_{\G, \mathcal{P}_{\w}}^\perp}{e^{-\|x'\|^2}}\,dx' = |J(\M)|\int_{U}{e^{-\|f(x)\|^2}}\,dx \nonumber\\
&=|J(\M)|\int_{0}^{\infty}
\int_{-w_{K_1}x_{n-l+1}}^{w_{K_1}x_{n-l+1}}
\cdots\int_{-w_{n-l}x_{n-l+1}}^{w_{n-l}x_{n-l+1}} e^{-x_1^2-\cdots
-x_{n-l}^2-(\sum_{i=n-l+1}^{n}w_i^2)x_{n-l+1}^{2} } \,dx_{1} \cdots \,dx_{n-l+1}\nonumber\\
&=|J(\M)|\int_{0}^{\infty} e^{-(\sum_{i=n-l+1}^{n}w_i^2)x^2}
\left(\int_{-w_{K_1} x}^{w_{K_1} x} e^{-y^2}\,dy \right )^{n_1-d_1}
\left(\int_{-w_{K_2}x}^{w_{K_2}x} e^{-y^2}\,dy \right )^{n_2-d_2}
\,dx \label{eq: aux external angle formula}
\end{align}
} \normalsize $\M$ is the $n\times (n-l+1)$ change of variable
matrix given by $ \M=\left(\begin{array}{cc} \bf I_{n-l} & 0 \\ 0 &
\w_{L}
\end{array}\right)$,
\noindent where $\w_{L}=(w_{n-l+1},w_{n-l+2},\cdots,w_n)^T$.
Therefore
${J(\M)=\det(\M^T\M)}^{1/2}=(d_1w_{K_1}^2+d_2w_{K_2}^2)^{1/2}$.
Replacing this and a change of variable for $x$ (replace $\xi x$
with $x$) in (\ref{eq: aux external angle formula}), along with
(\ref{eq:axch}), complete the proof.
\end{proof}

\subsubsection{Derivation of the Critical Weak and Strong $\delta_c$ Threshold }
\label{sec:exp cal.}

So far we  have proved that the probability of the failure event is bounded
by the formula \small{
\begin{align}
&\Prob\{E^c\} \leq \sum_{{\tiny\begin{array}{c}0\leq t_1 \leq n_1-k_1\\0\leq t_2 \leq
n_2-k_2\\ t_1+t_2 > m-k_1-k_2+1\end{array}}} 2^{t_1+t_2+1}{n_1-
k_1\choose t_1}{n_2-k_2\choose t_2}
\beta(k_1,k_2|t_1,t_2)\zeta(t_1+k_1,t_2+k_2),
\label{eq:sumformula2}
\end{align}
} 
\normalsize where we gave expressions for  $\beta(t_1,t_2|k_1,k_2)$ and $\zeta(t_1+k_1,t_2,k_2)$  in Sections \ref{sec: Derivation of
Inter.} and \ref{sec: Derivation of Extr.}, respectively. Now our objective is to show that the R.H.S
of (\ref{eq:sumformula2}) will exponentially decay to $0$ as
$n\rightarrow\infty$, provided that $\delta=\frac{m}{n}$ is greater
than a critical threshold $\delta_c$, which we are trying to
evaluate. To do this end we bound the exponents of the
combinatorial, internal angle and external angle terms in
(\ref{eq:sumformula2}), and find the values of $\delta$ for which
the net exponent is strictly negative. The maximum such $\delta$
will give us $\delta_c$. Starting with the combinatorial term, we
use Stirling approximating on the binomial coefficients to achieve
the following as $n\rightarrow \infty$ and $\epsilon\rightarrow 0$
\small{
\begin{equation}
\frac{1}{n}\log\left(2^{t_1+t_2+1}{n_1- k_1\choose
t_1}{n_2-k_2\choose t_2}\right)\rightarrow
\left(\gamma_1(1-p_1)H(\frac{\tau_1}{\gamma_1(1-p_1)})+\gamma_2(1-p_2)H(\frac{\tau_2}{\gamma_2(1-p_2)})+\tau_1+\tau_2\right)\log{2},
\label{eq:comb exp.}
\end{equation}
} \normalsize

\noindent where $\tau_1=\frac{t_1}{n}$ and $\tau_2=\frac{t_2}{n}$.

\noindent For the external angle and internal angle terms we prove
the following two  exponents

\begin{enumerate}
\item Let  $g(x)=\frac{2}{\sqrt{\pi}}e^{-{x^2}}$, $G(x)=\frac{2}{\sqrt{\pi}}\int_{0}^{x}e^{-y^2}dy$. Also define $c=(\tau_1+\gamma_1p_1)+\omega^2(\tau_2+\gamma_2p_2)$, $\alpha_1=\gamma_1(1-p_1)-\tau_1$ and $\alpha_2=\gamma_2(1-p_2)-\tau_2$. Let $x_0$ be the unique solution to $x$ of the following:
\begin{equation*}
2c-\frac{g(x)\alpha_1}{xG(x)}-\frac{\omega g(\omega
x)\alpha_2}{xG(\omega x)}=0
\end{equation*}
Define
\begin{equation}
\psi_{ext}(\tau_1,\tau_2) =
cx_0^2-\alpha_1\log{G(x_0)}-\alpha_2\log{G(\omega x_0)}
\label{eq:ext exp.}
\end{equation}

\item Let $b=\frac{\tau_1+\omega ^2\tau_2}{\tau_1+\tau_2}$ and  $\varphi(.)$ and $\Phi(.)$ be the standard Gaussian pdf and cdf functions respectively. Also let $\Omega'=\gamma_1p_1+\omega ^2\gamma_2p_2$ and $Q(s)=\frac{\tau_1\varphi(s)}{(\tau_1+\tau_2)\Phi(s)}+\frac{\omega \tau_2\varphi(\omega s)}{(\tau_1+\tau_2)\Phi(\omega s)}$. Define the function $\hat{M}(s)=-\frac{s}{Q(s)}$ and solve for $s$ in $\hat{M}(s)=\frac{\tau_1+\tau_2}{(\tau_1+\tau_2)b+\Omega'}$. Let the unique solution be $s^*$ and set $y=s^*(b-\frac{1}{\hat{M}(s^*)})$. Compute the rate function $\Lambda^*(y)= sy -\frac{\tau_1}{\tau_1+\tau_2}\Lambda_1(s)-\frac{\tau_2}{\tau_1+\tau_2}\Lambda_1(\omega s)$ at the point $s=s^*$, where $\Lambda_1(s) = \frac{s^2}{2} +\log(2\Phi(s))$.
The internal angle exponent is then given by:
\begin{equation}
\psi_{int}(\tau_1,\tau_2) =
(\Lambda^*(y)+\frac{\tau_1+\tau_2}{2\Omega'}y^2+\log2)(\tau_1+\tau_2).
\label{eq:int exp.}
\end{equation}

\end{enumerate}
\noindent We now state the following lemmas, which are proved in  Appendix \ref{App: proof of lemma
ext. exp.} and \ref{App: proof of lemma ext. int.}.

\begin{lem}
Fix $\delta$, $\epsilon > 0$. There exists a finite number
$n_0(\delta,\epsilon)$ such that
  \begin{equation}
  \frac{1}{n} \log(\zeta(t_1+k_1,t_2+k_2)) <- \psi_{ext}(\tau_1,
\tau_2)+\epsilon,
  \end{equation}
uniformly in $0\leq t_1 \leq n_1-k_1$, $0\leq t_2 \leq n_2-k_2$ and
$t_1+t_2 \geq m-k_1-k_2+1$, $n \geq n_0(\delta, \epsilon)$.
\label{lemma:extasy}
\end{lem}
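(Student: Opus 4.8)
The plan is to start from the closed-form expression for the external angle in Theorem~\ref{thm:main external angle thm} and extract its exponential growth rate by Laplace's method. Setting $d_1=t_1+k_1$, $d_2=t_2+k_2$, $r_1=n_1-d_1$, $r_2=n_2-d_2$ and $l=d_1+d_2$, I note that $d_1/n=\tau_1+\gamma_1p_1$ and $d_2/n=\tau_2+\gamma_2p_2$, so that $r_1/n=\alpha_1$ and $r_2/n=\alpha_2$ exactly, and $\xi^2(d_1,d_2)=d_1w_{K_1}^2+d_2w_{K_2}^2=n\,w_{K_1}^2 c$ with $c=(\tau_1+\gamma_1p_1)+\omega^2(\tau_2+\gamma_2p_2)$. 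First I would rescale the integration variable in (\ref{eq:external angle formula}) by $x=\sqrt{n}\,v$. Under this substitution the inner integration limits become $w_{K_1}x/\xi=v/\sqrt c$ and $w_{K_2}x/\xi=\omega v/\sqrt c$, and each inner integral equals $\tfrac{\sqrt\pi}{2}G(\cdot)$ since $\int_0^z e^{-y^2}\,dy=\tfrac{\sqrt\pi}{2}G(z)$. Collecting the factor $(\sqrt\pi/2)^{r_1+r_2}=(\sqrt\pi/2)^{n-l}$ with the prefactor $\pi^{-(n-l+1)/2}2^{n-l}$ causes everything to telescope to $\pi^{-1/2}$, leaving
\begin{equation*}
\zeta(d_1,d_2)=\frac{\sqrt n}{\sqrt\pi}\int_0^\infty e^{\,n\phi(v)}\,dv,\qquad \phi(v)=-v^2+\alpha_1\log G\!\left(\tfrac{v}{\sqrt c}\right)+\alpha_2\log G\!\left(\tfrac{\omega v}{\sqrt c}\right).
\end{equation*}
Thus the whole statement reduces to showing $\tfrac1n\log\zeta\to\max_{v\ge0}\phi(v)=-\psi_{ext}(\tau_1,\tau_2)$, uniformly and with explicit rate.

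Next I would identify the maximizer. Since $g=G'$ and $\tfrac{d}{dx}\tfrac{g(x)}{G(x)}=-\tfrac{g(x)\,(2xG(x)+g(x))}{G(x)^2}<0$ for $x>0$, the map $x\mapsto\log G(x)$ is concave, hence $\phi$ is a sum of concave functions and has a unique stationary point, which is its global maximum (this also explains the uniqueness of $x_0$ asserted where $\psi_{ext}$ is defined). Writing $\phi'(v)=-2v+\tfrac{\alpha_1}{\sqrt c}\tfrac{g(v/\sqrt c)}{G(v/\sqrt c)}+\tfrac{\alpha_2\omega}{\sqrt c}\tfrac{g(\omega v/\sqrt c)}{G(\omega v/\sqrt c)}$ and inserting the trial point $v^\star=\sqrt c\,x_0$, the condition $\phi'(v^\star)=0$ reduces, after multiplying by $\sqrt c$ and dividing by $x_0$, exactly to the defining equation $2c-\tfrac{g(x_0)\alpha_1}{x_0G(x_0)}-\tfrac{\omega g(\omega x_0)\alpha_2}{x_0G(\omega x_0)}=0$ of $x_0$. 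Evaluating $\phi$ there gives $\phi(v^\star)=-c\,x_0^2+\alpha_1\log G(x_0)+\alpha_2\log G(\omega x_0)=-\psi_{ext}(\tau_1,\tau_2)$, as required.

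To turn this into the claimed one-sided, uniform bound with an explicit $n_0$, I would avoid the full second-order Laplace expansion and use only crude envelopes. Because $\log G\le0$ we have $\phi(v)\le-v^2$, so for any fixed $A$ the tail obeys $\int_A^\infty e^{n\phi(v)}\,dv\le\int_A^\infty e^{-nv^2}\,dv\le \tfrac{e^{-nA^2}}{2nA}$, while on $[0,A]$ the bound $\phi(v)\le\phi(v^\star)=-\psi_{ext}$ gives $\int_0^A e^{n\phi(v)}\,dv\le A\,e^{-n\psi_{ext}}$. Taking $A$ large enough (uniformly, since compactness of $(\tau_1,\tau_2)\in[0,\gamma_1(1-p_1)]\times[0,\gamma_2(1-p_2)]$ keeps $c$, $\alpha_1$, $\alpha_2$, and hence $v^\star$ and $\psi_{ext}$ bounded) that the first term dominates, I obtain $\tfrac1n\log\zeta\le-\psi_{ext}+\tfrac1n\log\!\big(2A\sqrt{n}/\sqrt\pi\big)$, and the correction $\tfrac1n\log(2A\sqrt n/\sqrt\pi)=O\!\big(\tfrac{\log n}{n}\big)$ falls below $\epsilon$ once $n\ge n_0(\delta,\epsilon)$, with $n_0$ depending only on the uniform constants. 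The main obstacle is precisely this uniformity in $(t_1,t_2)$: I must verify that $v^\star$ and $\psi_{ext}$ stay bounded (using $\phi(0)=-\infty$ when $\alpha_1$ or $\alpha_2$ is positive, together with $\phi(v)\le-v^2$, to trap $v^\star$ in a fixed compact interval and $c$ bounded away from zero) and that $\psi_{ext}$ varies continuously up to the degenerate corners $\alpha_1=0$ or $\alpha_2=0$, so that a single $A$ and single $n_0$ work across the whole index range; the integer rounding of $t_i=\tau_i n$ contributes only lower-order terms that are absorbed into $\epsilon$.
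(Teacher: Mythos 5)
Your proposal is correct, and it reduces the problem to exactly the same object as the paper: both start from the integral formula of Theorem \ref{thm:main external angle thm}, change variables so the integrand is $e^{n\phi(v)}$ with your $\phi(v)$ equal to the paper's $-\psi_{t_1',t_2'}(x)$ under $v=\sqrt{c}\,x$, and identify the maximizer with the $x_0$ of (\ref{eq:externalweiyu}) via the identical stationarity computation, so that $\max_v\phi(v)=-\psi_{ext}(\tau_1,\tau_2)$. Where you genuinely diverge is in how the Laplace asymptotics are justified. The paper invokes a Laplace-method lemma (Lemma \ref{lemma:extlap}) with a uniform $o(1)$ remainder, delegating the regularity conditions to \cite{D}, and then has to patch the degenerate region $\tau_1+\tau_2+\rho_1+\rho_2\rightarrow 1$ separately using the trivial bound $\zeta\leq 1$. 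Your envelope argument --- $\phi\leq\max\phi$ on $[0,A]$ and $\phi\leq -v^2$ on $[A,\infty)$ --- needs only that $\psi_{ext}$ is uniformly bounded over the feasible $(\tau_1,\tau_2)$ (which your evaluation of $\phi$ at a fixed point supplies, since $c\leq 1+\omega^2$), remains valid verbatim at the degenerate corners, and yields an explicit $O(\log n/n)$ correction and hence an explicit $n_0(\delta,\epsilon)$. The price is that you obtain only a one-sided bound, but that is all Lemma \ref{lemma:extasy} asserts; the concavity observation that $g/G$ is decreasing, which gives uniqueness of $x_0$, is a point the paper asserts without proof. The only item to make fully explicit is the corner $\alpha_1=\alpha_2=0$, where the stationarity equation has no positive root and $\psi_{ext}$ must be read as $0$ (the maximum of $-v^2$ at $v=0$); your closing remark covers this, and the claimed inequality is then immediate from $\zeta=1/2$ there.
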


\begin{lem}
Fix $\delta$, $\epsilon > 0$. There exists a finite number
$n_1(\delta,\epsilon)$ such that
  \begin{equation}
  \frac{1}{n} \log(\beta(t_1,t_2|k_1,k_2)) <- \psi_{int}(\tau_1,
\tau_2)+\epsilon,
  \end{equation}
uniformly in $0\leq t_1 \leq n_1-k_1$, $0\leq t_2 \leq n_2-k_2$ and
$t_1+t_2 \geq m-k_1-k_2+1$, $n \geq n_1(\delta, \epsilon)$.
\label{lemma:intasy}
\end{lem}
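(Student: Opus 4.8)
The plan is to begin from the exact representation of the internal angle provided by Theorem \ref{thm:main internal angle thm}, namely $\beta(k_1,k_2|t_1,t_2)=c_0\,p_Z(0)$, and to reduce the problem to estimating the exponential growth rate of the density $p_Z(0)$. First I would dispose of the prefactor: since $l-k=t_1+t_2$, the factor $2^{-(l-k)}$ in $c_0$ gives $\frac{1}{n}\log c_0\to-(\tau_1+\tau_2)\log2$, the remaining algebraic factor $\left((k_1+t_1)w_{K_1}^2+(k_2+t_2)w_{K_2}^2\right)^{1/2}$ being only polynomial in $n$ and hence contributing nothing to the exponent. This already accounts for the isolated $\log2$ inside $\psi_{int}$. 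It then remains to prove $\frac{1}{n}\log p_Z(0)\le-(\tau_1+\tau_2)\left(\Lambda^*(y)+\frac{\tau_1+\tau_2}{2\Omega'}y^2\right)+\epsilon$, uniformly over the admissible range of $(t_1,t_2)$.

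To expose the structure I would condition on the half-normal part of $Z$. Writing $A=k_1w_{K_1}^2+k_2w_{K_2}^2=w_{K_1}^2\Omega'n$ and $T=\sum_{i=1}^{t_1}U_i+\omega\sum_{i=1}^{t_2}V_i$ with $U_i,V_i$ i.i.d.\ standard half-normals (so that the subtracted half-normal sum in $Z$ equals $\frac{w_{K_1}}{\sqrt2}T$, using $w_{K_2}=\omega w_{K_1}$), the term $AX_1$ is centered Gaussian of variance $A/2$ and independent of $T$. Conditioning on $T$ and evaluating the Gaussian density at $0$ then gives the exact identity $p_Z(0)=\frac{1}{\sqrt{\pi A}}\,\mathbb{E}\!\left[e^{-T^2/(2\Omega'n)}\right]$. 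Thus the whole computation collapses to the Laplace transform of a quadratic functional of the weighted half-normal sum $T$, exactly the kind of object controlled by large deviations of sums of half-normal variables.

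The exponent is then identified by a Cram\'er/saddle-point analysis. Writing $y=T/(t_1+t_2)$, the sum obeys a large deviation principle at speed $t_1+t_2=(\tau_1+\tau_2)n$ whose rate function is the Legendre transform of the normalized log-moment generating function $\frac{\tau_1}{\tau_1+\tau_2}\Lambda_1(s)+\frac{\tau_2}{\tau_1+\tau_2}\Lambda_1(\omega s)$, where $\Lambda_1(s)=\frac{s^2}{2}+\log(2\Phi(s))$ is precisely the log-MGF of a standard half-normal; this is the stated $\Lambda^*$. Applying Varadhan's lemma to the continuous penalty $-\frac{(\tau_1+\tau_2)^2y^2}{2\Omega'}$ yields $\frac{1}{n}\log\mathbb{E}[e^{-T^2/(2\Omega'n)}]\to\sup_y\{-\frac{(\tau_1+\tau_2)^2y^2}{2\Omega'}-(\tau_1+\tau_2)\Lambda^*(y)\}$. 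Differentiating, the optimizer satisfies $s=(\Lambda^*)'(y)=-\frac{(\tau_1+\tau_2)y}{\Omega'}$ together with $y=bs+Q(s)$; eliminating $y$ reproduces exactly the stated equation $\hat{M}(s^*)=\frac{\tau_1+\tau_2}{(\tau_1+\tau_2)b+\Omega'}$ and the relation $y=s^*(b-1/\hat{M}(s^*))$, and substituting back gives $\psi_{int}$ as in (\ref{eq:int exp.}). Since only a one-sided bound is needed, I would replace the two-sided Varadhan statement by a Chernoff estimate: linearizing $e^{-T^2/(2\Omega'n)}$ through the Gaussian (Hubbard--Stratonovich) identity and optimizing over the real tilt bounds $\frac{1}{n}\log p_Z(0)$ from above by the saddle value, with constants depending only on $\delta$.

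The main obstacle I anticipate is uniformity. The lemma requires a single $n_1(\delta,\epsilon)$ serving all admissible $(t_1,t_2)$, so I must show the saddle-point estimate is uniform over the compact region $0\le\tau_1\le\gamma_1(1-p_1)$, $0\le\tau_2\le\gamma_2(1-p_2)$, $\tau_1+\tau_2\ge\delta-\gamma_1p_1-\gamma_2p_2$. This entails controlling the density/saddle error terms uniformly and handling the degenerate corners $\tau_1\to0$, $\tau_2\to0$, or $\tau_1+\tau_2\to0$, where the tilt $s^*$ and the rate function can blow up and the saddle becomes delicate. Continuity of $\Lambda^*$, $\Omega'$, and of the solution $s^*$ of $\hat{M}(s^*)=\frac{\tau_1+\tau_2}{(\tau_1+\tau_2)b+\Omega'}$ on the closed parameter region, together with the finiteness of $\Lambda_1$ on all of $\mathbb{R}$, should make the convergence uniform away from the degenerate boundary, while the boundary cases can be absorbed into the $\epsilon$ slack through a separate crude estimate.
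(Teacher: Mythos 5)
Your proposal is correct and follows essentially the same route as the paper: both start from $\beta(k_1,k_2|t_1,t_2)=c_0\,p_Z(0)$, absorb the prefactor $2^{-(t_1+t_2)}$ into the $\log 2$ term of $\psi_{int}$, express $p_Z(0)$ as a Gaussian-smoothed expectation over the weighted half-normal sum, and evaluate its exponent by a Cram\'er/Laplace saddle-point analysis with rate function $\Lambda^*$ built from $\Lambda_1(s)=\frac{s^2}{2}+\log(2\Phi(s))$, arriving at the identical stationarity conditions $s=-\frac{m'}{\Omega}y$ and $y=bs+Q(s)$. Your explicit conditioning identity $p_Z(0)=\frac{1}{\sqrt{\pi A}}\,\mathbb{E}\bigl[e^{-T^2/(2\Omega'n)}\bigr]$ is a slightly cleaner way of writing the bound the paper imports from Donoho, and your treatment of uniformity matches the paper's.
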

 
Combining Lemmas \ref{lemma:extasy} and \ref{lemma:intasy}, (\ref{eq:comb exp.}), and
the bound in (\ref{eq:sumformula2}) we readily get the critical
bound for $\delta_c$ as in the Theorem \ref{thm:main delta bound}.

Derivation of the strong and sectional threshold can be easily done using union bounds to account for all possible support sets and/or all sign patterns. The corresponding upper bound on the failure probability for the strong threshold is given by:
\beq {n_1\choose k_1}{n_2\choose k_2}2^{k}P_{K,-}\eeq
\noindent It then follows that the strong threshold of $\delta$ is given by $\delta_c$ in Theorem \ref{thm:main delta bound}, except that the combinatorial exponent $\psi_{com}(\cdot,\cdot)$ must be corrected by adding a term
\beq (\gamma_1p_1 + \gamma_2p_2 + \gamma_1H(p_1) +\gamma_2H(p_2))\log2, \eeq

\noindent to the RHS of (\ref{eq:comb angle}). Similarly, for the sectional threshold, which deals with all possible support sets but almost all sign patterns, the modification in the combinatorial exponent term is as follows:

\beq (\gamma_1H(p_1) +\gamma_2H(p_2))\log2. \eeq

\subsection{Generalizations}
\label{sec:generalizations}
Except for some subtlety in the large deviation calculations, the generalization of the results of the previous section to an arbitrary $u\geq 2$ classes of entries is straightforward. Consider a nonuniform sparse model with $u$ classes $K_1,\cdots,K_u$ where $|K_i|=n_i = \gamma_in$, and the sparsity fraction over the set $K_i$ is $p_i$, and a recovery scheme based on weighted $\ell_1$ minimization with weight $\omega_i$ for the set $K_i$. The bound in (\ref{eq:angformula}) is general and can always be used. Due to isomorphism, the internal and external angles $\beta(\F,\G)$ and $\zeta(\G,\mathcal{P}_{\w})$ only depend on the number of vertices that the supports of $\F$ and $\G$ have in common with each $K_i$. Therefore, a generalization to (\ref{eq:sumformula}) would be:
\small{
\begin{align}
\Prob\{E^c\} &\leq 2\sum_{{\tiny\begin{array}{c}0 \leq \t  \leq \n-\k \\ \mathbf{1}^T \t > m-\mathbf{1}^T \k+1\end{array}}} \Pi_{1\leq i\leq u} 2^{t_i}{n_i-
k_i\choose t_i}
\beta(\k|\t)\zeta(\t+\k)
\label{eq:sumformula_generalized}
\end{align}
}
\normalsize
\noindent Where $\t = (t_1,\cdots,t_u)^T$,  $\k = (k_1,\cdots,k_u)^T$ and $\mathbf{1}$ is a vector of all ones. Invoking generalized forms of Theorems \ref{thm:main external angle thm} and \ref{thm:main internal angle thm} to approximate the terms $\beta(\k|\t)$ and $\zeta(\k+\t)$, we conclude the following Theorem.

\begin{thm}\label{thm:main delta bound_generalized}
Consider a nonuniform sparse model with $u$ classes $K_1,\cdots,K_u$ with $|K_i| = n_i = \gamma_1n$, and sparsity fractions $p_1,p_2,\cdots,p_u$, where $n$ is the signal dimension. Also, let the functions $g(.),G(.),\psi(.),\Psi(.)$ be as defined in Theorem \ref{thm:main delta bound}. For positive values $\{\omega_i\}_{i=1}^u$, the recovery thresholds (weak,sectional and strong) of the weighted $\ell_1$ minimization program:
\beq \min_{\A\x=\y}\sum_{i=1}^u \omega_i\|\x_{K_i}\|_1,\nonumber\eeq
\noindent is given by the following expression:
\beq \addtolength{\fboxsep}{5pt} \boxed{
\begin{gathered}
\delta_c = \min\{\delta~|~\psi_{com}(\mathbf{\tau})-\psi_{int}(\mathbf{\tau})-\psi_{ext}(\mathbf{\tau})<0~ \forall \mathbf{\tau}=(\tau_1,\cdots,\tau_u)^T:\nonumber \\ ~0\leq \tau_i\leq \gamma_i(1-p_i)\forall 1\leq i\leq u,
\sum_{i=1}^u\tau_i  > \delta-\sum_{i=1}^u\gamma_ip_i \} \nonumber
\end{gathered}
} \eeq \noindent where $\psi_{com}$, $\psi_{int}$ and $\psi_{ext}$
are obtained from the following expressions:
\begin{enumerate}
\item $\psi_{com}(\mathbf{\tau}) =
\log{2}\sum_{i=1}^u\gamma_i(1-p_i)H(\frac{\tau_i}{\gamma_i(1-p_i)})+\tau_i,
$ for the weak threshold. For sectional threshold this must be modified by adding a term $\log2\sum_{i=1}^u\gamma_iH(p_i)$. For strong threshold, it must be also added with $\sum_{i=1}^u \gamma_ip_i$.
\item $\psi_{ext}(\mathbf{\tau}) =cx_0^2-\sum_{i=1}^u\alpha_i\log{G(\omega_ix_0)} $, where $c=\sum_{i=1}^u \omega_i^2(\tau_i+\gamma_ip_i)$,
$\alpha_i=\gamma_i(1-p_i)-\tau_i$ and $x_0$ is the unique solution
  of $ 2c=\sum_{i=1}^u\omega_i\frac{g(\omega_ix_0)\alpha_i}{x_0G(\omega x_0)},$

\item $\psi_{int}(\mathbf{\tau})=\lambda(\Lambda^*(y)+\frac{\lambda y^2}{2\sum_{i=1}^u\omega_i^2\gamma_ip_i}+\log2)$, where $\lambda = \sum_{i=1}^u\tau_i$, and $y$ and $\Lambda^*(y)$ are obtained as follows.
Let  $b=\frac{\sum_{i=1}^u\omega_i^2\tau_i}{\lambda}$,
$Q(s)=\sum_{i=1}^u\frac{\tau_i\varphi(s)}{\lambda\Phi(s)}$. Let $s^*$ be the solution to $s$ in
$-\frac{Q(s)}{s}=b + \frac{\sum_{i=1}^u\omega_i^2\gamma_ip_i}{\lambda}$, and $y=s^*(b-\frac{1}{\hat{M}(s^*)})$.
Then $\Lambda^*(y)= s^*y
-1/\lambda\sum_{i=1}^u {\tau_i \left(\frac{\omega_i^2 {s^*}^2}{2} +\log(2\Phi(\omega_is^*)) \right)}$ .
\end{enumerate}
\end{thm}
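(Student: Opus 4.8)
The plan is to repeat, mutatis mutandis, the chain of arguments that established Theorem \ref{thm:main delta bound} for $u=2$, isolating only the places where passing from two classes to $u$ classes alters the combinatorics or the large-deviation analysis. The Grassmann angle identity (\ref{eq:angformula}) holds for an arbitrary weight vector $\w$ and so requires no change. When $\w$ takes the distinct values $\omega_1,\dots,\omega_u$ on $K_1,\dots,K_u$, the same isomorphism observation used in the proof of Theorem \ref{thm: bound on probability of error} shows that two faces of $\mathcal{P}_{\w}$ whose supports meet each $K_i$ in the same number of vertices are related by a unitary map, so $\beta(\F,\G)$ and $\zeta(\G,\mathcal{P}_{\w})$ depend on the supports only through the count vectors $\k$ and $\t$. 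Collecting isomorphic faces, the factor $\prod_{i=1}^u 2^{t_i}\binom{n_i-k_i}{t_i}$ counts the sign and support choices of the $t_i$ extra vertices in class $i$, and (\ref{eq:angformula}) collapses to (\ref{eq:sumformula_generalized}). It then suffices to bound the three exponents and to locate the least $\delta$ making $\psi_{com}-\psi_{int}-\psi_{ext}$ negative over the admissible region for $\mathbf{\tau}$.

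The combinatorial and external exponents generalize routinely. For $\psi_{com}$ I would apply Stirling's formula to each binomial, as in (\ref{eq:comb exp.}), obtaining $\log 2\sum_{i=1}^u[\gamma_i(1-p_i)H(\tau_i/(\gamma_i(1-p_i)))+\tau_i]$; the sectional and strong thresholds then follow by prefixing union bounds over all admissible supports and, for the strong case, sign patterns, which augment $\psi_{com}$ by the entropy-of-support and sign-count terms recorded in the statement, exactly as in the $u=2$ corrections of Section \ref{sec:exp cal.}. For $\psi_{ext}$ I would first extend Theorem \ref{thm:main external angle thm}: the outward-normal cone of $\G$ now yields a product of $u$ truncated-Gaussian factors, so that $\zeta(\t+\k)$ equals a single integral over $x$ of $e^{-x^2}\prod_{i=1}^u(\int_0^{\omega_i x/\xi}e^{-y^2}\,dy)^{r_i}$ with $\xi^2=\sum_i\omega_i^2(k_i+t_i)$ and $r_i=n_i-k_i-t_i$. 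Laplace's method applied to this scalar integral, in the manner of Lemma \ref{lemma:extasy}, produces the stationarity condition $2c=\sum_i\omega_i g(\omega_i x_0)\alpha_i/(x_0 G(\omega_i x_0))$ and the exponent $\psi_{ext}=cx_0^2-\sum_i\alpha_i\log G(\omega_i x_0)$.

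The internal angle is the genuinely delicate step, and is where the ``subtlety in the large deviation calculations'' resides. Generalizing Theorem \ref{thm:main internal angle thm} and Lemma \ref{lemma:Con_F,G}, I would write $\beta(\k|\t)=c_0\,p_Z(0)$, where $Z$ is now a linear combination of one centered Gaussian coming from the $\F$-directions and of $u$ independent families of half-normal variables, the $i$th family scaled by $\omega_i^2$ and of size $t_i$. Estimating the density $p_Z(0)$ to exponential order is a large-deviation problem for a weighted empirical mean of half-normals subject to the single linear constraint $Z=0$; by convex duality this constraint introduces one scalar tilt $s$, applied to class $i$ as $\omega_i s$, whose per-class cumulant contribution is $\Lambda_1(\omega_i s)=(\omega_i s)^2/2+\log(2\Phi(\omega_i s))$. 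Solving the tilting equation $-Q(s)/s=b+\sum_i\omega_i^2\gamma_i p_i/\lambda$ for the optimal $s^*$ and forming the Fenchel conjugate $\Lambda^*(y)$ at $y=s^*(b-1/\hat M(s^*))$ yields $\psi_{int}=\lambda(\Lambda^*(y)+\lambda y^2/(2\sum_i\omega_i^2\gamma_i p_i)+\log 2)$. The main obstacles are to prove that the tilting equation has a unique admissible root $s^*$ for every feasible $\mathbf{\tau}$ — which needs strict convexity and steepness of the aggregate cumulant generating function of the $u$ half-normal families — and to upgrade the pointwise Laplace/local-limit estimate of $p_Z(0)$ to a bound uniform in $\mathbf{\tau}$, as demanded by the analogue of Lemma \ref{lemma:intasy}.

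Finally, inserting the three uniform exponent bounds into (\ref{eq:sumformula_generalized}) and summing over the $O(n^{u})$ index vectors $\t$ — a polynomial factor absorbed by the exponential decay — the failure probability tends to zero precisely when $\psi_{com}(\mathbf{\tau})-\psi_{int}(\mathbf{\tau})-\psi_{ext}(\mathbf{\tau})<0$ holds throughout $\{0\le\tau_i\le\gamma_i(1-p_i),\ \sum_i\tau_i>\delta-\sum_i\gamma_i p_i\}$, and the smallest such $\delta$ is $\delta_c$, completing the proof exactly as in Theorem \ref{thm:main delta bound}.
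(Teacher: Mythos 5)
Your proposal is correct and follows essentially the same route as the paper, which itself only sketches this generalization by observing that the Grassmann angle formula (\ref{eq:angformula}) and the face-isomorphism argument carry over verbatim, yielding (\ref{eq:sumformula_generalized}), and then invoking generalized forms of Theorems \ref{thm:main external angle thm} and \ref{thm:main internal angle thm}. Your write-up in fact supplies more detail than the paper does, particularly on the single-tilt large-deviation treatment of the internal angle, but the underlying argument is identical.
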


\subsection{Robustness}
\label{sec:robustness}

\begin{proof}[proof of Theorem \ref{thm:robustness}.]
We first state the following lemma, which is very similar to Theorem 2 of \cite{Weiyu GM}. We skip its proof for brevity.
\begin{lem}
Let $K\subset\{1,2\cdots,n\}$ and the weight vector $\w = (w_1,w_2,\cdots,w_n)^T$ be fixed. Define $\W = diag(w_1,w_2,\cdots,w_n)$  and suppose $C>1$ is given. For every vector $\x_0\in \mathbb{R}^{n\times1}$, the solution $\hat{\x}$  of (\ref{eq:weighted l_1}) satisfies
\beq \|\W(\x_0-\hat{\x})\|_1\leq 2\frac{C+1}{C-1}\sum_{i\in\overline{K}}w_i|(x_0)_i|,\eeq
\noindent if and only if for every $\z\in\mathcal{N}(\A)$ the following holds:
\beq C \sum_{i\in K}w_i|z_i| \leq \sum_{i\in \overline{K}}w_i|z_i|.\eeq
\label{lem:robustness}
\end{lem}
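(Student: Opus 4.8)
The plan is to prove both implications of the equivalence, treating the ``if'' direction (null space condition $\Rightarrow$ error bound) as the substantive one, since this is the only implication that Theorem~\ref{thm:robustness} actually invokes. Throughout I would write $\z = \hat{\x} - \x_0$, which lies in $\mathcal{N}(\A)$ because $\A\hat{\x} = \A\x_0$, and track separately the ``on-support'' weighted mass $\sum_{i\in K} w_i|z_i|$ and the ``off-support'' weighted mass $\sum_{i\in\overline{K}} w_i|z_i|$ of $\z$.

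For the forward direction I would start from the optimality of $\hat{\x}$ in (\ref{eq:weighted l_1}), namely $\|\W\hat{\x}\|_1 \le \|\W\x_0\|_1$, and substitute $\hat{\x} = \x_0 + \z$. Splitting the weighted norm over $K$ and $\overline{K}$ and applying the reverse triangle inequality $w_i|(x_0)_i + z_i| \ge w_i|(x_0)_i| - w_i|z_i|$ on $K$ together with $w_i|(x_0)_i + z_i| \ge w_i|z_i| - w_i|(x_0)_i|$ on $\overline{K}$, the terms $\sum_{i\in K}w_i|(x_0)_i|$ cancel, leaving the cone inequality
\[
\sum_{i\in\overline{K}} w_i|z_i| \le \sum_{i\in K} w_i|z_i| + 2\sum_{i\in\overline{K}} w_i|(x_0)_i|.
\]
The hypothesis $C\sum_{i\in K}w_i|z_i| \le \sum_{i\in\overline{K}}w_i|z_i|$ then lets me replace $\sum_{i\in K}w_i|z_i|$ by $\tfrac{1}{C}\sum_{i\in\overline{K}}w_i|z_i|$, and rearranging yields $\sum_{i\in\overline{K}}w_i|z_i| \le \tfrac{2C}{C-1}\sum_{i\in\overline{K}}w_i|(x_0)_i|$. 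Finally, $\|\W\z\|_1 = \sum_{i\in K}w_i|z_i| + \sum_{i\in\overline{K}}w_i|z_i| \le (1+\tfrac{1}{C})\sum_{i\in\overline{K}}w_i|z_i|$, and inserting the previous bound produces exactly the stated constant $2\tfrac{C+1}{C-1}$. This is a weighted transcription of Theorem~2 of \cite{Weiyu GM}, and I expect it to go through without difficulty.

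For the converse I would argue by contraposition, exhibiting a signal whose recovery violates the error bound whenever the null space condition fails. Given a witness $\z\in\mathcal{N}(\A)$, the natural test signal is $\x_0 = \z_K$; since it is supported on $K$, the right-hand side $\sum_{i\in\overline{K}}w_i|(x_0)_i|$ vanishes, so the error bound forces $\hat{\x}=\x_0$, and comparing $\x_0=\z_K$ against the feasible competitor $-\z_{\overline{K}} = \z_K - \z$ forces $\sum_{i\in K}w_i|z_i| \le \sum_{i\in\overline{K}}w_i|z_i|$. The main obstacle is that this clean test signal only certifies balancedness with constant $C=1$; recovering the sharp constant $C$ that matches the forward bound requires a more delicate construction. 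I would handle this by perturbing $\x_0$ with a small tail on $\overline{K}$ (equivalently, taking an extremal null-space vector that saturates the inequality) so that the minimizer is driven a controlled distance away while the off-support mass $\sum_{i\in\overline{K}}w_i|(x_0)_i|$ stays proportionally small; arranging this so that the induced error provably exceeds $2\tfrac{C+1}{C-1}$ times that tail is the technical crux of the lemma, and is presumably why the authors omit the proof. Since the application needs only the forward implication, the most efficient route is to present that direction in full and treat the converse via this perturbed test-signal argument.
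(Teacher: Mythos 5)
The paper offers no proof of Lemma \ref{lem:robustness} at all --- it states the lemma and writes ``we skip its proof for brevity,'' deferring to Theorem 2 of \cite{Weiyu GM} --- so there is nothing to compare line by line. Your forward direction is complete and correct, and it is the standard argument: optimality of $\hat{\x}$ plus the two reverse triangle inequalities gives the cone inequality $\sum_{i\in\overline{K}}w_i|z_i|\leq\sum_{i\in K}w_i|z_i|+2\sum_{i\in\overline{K}}w_i|(x_0)_i|$, the null space condition converts this to $\sum_{i\in\overline{K}}w_i|z_i|\leq\frac{2C}{C-1}\sum_{i\in\overline{K}}w_i|(x_0)_i|$, and the final factor $\left(1+\frac{1}{C}\right)$ produces exactly $\frac{2(C+1)}{C-1}$. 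Since Theorem \ref{thm:robustness} invokes only this implication, your proof covers everything the paper actually uses.

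The converse, however, is a genuine gap, and your own sketch does not close it. The clean test signal $\x_0=\z_K$ certifies only $\sum_{i\in K}w_i|z_i|\leq\sum_{i\in\overline{K}}w_i|z_i|$, i.e.\ the condition with constant $1$, and the perturbation you propose does not improve this: taking $(\x_0)_K=\z_K$ and a tail of weighted mass $\epsilon b$ on $\overline{K}$ (where $b=\sum_{i\in\overline{K}}w_i|z_i|$), the combination of $\|\W\hat{\x}\|_1\leq\|\W(\x_0-\z)\|_1$ with the assumed error bound yields only $\sum_{i\in K}w_i|z_i|\leq\bigl(1+O(\epsilon)\bigr)b$, which degrades rather than sharpens as the tail grows, and a triangle-inequality comparison of $\x_0$ with $\x_0-\z$ gives at best $\sum_{i\in K}w_i|z_i|\leq\frac{C+3}{C-1}b$, still far from the claimed constant $C$. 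So the ``technical crux'' you flag is real and unresolved: as written you have not shown that failure of the null space condition with constant $C$ forces a violation of the bound with constant $\frac{2(C+1)}{C-1}$ (indeed, in the analogous unweighted instance-optimality literature the two directions are generally established with a loss in the constants rather than as an exact equivalence). Either supply an explicit extremal construction that achieves the constant, or restate the lemma as the one-directional implication that Theorem \ref{thm:robustness} needs.
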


\noindent Let $\z = (z_1,\cdots,z_n)^T$ be a vector in the null space of $\A$, and assume that
\beq C'\sum_{i\in L_1 \cup L_2}w_i|z_i| = \sum_{i\in \overline{L_1} \cap \overline{L_2}}w_i|z_i|. \label{eq:C'}\eeq

\noindent Let $K_{\epsilon_1}$ and $K_{\epsilon_2}$ be the solutions of the following  problems
\bea K_{\epsilon_1}&:&  \max_{K_{\epsilon_1}\subset K_1\cap\overline{L_1}, |K_{\epsilon_1}| = \epsilon_1\gamma_1p_1n} \sum_{i\in K_{\epsilon_1}}w_i|z_i|,\\
K_{\epsilon_2}&:&  \max_{K_{\epsilon_2}\subset K_2\cap\overline{L_2}, |K_{\epsilon_2}| = \epsilon_2\gamma_2p_2n} \sum_{i\in K_{\epsilon_2}}w_i|z_i|. \eea

\noindent Let $L'_1 = L_1\cup K_{\epsilon_1}$ and $L'_2 = L_2\cup K_{\epsilon_2}$. From the definition of $K_{\epsilon_1}$ and $K_{\epsilon_2}$, it follows that
\bea \sum_{i\in K_{\epsilon_1}} w_i|z_i| &\geq& \frac{\epsilon_1p_1}{1-p_1} \sum_{i\in \overline{L'_1}\cap K_1} w_i|z_i|,  \label{eq:sum_K_eps1}\\
\sum_{K_{i\in \epsilon_2}} w_i|z_i| &\geq& \frac{\epsilon_2p_2}{1-p_2} \sum_{i\in \overline{L'_2}\cap K_2} w_i|z_i|.  \label{eq:sum_K_eps2}
\eea
\noindent Adding $C'\left(\sum_{K_{\epsilon_1}} w_i|z_i| + \sum_{K_{\epsilon_2}} w_i|z_i| \right)$ to both sides of (\ref{eq:C'}) and using (\ref{eq:sum_K_eps1}) and (\ref{eq:sum_K_eps2}), we can write:

\bea C'\sum_{i\in L'_1 \cup L'_2}w_i|z_i| &\geq&  \sum_{i\in \overline{L_1} \cap \overline{L_2}}w_i|z_i| + C'\left(\frac{\epsilon_1p_1}{1-p_1} \sum_{i\in \overline{L'_1}\cap K_1} w_i|z_i| + \frac{\epsilon_2p_2}{1-p_2} \sum_{i\in \overline{L'_2}\cap K_2} w_i|z_i|  \right) \\
&\geq& \left(1 + (C'+1)\min(\frac{\epsilon_1p_1}{1-p_1},\frac{\epsilon_2p_2}{1-p_2})\right)   \sum_{i\in \overline{L'_1} \cap \overline{L'_2}}w_i|z_i|. \label{eq:contradict}
\eea
\noindent Note that $|L'_1| = \gamma_1p_1n$ and $|L'_2| = \gamma_2p_2n$. Therefore, since $\delta =\frac{m}{n}\geq \delta_c^{(S)}(\gamma_1,\gamma_2,p_1,p_2,\omega)$, we know that $\sum_{i\in L'_1 \cup L'_2}w_i|z_i| \leq  \sum_{i\in \overline{L'_1} \cap \overline{L'_2}}w_i|z_i| $. From this and (\ref{eq:contradict}) we conclude that
\beq C' \geq  \left(1 + (C'+1)\min(\frac{\epsilon_1p_1}{1-p_1},\frac{\epsilon_2p_2}{1-p_2})\right),  \eeq
\noindent or equivalently
\beq C' \geq \frac{1 + \min(\frac{\epsilon_1p_1}{1-p_1},\frac{\epsilon_2p_2}{1-p_2})}{1 -   \min(\frac{\epsilon_1p_1}{1-p_1},\frac{\epsilon_2p_2}{1-p_2})}.\eeq
Using Lemma \ref{lem:robustness} and the above inequality, we conclude (\ref{eq:robustness}).
\noindent
\end{proof}

\section{Approximate Support Recovery and Reweighted $\ell_1$}
\label{sec:approximated supp recov}
Using the analytical tools of this paper, it is possible to prove that a class of reweighted  $\ell_1$ minimization algorithms have a strictly higher recovery thresholds for sparse signals whose nonzero entries follow certain classes
of distributions (e.g. Gaussian). The technical details of this claim is not brought here, since it stands beyond the scope of this paper.  However, we briefly mention how a simple post processing on the output of $\ell_1$ minimization results in a nonuniform sparsity model with $u=2$ classes close to the one we introduced for the unknown signal. A more comprehensive study on this can be found in \cite{ISIT10 Reweighted}. The reweighted $\ell_1$ recovery algorithm proposed in \cite{ISIT10 Reweighted} is composed of two steps. In the first step a standard $\ell_1$ minimization is done, and based on the output, a set of entries where the signal is likely to reside (the so-called approximate support) is identified. The unknown signal can thus be thought of as  two classes, one with a relatively high fraction of nonzero entries, and one with a small fraction. The second step is a weighted $\ell_1$ minimization step where entries outside the approximate support set are penalized with a constant weight larger than $1$. The algorithm is as follows:

\begin{alg}
\text{}
\begin{enumerate}
\item Solve the $\ell_1$ minimization problem:
\begin{equation}
\hat{\x} = \arg{ \min{ \|\z\|_1}}~~\text{subject to}~~ \A\z
= \A\x.
\end{equation}
\item Obtain an approximation for the support set of $\x$:
find the index set $L \subset \{1,2, ..., n\}$ which corresponds to
the largest $k$ elements of
$\hat{\x}$ in magnitude.
\item Solve the following weighted $\ell_1$ minimization problem and declare the solution as output:
 \beq
\x^* = \arg{\min\|\z_L\|_1+\omega\|\z_{\overline{L}}\|_1}~~\text{subject to}~~ \A\z
= \A\x.
\label{eq:weighted l_1Alg}
\eeq
\end{enumerate}
\label{alg:modmain}
\end{alg}

For a given number of measurements, if the support size of $\x$, namely $k=|K|$, is slightly larger than the sparsity threshold of $\ell_1$ minimization, then a so-called robustness of $\ell_1$ minimization helps find a lower bound $f_1$ for $\frac{|L\cap K|}{|L|}$, i.e. the sparsity fraction of $\x$ over the set $L$. If $f_1$ is sufficiently close to 1, the number of measurements could satisfy:
\beq \delta \geq \max_{f'_1\geq f_1, f'_1k+f'_2(n-k) = k}\delta_c^{(T)}(\frac{k}{n},1-\frac{k}{n},f'_1,f'_2,\omega).\eeq
\noindent Then the recovery is successful in the second step with high probability. Recall that $\delta_c^{(T)}$ is the sectional threshold, which accounts for  all possible support sets. Therefore, the condition for strict improvement in the reweighted $\ell_1$ minimization is that:
\beq \delta(0,1,0,\frac{k}{n},1)\geq \max_{f'_1\geq f_1, f'_1k+f'_2(n-k) = k}\delta_c^{(T)}(\frac{k}{n},1-\frac{k}{n},f'_1,f'_2,\omega).\eeq

\section{Simulation Results}
\label{sec:Simulation}

\begin{figure*}[t]
  \centering
  \subfloat[$\gamma_1 = \gamma_2 = 0.5$, $p_1=0.4$ and $p_2=0.05$.]{\label{fig:simulation threshold_1}\includegraphics[width=0.5\textwidth]{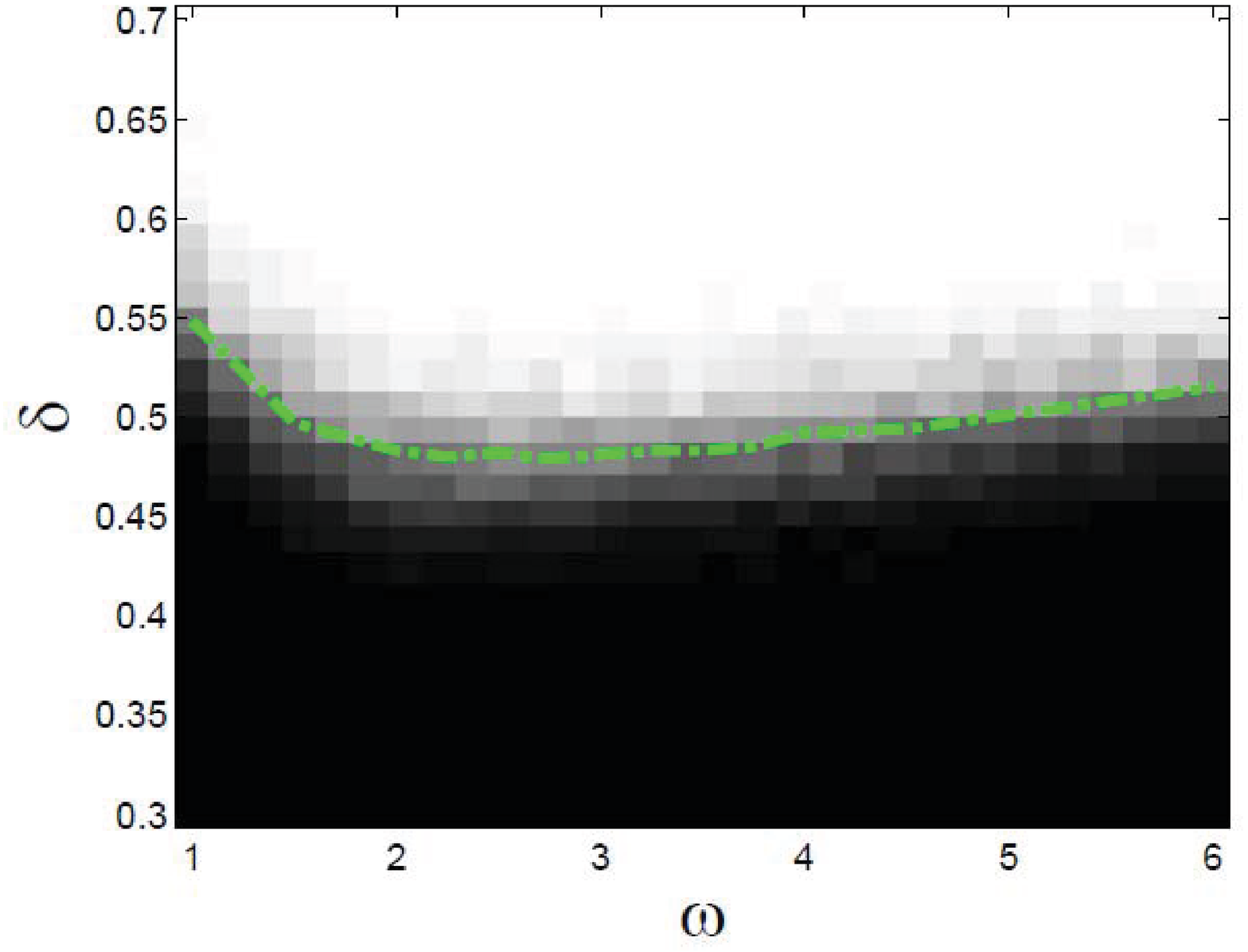}}
  \subfloat[$\gamma_1 = \gamma_2 = 0.5$, $p_1=0.65$ and $p_2=0.1$.]{\label{fig:simulation threshold_2}\includegraphics[width=0.5\textwidth]{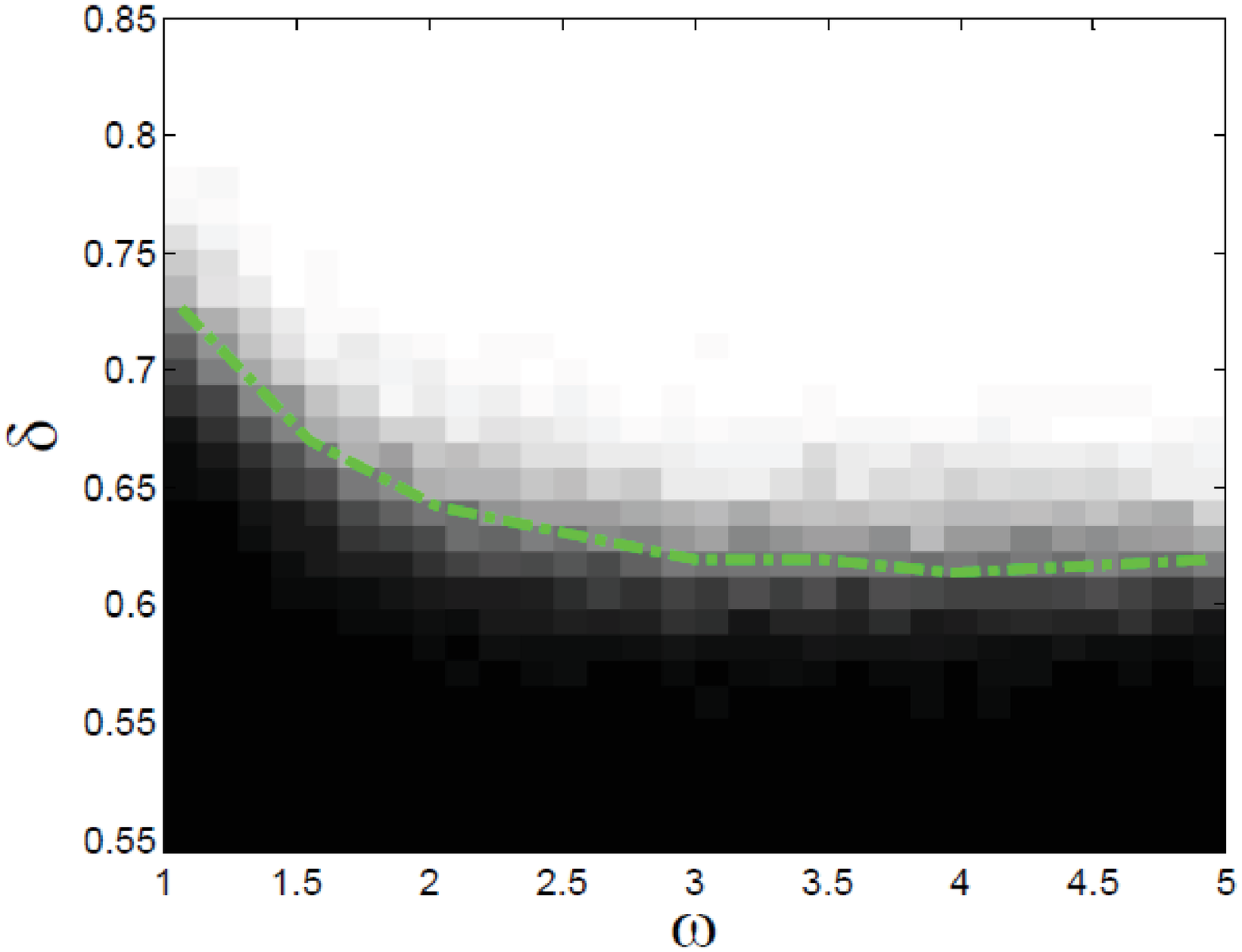}}
    \caption{{\scriptsize  Empirical recovery percentage of weighed $\ell_1$ minimization for different weight values $\omega$, and different number of measurements $\delta = \frac{m}{n}$ and $n=200$. Signals have been selected from a nonuniform sparse models. White indicates perfect recovery..}}
  \label{fig:simulation threshold}
\end{figure*}

%

We demonstrate by some examples that appropriate weights can boost
the recovery percentage. In Figure \ref{fig:simulation threshold} we
have shown the empirical recovery threshold of weighted $\ell_1$
minimization for different values of the weight
$\omega=\frac{w_{K_1}}{w_{K_2}}$, for two particular nonuniform
sparse models. Note that the empirical threshold is somewhat
identifiable with naked eye, and is very similar to the theoretical
curve of  Figure \ref{fig:delta_c} for similar settings. In another
experiment, we fix $p_2$ and  $n=2m=200$, and try $\ell_1$ and
weighted $\ell_1$ minimization for various values of $p_1$. We
choose $n_1=n_2=\frac{n}{2}$. Figure \ref{fig:PRvsP1firstsetting}
shows one such comparison for $p_2=0.05$ and different values of
$w_{K_2}$. Note that the optimal value of $w_{K_2}$ varies as $p_1$
changes. Figure \ref{fig:PRvsP1Optimalw} illustrates how the optimal
weighted $\ell_1$ minimization surpasses the ordinary $\ell_1$
minimization. The optimal curve is basically achieved by selecting
the best weight of  Figure \ref{fig:PRvsP1firstsetting} for each
single value of $p_1$. Figure \ref{fig:PRvsP1} shows the result of
simulations in another setting where $p_2=0.1$ and $m=0.75n$
(similar to the setting of Section \ref{sec: Main results}).  Note that these
results very well match the theoretical results of Figures
\ref{fig:critical delta} and \ref{fig:critical delta other setting}.

\begin{figure*}[t]
  \centering
  \subfloat[]{\label{fig:PRvsP1firstsetting}\includegraphics[width=0.45\textwidth]{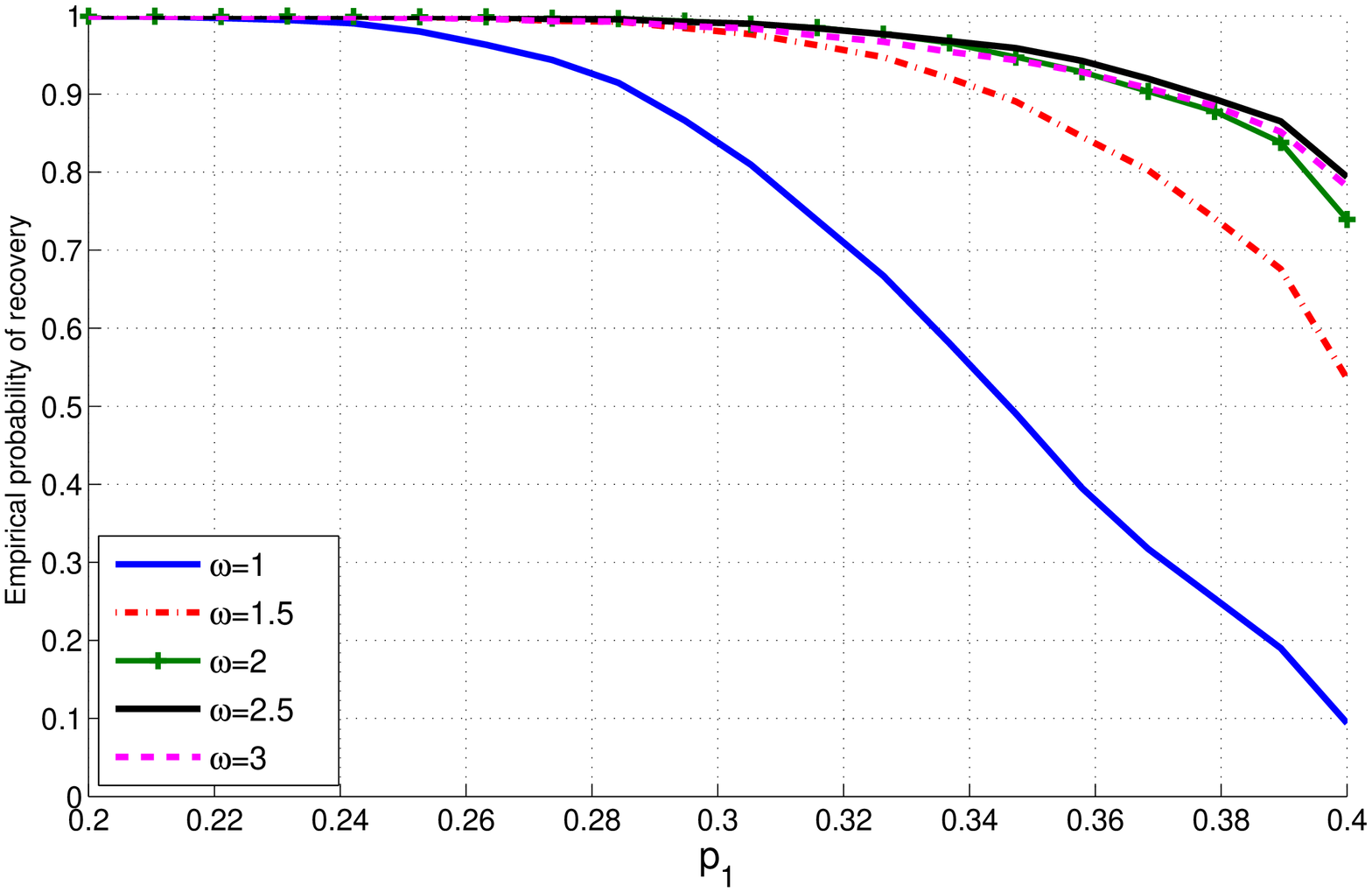}}
  \subfloat[]{\label{fig:PRvsP1Optimalw}\includegraphics[width=0.45\textwidth]{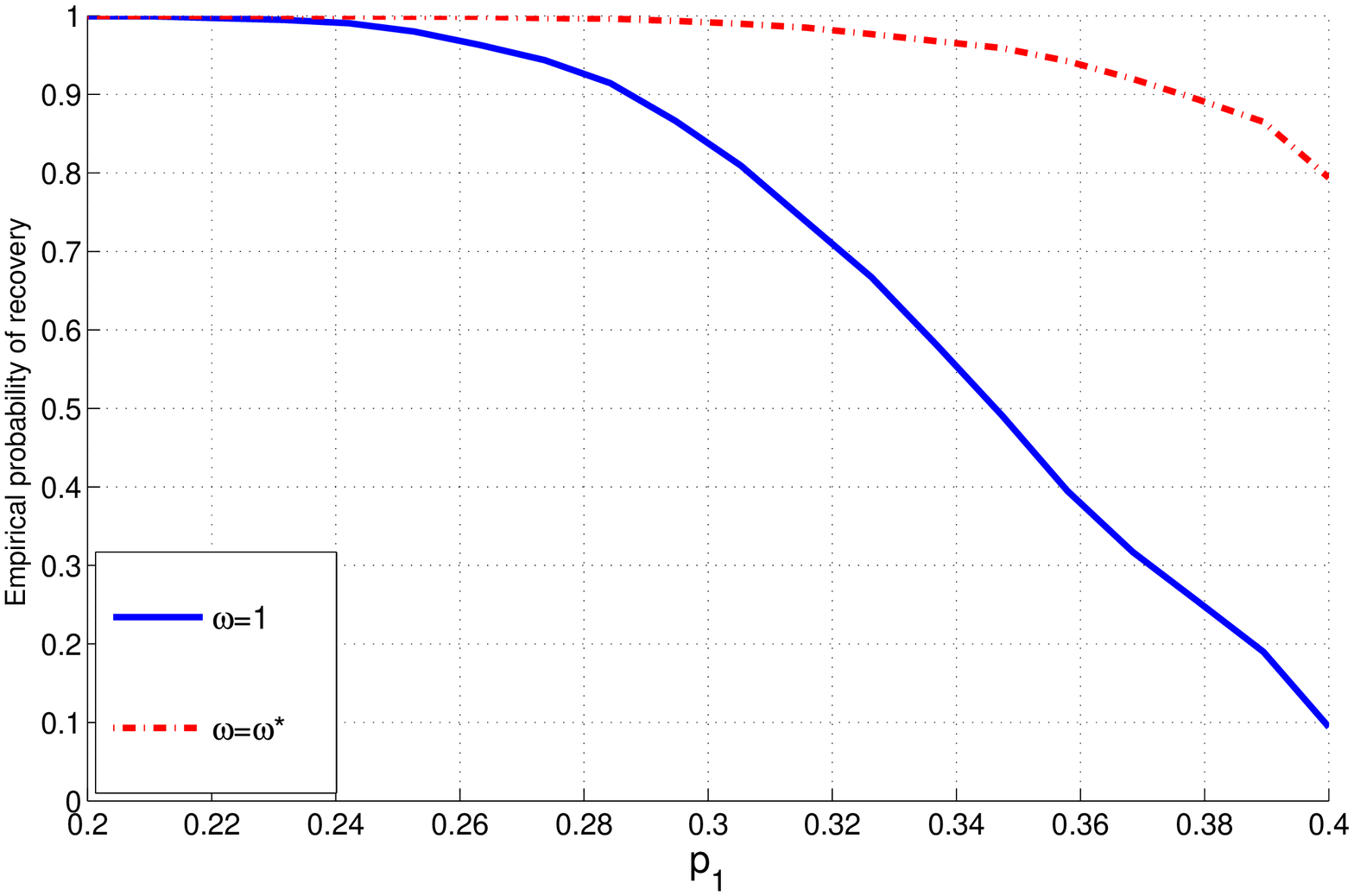}}
    \caption{{\scriptsize Empirical probability of successful recovery for weighted $\ell_1$ minimization with different weights (unitary weight for the first subclass and $\omega$ for the other one) and suboptimal weights in a nonuniform sparse setting. $p_2=0.05$, $\gamma_1=\gamma_2=0.5$ and $m=0.5n = 100$.  $\omega^*$ is (b) is the optimum value of $\omega$ for each $p_1$  among the values shown in (a).}}
  \label{fig:weightedimprovement}
\end{figure*}

\begin{figure}[t]
\centering
  \includegraphics[width=0.6\textwidth]{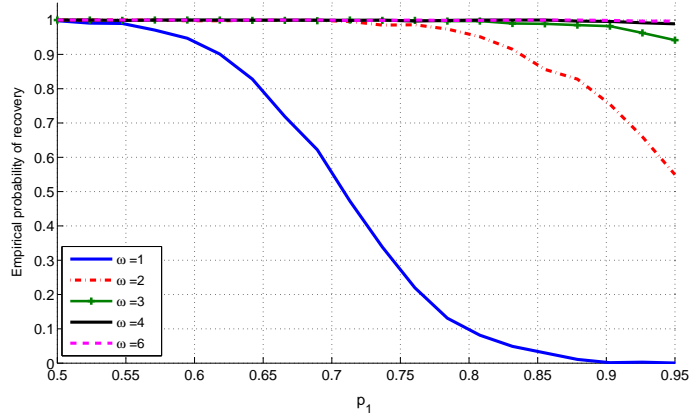}\\
\caption{{\scriptsize Empirical probability of successful recovery  for different
weights. $p_2=0.1$,$\gamma_1=\gamma_2=0.5$ and $m=0.75n=150$.}} \label{fig:PRvsP1}
\end{figure}

In Figure \ref{fig:avplots}, we have displayed the performance of weighted $\ell_1$ minimization in the presence of noise. The original signal is a nonuniformly sparse vector with sparsity fractions $p_1 = 0.4, p_2=0.05$ over two subclasses $\gamma_1=\gamma_2=0.5$. However, a white Gaussian noise vector is added before compression. Figure \ref{fig:avplots} shows a scatter plot of all output signal to recovery error ratios as a function of the input SNR, for all simulations. In Figure \ref{fig:allavplots} the average curves are compared together for different values of weight $\omega$.

\begin{figure}
  \centering
  \subfloat[$\omega=1$.]{\label{fig:avplotW=1}\includegraphics[width=0.4\textwidth]{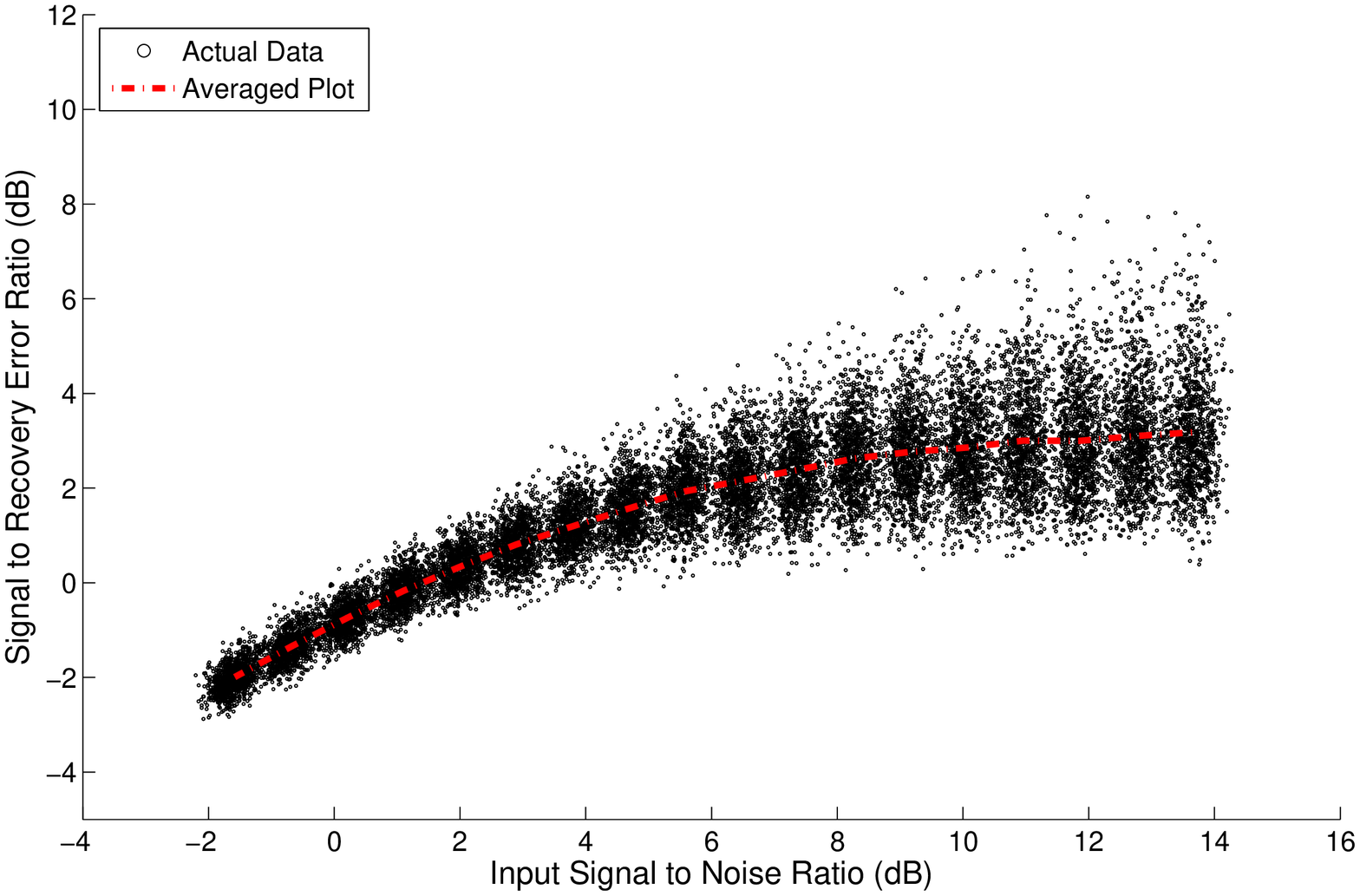}}
  \subfloat[$\omega=3$.]{\label{fig:avplotW=3}\includegraphics[width=0.4\textwidth]{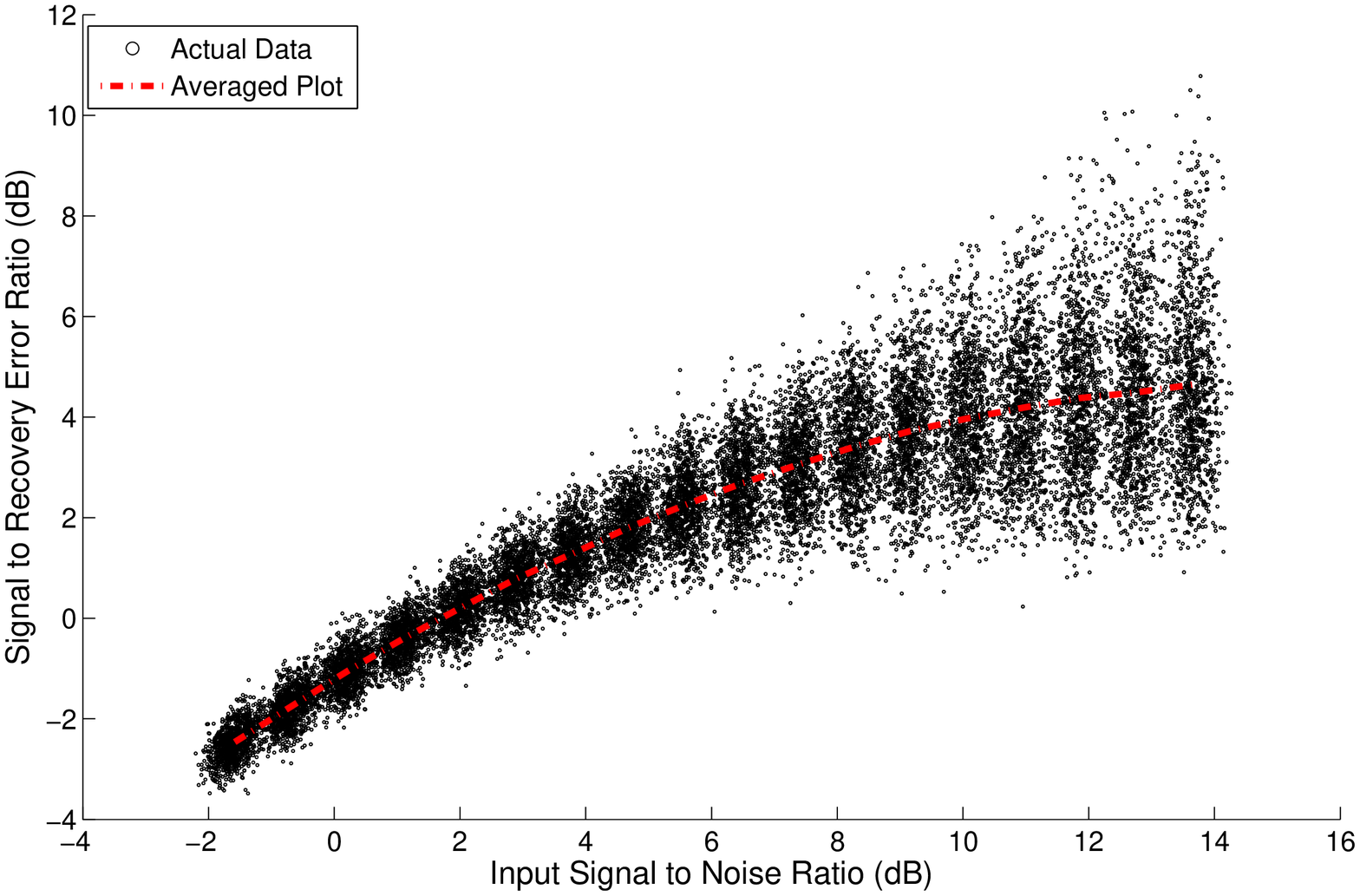}} \\
  \subfloat[$\omega=5$.]{\label{fig:avplotW=5}\includegraphics[width=0.4\textwidth]{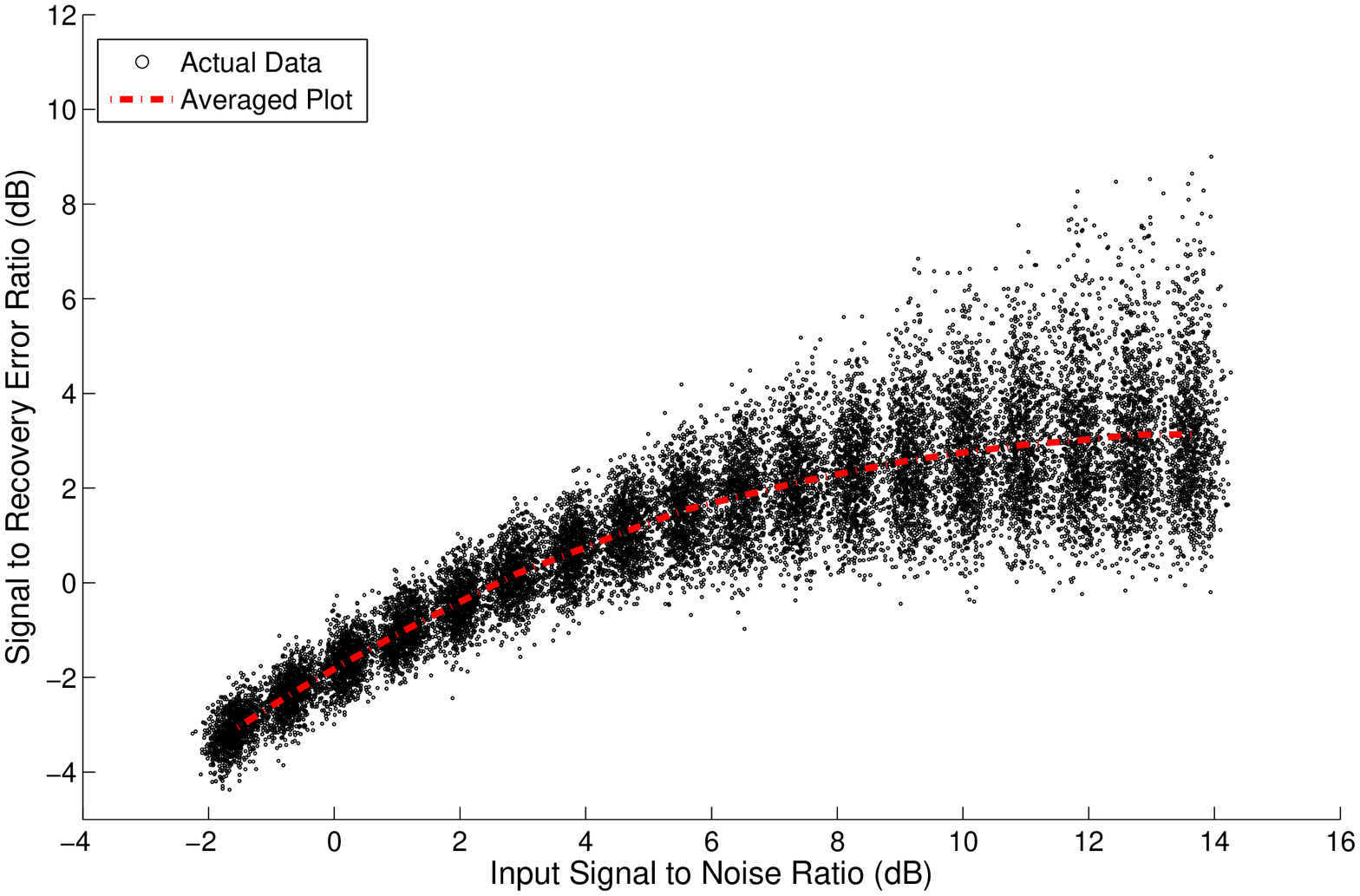}}
    \caption{ {\scriptsize Signal to recovery error ratio for weighted $\ell_1$ minimization with weight $\omega$  vs. input SNR for nonuniform sparse signals with $\gamma_1=\gamma_2=0.5$, $p_1=0.4,p_2=0.05$ superimposed with Gaussian noise.}}
  \label{fig:avplots}
\end{figure}

\begin{figure}[t]
\centering
  \includegraphics[width=0.6\textwidth]{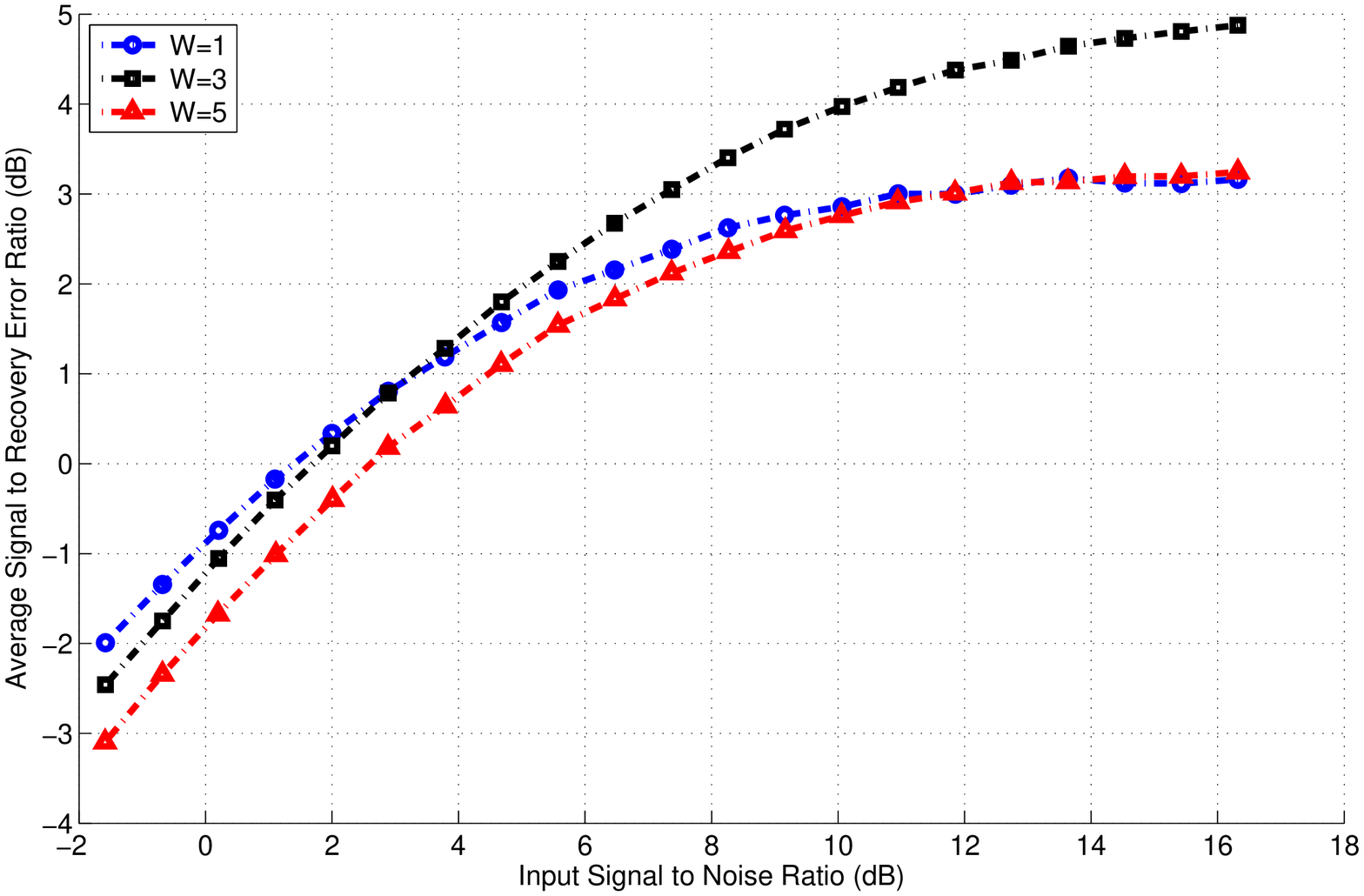}\\
\caption{{\scriptsize Average signal to recovery error ratio for weighted $\ell_1$ minimization with weight $\omega$  vs. input SNR for nonuniform sparse signals with $\gamma_1=\gamma_2=0.5$, $p_1=0.4,p_2=0.05$ superimposed with Gaussian noise. }} \label{fig:allavplots}
\end{figure}

\begin{figure}[t]
\centering
  \includegraphics[width= 0.35\textwidth]{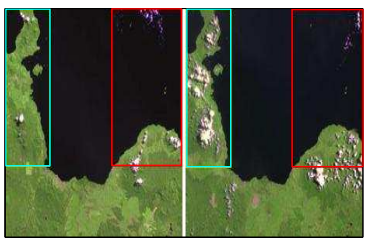}
  \caption{ \scriptsize{Satellite images taken from the New Britain rainforest in Papua Guina at 1989 (left) and 2000 (right). Image originally belongs to Royal Society for the Protection of Birds and was taken from the Guardian archive, an article on deforestation..}}
  \label{fig:sat image}
\end{figure}

\begin{figure}[t]
\centering
  \subfloat[]{\label{fig:sat1}\includegraphics[width= 0.4\textwidth]{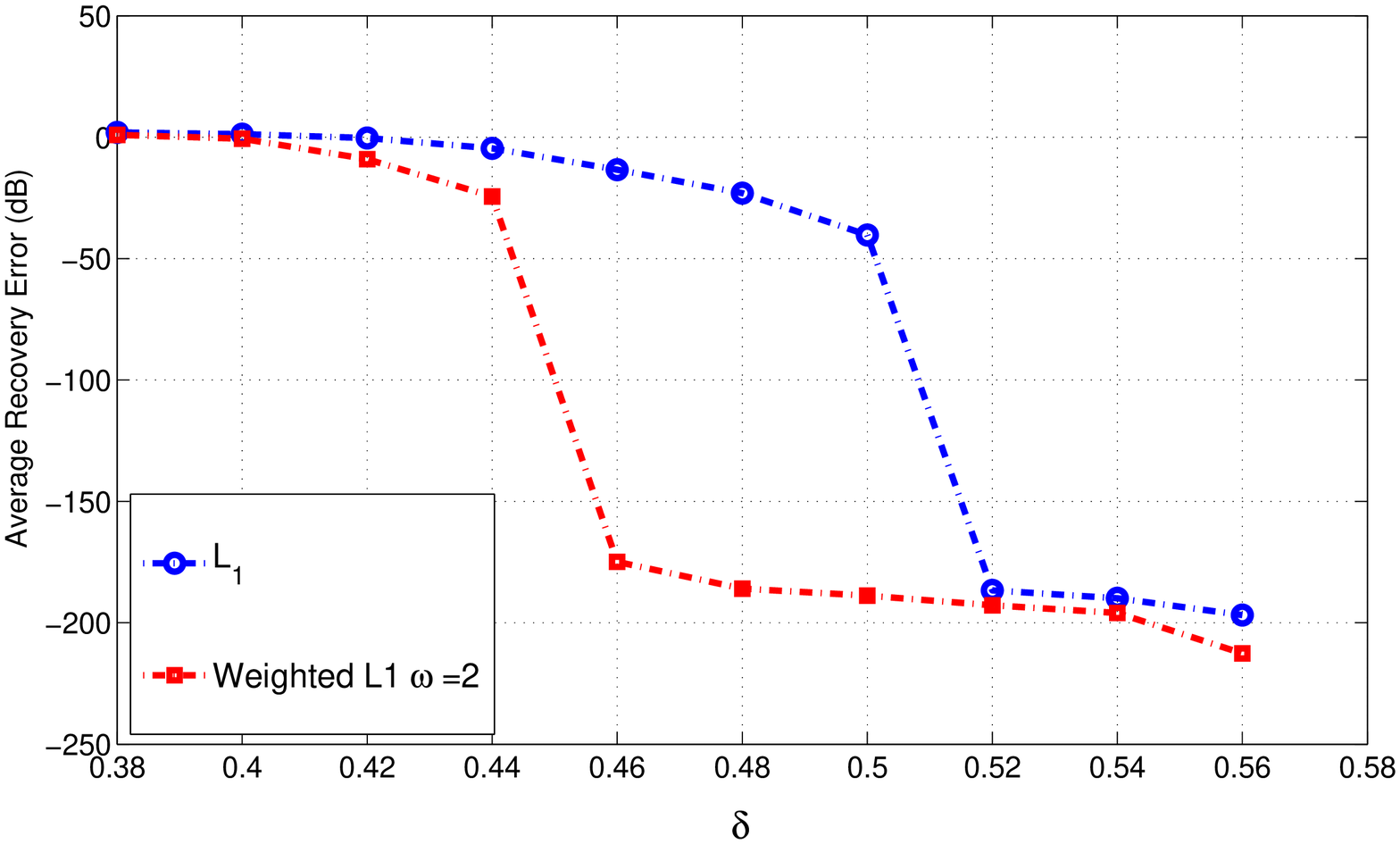}}
  \subfloat[]{\label{fig:sat2}\includegraphics[width= 0.4\textwidth]{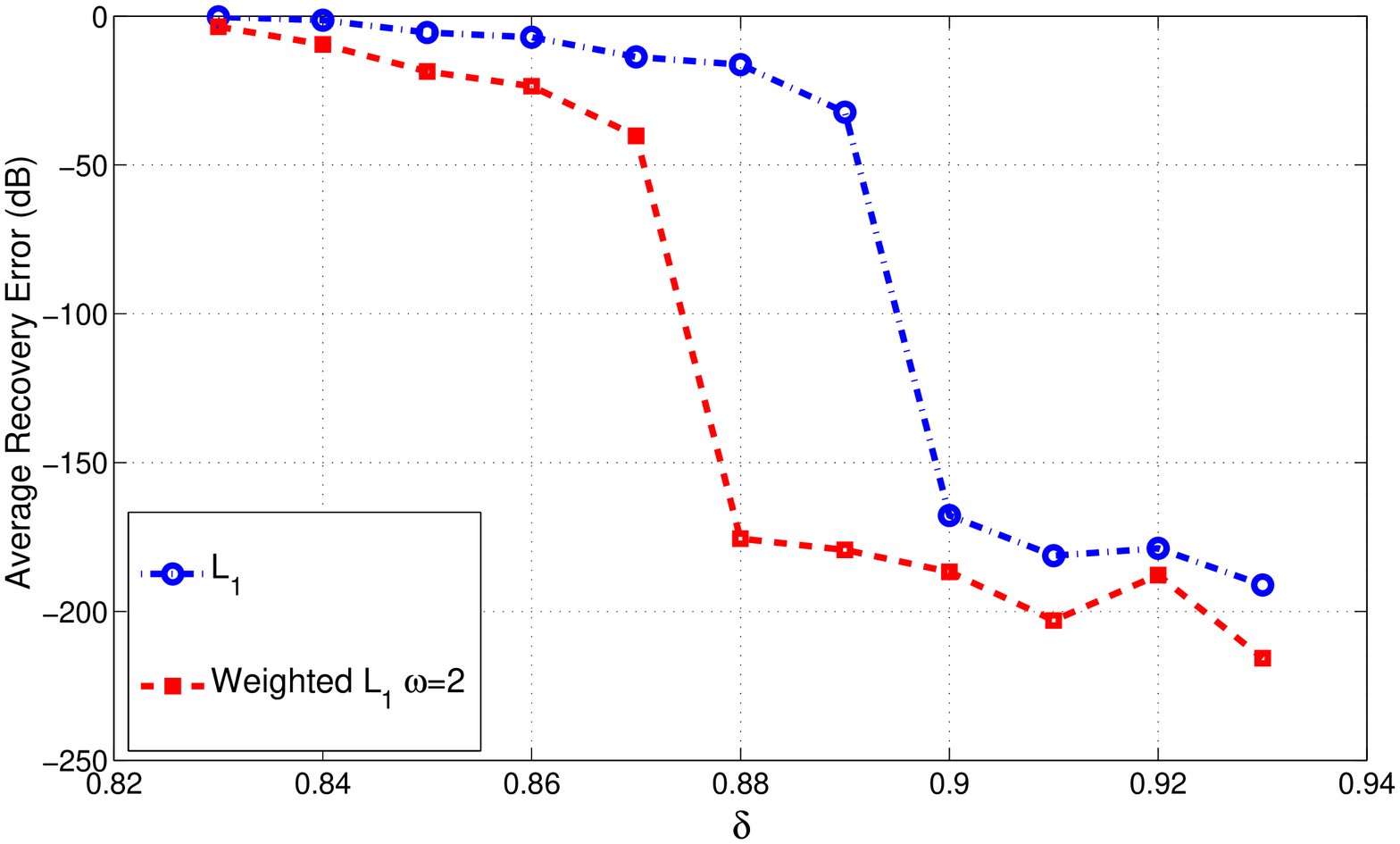}}
  \caption{ \scriptsize{Average Recovery Error for $\ell_1$ and weighted $\ell_1$ minimization recovery of the difference between the subframes of a pair of satellite images shown in Figure \ref{fig:sat image}. (a) corresponds to the right(red) subframe in Figure \ref{fig:sat image}, and (b) corresponds to the left(green) frame. Data is averaged over 50 realizations of measurement matrices for each $\delta$. }}
  \label{fig:simultions satellite}
\end{figure}

\begin{figure}[t]
\centering
  \includegraphics[width= 0.7\textwidth]{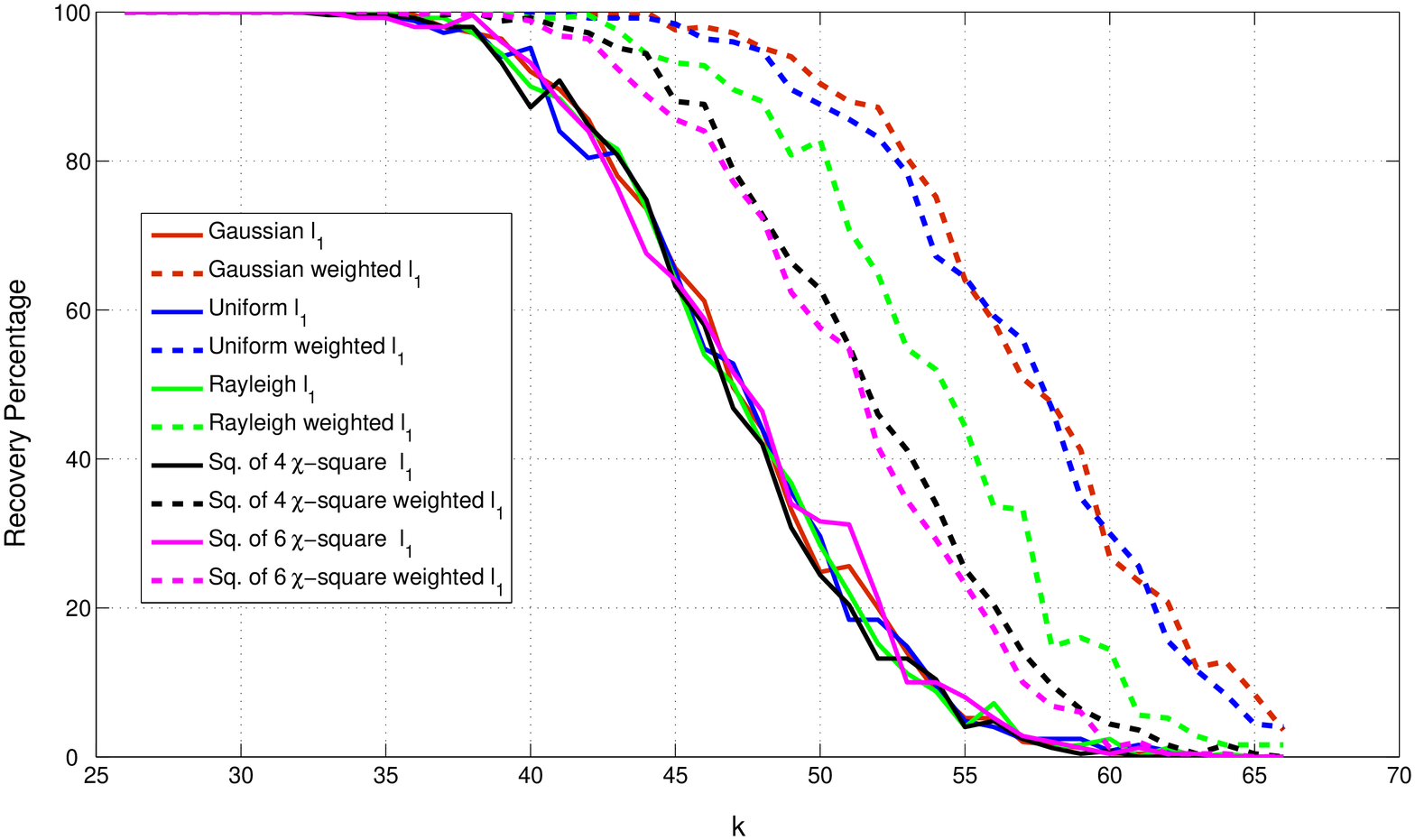}
  \caption{ \scriptsize{Empirical Recovery Percentage for Algorithm \ref{alg:modmain} with $n=200$ and $\delta = 0.5555$.}}
  \label{fig:simultions reweighted}
\end{figure}
We have done some experiments with regular $\ell_1$ and weighted $\ell_1$ minimization  recovery on some real world data. We have chosen a pair of satellite images (Figure \ref{fig:sat image})  taken at two different years, 1989 (left) and 2000 (right), from the New Britain rainforest in Papua Guina. These images are generally recorded to evaluate environmental effects such as \emph{deforestation}.  The difference of images taken at different times is generally not very significant, and thus can be thought of as compressible. In addition, the difference is usually more substantial over certain areas, e.g. forests. Therefore, it can be cast in a nonuniform sparse model. We have applied $\ell_1$ minimization  to recover the difference image over two subframes, identified by green and red rectangles in Figure \ref{fig:sat image}. In addition, a weighted $\ell_1$ minimization is also applied where the frame pixels are divided into two classes of equal sizes, where the concentration of the forestal area is larger over one of the classes, and hence the difference image is less sparse. For the right frame (red), the two classes are bottom half and top half of the frame, and for the left frame (green), they are left half and right half. We casually assign the weight value $\omega = 2$ for the sparser region for weighted $\ell_1$ recovery, and unitary weight to the denser region. The recovery errors for the two methods are displayed in Figure \ref{fig:simultions satellite}. The error is averaged over $50$ realizations of i.i.d. Gaussian measurement matrix for each $\delta$. As can be seen, even with this value of weight chosen intuitively, the recovery improvement is significant in the weighted $\ell_1$ minimization.

In figure \ref{fig:simultions reweighted}, we have compared the recovery performance for the regular $\ell_1$ minimization and the reweighted $\ell_1$ minimization of Algorithm \ref{alg:modmain}, for different sparsity levels and different distributions for the nonzero entries. Here the signal
dimension is $n=200$, and the number of measurements is $m=112$, which
corresponds to a value of $\delta = 0.5555$. We generated random
sparse signals with iid entries coming from certain
distributions; Gaussian, uniform,  Rayleigh , square root of
$\chi$-square with 4 degrees of freedom and, square root of
$\chi$-square with 6 degrees of freedom. Solid lines represent the simulation
results for ordinary $\ell_1$ minimization, and different colors
indicate different distributions. Dashed lines are used to show the
results for Algorithm
\ref{alg:modmain}. The reason why these distributions are selected and compared is elaborated in \cite{ISIT10 Reweighted}, as they demonstrate various levels of improvement. Note that for Gaussian and uniform distributions that are flat
and nonzero at the origin, the reweighted algorithm shows an impressive more than 20\%
improvement in the weak threshold (from 45 to 55).

\section{Conclusion and Future Work}
\label{sec:Conclusion} We analyzed the performance of the weighted
$\ell_1$ minimization for nonuniform sparse models. We computed
explicitly the phase transition curves for the weighted $\ell_1$
minimization, and showed that with proper weighting, the
recovery threshold for weighted $\ell_1$ minimization can be higher
than that of regular $\ell_1$ minimization. We provided simulation results to verify this both in the noiseless and noisy situation. Some of our simulations were performed on real world data of satellite images, where the nonuniform sparse model is a valid assumption. A further interesting question to be addressed in future work would be to characterize the gain in recovery percentage as a function of the number of
distinguishable classes $u$ in the nonuniform model. In
addition, we have used the results of this paper to build iterative
reweighted $\ell_1$ minimization algorithms that are provably
strictly better than $\ell_1$ minimization, when the nonzero entries
of the sparse signals are known to come from certain distributions
(in particular Gaussian distributions) \cite{ISIT10 Reweighted,KHXUAVHA Allerton}. The
basic idea there is that a simple post processing procedure on the
output of $\ell_1$ minimization results, with high probability, in a
hypothetical nonuniform sparsity model for the unknown signal, which
can be exploited for improved recovery.

\bibliographystyle{IEEEbib}

\begin{thebibliography}{1}

\small{

\bibitem{DT} D. Donoho,
``High-dimensional centrally symmetric polytopes with neighborliness
proportional to dimension,'', Discrete and Computational Geometry ,
102(27), pp. 617-652, 2006, Springer.



\bibitem{CT} E. Cand\`{e}s and T. Tao,
``Decoding by linear programming,'', IEEE Trans. on Information
Theory, 51(12), pp. 4203 - 4215, December 2005.


\bibitem{D} D. Donoho and J. Tanner,
``Neighborliness of randomly-projected simplices in high
dimensions,''  Proc. National Academy of Sciences, 102(27), pp.
9452-9457, 2005.



\bibitem{Weiyu GM} W. Xu and B. Hassibi, ``Compressed Sensing Over the Grassmann Manifold: A Unified Analytical Framework,''  Allerton Conference  2008.


\bibitem{Vas}
N. Vaswani and W. Lu,  ``Modified-CS: Modifying Compressive Sensing
for Problems with Partially Known Support,''  To Appear in IEEE
Trans. Signal Processing.


\bibitem{Baraniuk Detection} M. Davenport, M. Wakin, and R. Baraniuk, ``Detection and estimation with compressive measurements'', Rice ECE Department Technical Report TREE 0610, November 2006



\bibitem{Bayesian CS} S. Ji, Y. Xue, and L. Carin, ``Bayesian compressive sensing,'' IEEE Trans. on Signal Processing, 56(6) pp. 2346 - 2356, June 2008

\bibitem{Chevher} R. G. Baraniuk, V. Cevher, M. F. Duarte, and C. Hegde, ``Model-based compressive sensing,'' submitted to IEEE Transactions on Information Theory.


\bibitem{Mihailo BS-CS}M. Stojnic, F. Parvaresh and B. Hassibi, ``On the reconstruction of block-sparse signals with an optimal number of measurements,'' Preprint 2008.



 \bibitem{Candes Reweighted}  E. J. Cand\`{e}s, M. Wakin and S. Boyd. ``Enhancing sparsity by reweighted l1 minimization,'' J. Fourier Anal. Appl., 14 877-905


\bibitem{ISIT10 Reweighted}
A. Khajehnejad, W. Xu, S. Avestimher and B. Hassibi, ``Improved
Sparse Recovery Thresholds with Two-Step Reweighted $\ell_1$
Minimization,'' ISIT 2010.


\bibitem{Grunbaumpaper} B. Gr\"{u}nbaum, ``Grassmann angles of convex polytopes,'' Acta Math., 121:pp.293-302, 1968.





\bibitem{Grunbaumbook} B. Gr\"{u}nbaum, ``Convex polytopes,'', volume 221 of  Graduate Texts in
Mathematics, Springer-Verlag, New York, second edition, 2003.





\bibitem{santalo} L.A.Santal\'{o}, ``Geometr\'{i}a integral en
espacios de curvatura constante,'' Rep. Argetina Publ. Com. Nac.
Energ\'{i} At\'{o}mica, Ser.Mat 1,No.1(1952)





\bibitem{McMullen}P. McMullen, ``Non-linear angle-sum relations for polyhedral
cones and polytopes,"  Math. Proc. Cambridge Philos. Soc.,
78(2):pp.247-261, 1975.




\bibitem{StXuHa08} M. Stojnic, W. Xu, and B. Hassibi, ``Compressed
  sensing - probabilistic analysis of a null-space characterization,''
IEEE International Conference on Acoustics, Speech and Signal
Processing, Pages:3377-3380, March 31 2008-April 4 2008.




\bibitem{NetMon} J. Coates, Y. Pointurier and M. Rabbat, ``Compressed network monitoring,'' Proceedings of  IEEE Workshop on Statistical Signal Processing, Madison, WI, Aug. 2007.

\bibitem{KHXUAVHA Allerton}
B. Hassibi, A. Khajehnejad, W. Xu, S. Avestimehr, ``Breaking the
$\ell_1$ recovery thresholds with reweighted $\ell_1$
optimization,'' in proceedings of Allerton Conference 2009.

\bibitem{DNAref} O.Milenkovic, R. Baraniuk, and T. Simunic-Rosing, ``Compressed sensing meets bionformatics: a new DNA microarray architecture,'' Information Theory and Applications Workshop, San Diego, 2007.

\bibitem{DNAref1} S. Erickson, and C. Sabatti, ``Empirical Bayes estimation of a sparse vector of gene expression,''  Statistical
Applications in Genetics and Molecular Biology, 2005.

\bibitem{DNAref2} H.Vikalo, F. Parvaresh, S.Misra and B.Hassibi,``Recovering sparse signals using sparse measurement matrices
in compressed DNA microarrays,'' Asilomor conference, November 2007.

\bibitem{Hadwiger} H. Hadwiger, ``Gitterpunktanzahl im Simplex und Wills¡¯sche Vermutung,''  Math. Ann. 239
(1979), 271-288.

\bibitem{Softwate Package}
\url{http://www.its.caltech.edu/~amin/weighted_l1_codes/}.


\bibitem{rice} http://www.dsp.ece.rice.edu/cs/
}
\end{thebibliography}

\appendix{}
\noindent {\Large \textbf{Appendix.  Proof of Important Lemmas }}
\section{Proof of Lemma \ref{lemma:baselemma}}
\label{App: proof of lemma baselemma} First, let us assume that
$\sum_{i\in K}w_i|z_i|\leq \sum_{i\in\overline{K}}w_i|z_i|, \forall
\z \in \mathcal{N}(A)$. Note that by assumption $w_i$s are all
nonnegative. Using the triangular inequality for the weighted
$\ell_1$ norm (or for each absolute value term on the LHS) we obtain
\begin{eqnarray*}
\sum_{i\in K} w_i|x_i+z_i|+ \sum_{i\in\overline{K}}w_i|z_i| &\geq& \sum_{i\in K} w_i|x_i|-\sum_{i\in K}w_i|z_i|+ \sum_{i\in\overline{K}}w_i|z_i| \nonumber \\
&\geq& \sum_{i\in K} w_i|x_i| . \nonumber
\end{eqnarray*}
thus proving the forward part of this lemma. Now let us assume
instead that $\exists \z\in \mathcal{N}(A)$, such that $\sum_{i\in
K}w_i|z_i| > \sum_{i\in\overline{K}}w_i|z_i|$. Then we can construct
a vector $\x$ supported on the set $K$ (or a subset of $K$), with
$\x_{K}=-\z_{K}$ (i.e. $x_i = -z_i~ \forall i\in K$). Then we have
\begin{eqnarray*}
\sum_{i\in K} w_i|x_i+z_i|+ \sum_{i\in\overline{K}}w_i|z_i| = 0 +
\sum_{i\in\overline{K}}w_i|z_i| < \sum_{i\in K} w_i|x_i|
\end{eqnarray*}
proving the reverse part of this lemma.

\section{Proof of Lemma \ref{lemma:Con_F,G}}
\label{App:proof of lemma Con_F,G} Without loss of generality,
assume that $\F$ has the following $k$ vertices:
$\{\frac{\e_r}{w_r}, 1\leq r \leq k\}$, where $\e_r$ is the
$n$-dimensional standard unit vector with the $r$-th element equal
to 1. Also assume that the $(l-1)$-dimensional face $G$ is the
convex hull of the following $l$ vertices: $\frac{\e_r}{w_r}, 1\leq
r \leq l$. Then the cone $\mathcal{C}_{\F,\G}$ formed by observing
the $(l-1)$-dimensional face $\G$ of the weighted $\ell_1$-ball
$\mathcal{P}_{\w}$ from an interior point $x^{\F}$ of the face $\F$
is the positive cone of the vectors:
\begin{equation}
\frac{\e_j}{w_j}-\frac{\e_i}{w_i}, ~~\text{for all}~~ j \in {J
\backslash K},~i \in {K}, \label{eq:stset}
\end{equation}
and also the vectors
\begin{equation}
\frac{\e_{i_1}}{w_{i_1}}-\frac{\e_{i_2}}{w_{i_2}}, ~~\text{for
all}~~{i_1} \in {K},~{i_2} \in {K}, \label{eq:ndset}
\end{equation}
where $L=\{1, 2,..., l\}$ is the support set for the face $\G$. So
the cone $\mathcal{C}_{\F,\G}$ is the direct sum of the linear hull
$\mathcal{L}_{\F}=\text{lin}\{\F-x^{\F}\}$ formed by the vectors in
(\ref{eq:ndset}) and the cone
$\mathcal{C}_{\F^{\perp},G}=\mathcal{C}_{F,G} \bigcap
\mathcal{\mathcal{L}}_{\F}^{\perp}$, where $\mathcal{L}_{F}^{\perp}$
is the orthogonal complement to the linear subspace
$\mathcal{L}_{\F}$. Then $\mathcal{C}_{\F^{\perp},\G}$ has the same
(relative) spherical volume as $\mathcal{C}_{\F,\G}$, and by
definition the internal angle $\beta(\F,\G)$ is the relative
spherical volume of the cone $\mathcal{C}_{\F,\G}$. Now let us
analyze the structure of $\mathcal{C}_{\F^{\perp},\G}$. We notice
that the vector $\e_{0}=\sum_{r=1}^{k}w_r\e_r$ is in the linear
space $\mathcal{L}_{\F}^{\perp}$ and is also the only such a vector
(up to linear scaling) supported on $K$. Thus a vector $\x$ in the
positive cone $\mathcal{C}_{\F^{\perp},\G}$ must take the form
\begin{equation}
-\sum_{i=1}^{k}{b_i   e_i}+\sum_{i=k+1}^{l}{b_i   e_i},
\label{eq:vform}
\end{equation}
where $b_i, 1 \leq i \leq l$ are nonnegative real numbers  and
\begin{eqnarray*}
&& \sum_{i=1}^{k}{w_i b_i}=\sum_{i=k+1}^{l}{w_i
b_{i}}~~~\frac{b_1}{w_{K_1}}=\frac{b_2}{w_{K_2}}=\cdots=\frac{b_k}{w_k}.
\end{eqnarray*}

Now that we have identified $\mathcal{C}_{\F^{\perp},\G}$ we try to
calculate its relative spherical volume with respect to the sphere
surface $S^{l-k-1}$ to derive $\beta(\F,\G)$.  First, we notice that
$\mathcal{C}_{\F^{\perp},\G}$ is a $(l-k)$-dimensional cone. Also,
all the vectors $(x_1, \cdots, x_{n})$ in the cone
$\mathcal{C}_{\F^{\perp},\G}$ take the form in (\ref{eq:vform}).
From \cite{Hadwiger},
\begin{eqnarray}
\nonumber\int_{\mathcal{C}_{\F^{\perp},\G}}{e^{-\|\x\|^2}}\,d\x=\beta(\F,\G)
V_{l-k-1}(S^{l-k-1})\int_{0}^{\infty}{e^{-r^2}} r^{l-k-1}\,dx
=\beta(\F,\G) \cdot \pi^{(l-k)/2}, \label{eq:inaxchdirect1}
\end{eqnarray}
where $V_{l-k-1}(S^{l-k-1})$ is the spherical volume of the
$(l-k-1)$-dimensional sphere $S^{l-k-1}$ and is given by the
well-known formula
\begin{equation*}
V_{i-1}(S^{i-1})=\frac{i\pi^{\frac{i}{2}}}{\Gamma(\frac{i}{2}+1)},
\end{equation*}
where $\Gamma(\cdot)$ is the usual Gamma function. This completes
the proof.

\section{Proof of Lemma \ref{lemma:extasy}}
\label{App: proof of lemma ext. exp.} Let $G$ denote the cumulative
distribution function of a half-normal $HN(0, 1/2)$ random variable,
i.e. a random variable $X = |Z|$ where $Z \sim N(0, 1/2)$, and $G(x)
= \Prob \{X \leq x\}$. Since $X$ has density function $g(x) =
\frac{2}{\sqrt{\pi}}\exp(-x^2)$, we know that
\begin{equation}
G(x) = \frac{2}{\sqrt{\pi}} \int_{0}^{x}e^{-y^2}\,dy; \label{eq:erf}
\end{equation}
 and so $G$ is just the classical error function \emph{erf($\cdot$)}. We now
justify the external angle exponent computations in Theorem
{\ref{thm:main delta bound}} and Lemma \ref{lemma:extasy} using
Laplace methods \cite{D}.  Using the same set of notations as in
Theorem $\ref{thm:main delta bound}$, let $t_1=\tau_1n$,
$t_2=\tau_2n$. Also define $c=(\tau_1+\gamma_1p_1)+\omega
^2(\tau_2+\gamma_2p_2)$, $\alpha_1=\gamma_1(1-p_1)-\tau_1$ and
$\alpha_2=\gamma_2(1-p_2)-\tau_2$. Let $x_0$ be the unique solution
to $x$ of the following:
\begin{equation}
2c-\frac{g(x)\alpha_1}{xG(x)}-\frac{\omega g(\omega
x)\alpha_2}{xG(\omega x)}=0 \label{eq:externalweiyu}
\end{equation}
Since $xG(x)$ is a smooth strictly increasing function ( $\sim 0$ as
$x \rightarrow 0$ and $\sim x$ as $x\rightarrow \infty$), and $g(x)$
is strictly decreasing, the function
$\frac{g(x)\alpha_1}{xG(x)}+\frac{\omega g(\omega
x)\alpha_2}{xG(\omega x)}$ is one-one on the positive axis, and
$x_0$ is a well-defined function of $\tau_1$ and $\tau_2$. Hence, we
denote it as $x_0(\tau_1,\tau_2)$. Then
\begin{equation}
\psi_{ext}(\tau_1,\tau_2) =
cx_0^2-\alpha_1\log{G(x_0)}-\alpha_2\log{G(\omega x_0)}.
\end{equation}
\noindent To prove Lemma \ref{lemma:extasy}, we start from the
explicit integral formula
\begin{equation}
\zeta(d_1,d_2)=\pi^{-\frac{n-l+1}{2}}2^{n-l}\int_{0}^{\infty}e^{-x^2}
\left (\int_{0}^{ \frac{w_{K_1}x}{ \xi(d_1,d_2)} } e^{-y^2}\,dy
\right)^{r_1} \left (\int_{0}^{ \frac{w_{K_2} x}{\xi(d_1,d_2)} }
e^{-y^2}\,dy \right)^{r_2}\,dx,
\end{equation}

\noindent After a changing of integral variables (Noticing that
$w_{K_1}=1$, $w_{K_2}=\omega $, $\frac{n_{1}-d_{1}}{n}=\alpha_1$,
and $\frac{n_{2}-d_{2}}{n}=\alpha_2$ ), we have
\begin{eqnarray}
\zeta(t_1+k_1,t_2+k_2)=\sqrt{cn/\pi}
\int_{0}^{\infty}e^{-n(cx^2-\alpha_{1}\log(G(x))-\alpha_{2}\log(G(\omega
x))} \,dx.
\end{eqnarray}
 This suggests that we should use Laplace's method; we define
\begin{equation}
f_{\tau_1,\tau_2,n}=e^{-n \psi_{t_{1}',t_{2}'}(y)} \cdot
\sqrt{cn/\pi} \label{eq:fnnu}
\end{equation}
with
\begin{equation*}
\psi_{t_{1}',t_{2}'}(y)=cy^2-c_{1}\log G(y)-\alpha_{2}\log G(\omega
y)
\end{equation*}

\noindent We note that the function $\psi_{t_{1}',t_{2}'}$ is smooth
and convex. Applying Laplace's method to $\psi_{t_{1}',t_{2}'}$, but
taking care about regularity conditions and remainders as in
\cite{D}, gives a result with the uniformity in $(t_{1}',t_{2}')$.

\begin{lem}
For $t_{1}', t_{2}'$, let $x_{0}(\tau_1,\tau_2)$ denote the
minimizer of $\psi_{t_{1}',t_{2}'}$. Then
\begin{equation*}
\int_{0}^{\infty} f_{t_{1}',t_{2}',n}(x) \,dx \leq
e^{-n\psi_{t_{1}',t_{2}'}(x_{0}(t_{1}',t_{2}'))}
(1+R_n(t_{1}',t_{2}')),
\end{equation*}
where for any $\delta, \eta>0$,
\begin{equation*}
\sup_{0 \leq t_1^{'} \leq \gamma_{1}-\rho_{1}, \\ 0\leq \tau_2 \leq
(\gamma_{2}-\rho_{2}), \\
\delta-\rho_1-\rho_{2}\leq \tau_1+\tau_2 \leq
(1-\rho_1-\rho_{2}-\eta) } R_n(t_{1}',t_{2}')=o(1)~~as~~n\rightarrow
\infty.
\end{equation*}
\noindent where $\rho_{1}=k_{1}/n$, $\rho_{2}=k_{2}/n$,
$n_{1}/n=\gamma_{1}$ and $n_{2}/n=\gamma_{2}$. \label{lemma:extlap}
\end{lem}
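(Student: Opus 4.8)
The plan is to prove this as a \emph{uniform} version of Laplace's method applied to
$\int_0^\infty f_{t_1',t_2',n}(x)\,dx=\sqrt{cn/\pi}\int_0^\infty e^{-n\psi_{t_1',t_2'}(x)}\,dx$, following the template of Donoho \cite{D} but carrying the extra weight parameter $\omega$ and the two error-function factors. First I would record the analytic structure of the exponent $\psi_{t_1',t_2'}(x)=cx^2-\alpha_1\log G(x)-\alpha_2\log G(\omega x)$. Since $G$ is the half-normal cdf, $g=G'$ satisfies $g'(x)=-2xg(x)$, and the hazard rate $\phi=g/G$ obeys $\phi'(x)=-2x\phi(x)-\phi(x)^2<0$ on $(0,\infty)$; hence $\log G$ is concave, so $-\alpha_1\log G(x)-\alpha_2\log G(\omega x)$ is convex and, together with the strictly convex $cx^2$, makes $\psi$ smooth and strictly convex. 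It therefore has a unique interior minimizer $x_0=x_0(\tau_1,\tau_2)$ determined by $\psi'(x_0)=0$, which after dividing by $x_0$ is exactly equation (\ref{eq:externalweiyu}). Differentiating once more gives $\psi''(x)=2c+\alpha_1\bigl(2x\phi(x)+\phi(x)^2\bigr)+\alpha_2\omega^2\bigl(2\omega x\,\phi(\omega x)+\phi(\omega x)^2\bigr)\ge 2c>0$.

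The core estimate is then the standard Laplace split of $\int_0^\infty e^{-n\psi}$ into the window $|x-x_0|\le\varepsilon$ and its complement. On the window, a Taylor expansion with $\psi'(x_0)=0$ gives $\psi(x)=\psi(x_0)+\tfrac12\psi''(\tilde x)(x-x_0)^2$, which I would compare with the Gaussian integral $\int_{\mathbb{R}} e^{-\frac n2\psi''(x_0)(x-x_0)^2}\,dx=\sqrt{2\pi/(n\psi''(x_0))}$. Multiplying by the prefactor $\sqrt{cn/\pi}$ produces the leading constant $\sqrt{2c/\psi''(x_0)}$, and the inequality $\psi''(x_0)\ge 2c$ established above forces this constant to be at most $1$, which is precisely what permits an upper bound of the claimed form $e^{-n\psi(x_0)}(1+R_n)$. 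Outside the window, strict convexity yields $\psi(x)\ge\psi(x_0)+\kappa$ for some $\kappa=\kappa(\varepsilon)>0$, so that contribution is exponentially negligible compared with $e^{-n\psi(x_0)}$. The window error from replacing $\psi''(\tilde x)$ by $\psi''(x_0)$ and the tail together assemble into the remainder $R_n$.

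The hard part, and the reason the lemma is not merely a textbook pointwise Laplace estimate, is the \emph{uniformity} of $R_n$ over the admissible parameter range, especially near its boundaries. Here I would exploit the two constraints $\delta-\rho_1-\rho_2\le\tau_1+\tau_2\le 1-\rho_1-\rho_2-\eta$ to keep the integral non-degenerate: the lower bound forces $c$ to stay bounded away from $0$, so that neither the prefactor nor the quadratic curvature collapses, while the upper bound keeps $\alpha_1+\alpha_2$ bounded below, preventing the logarithmic terms from vanishing simultaneously. Combined with continuity of $x_0(\tau_1,\tau_2)$ in these parameters, the constraints confine the minimizer $x_0$ to a fixed compact subinterval of $(0,\infty)$, on which $\psi''(x_0)$ is bounded below and $\psi'''$ is bounded above, uniformly in $(\tau_1,\tau_2)$. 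This lets me choose the window radius $\varepsilon$ and the decay constant $\kappa$ independently of $(\tau_1,\tau_2)$, so that both the Taylor remainder and the neglected tail are $o(1)$ uniformly.

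I expect the main obstacle to be exactly this last point: ruling out, for \emph{every} sequence of admissible parameters, the degenerations $x_0\to 0$, $x_0\to\infty$, or $\psi''(x_0)\to 0$. The cleanest route is to show $x_0$ is continuous and finite on the (compact) closure of the admissible region—using that $\psi'(x)\to-\infty$ as $x\to 0^+$ and $\psi'(x)\to+\infty$ as $x\to\infty$ with rates that are controlled once $c$ and $\alpha_1+\alpha_2$ are bounded away from $0$—and then invoke compactness to extract uniform bounds on $x_0$, $\psi''(x_0)$, and the higher derivatives. Once these uniform bounds are in hand, the Laplace remainder analysis of \cite{D} transfers verbatim, completing the proof of Lemma \ref{lemma:extlap} and, through it, Lemma \ref{lemma:extasy}.
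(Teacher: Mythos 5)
Your proposal is correct and follows the same route the paper intends: the paper gives no actual proof of this lemma, stating it as an application of Laplace's method ``taking care about regularity conditions and remainders as in \cite{D},'' so your write-up supplies exactly the details the authors delegate to Donoho's argument. One remark worth making: your own computation that $\psi''(x)=2c+\alpha_1\bigl(2x\phi(x)+\phi(x)^2\bigr)+\alpha_2\omega^2\bigl(2\omega x\,\phi(\omega x)+\phi(\omega x)^2\bigr)\ge 2c$ holds \emph{globally} on $(0,\infty)$, not just at $x_0$, and this makes the stated upper bound essentially one line: Taylor's theorem with $\psi'(x_0)=0$ gives $\psi(x)\ge\psi(x_0)+c(x-x_0)^2$ for all $x>0$, hence
$\sqrt{cn/\pi}\int_0^\infty e^{-n\psi(x)}\,dx\le e^{-n\psi(x_0)}\sqrt{cn/\pi}\int_{\mathbb{R}}e^{-nc(x-x_0)^2}\,dx=e^{-n\psi(x_0)}$,
i.e.\ the lemma holds with $R_n\equiv 0$ uniformly (the only caveat being the degenerate case $\alpha_1=\alpha_2=0$, where $x_0=0$ and the bound is still immediate). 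This short-circuits the window/tail decomposition and the compactness argument you flag as the main obstacle; those ingredients are only needed if one wants matching lower bounds or the asymptotic equality $\frac1n\log\zeta\to-\psi_{ext}$, not for the one-sided estimate claimed here. Since the lemma as stated is an upper bound, your plan is more than sufficient, and the uniformity concerns you raise evaporate once the global curvature bound is used.
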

\noindent In fact, in this lemma, the minimizer
$x_{0}(t_{1}',t_{2}')$ is exactly the same $x_{0}(t_{1}',t_{2}')$
defined earlier in (\ref{eq:externalweiyu}) and the corresponding
minimum value is the same as the defined exponent $\psi_{ext}$:
\begin{equation}
\psi_{ext}(t_{1}',
t_{2}')=\psi_{t_{1}',t_{2}'}(x_{t_{1}'},x_{t_{2}'}).
\label{eq:bigsmall}
\end{equation}
\noindent We can derive Lemma \ref{lemma:extasy} from Lemma
\ref{lemma:extlap}. We note that as
$t_{1}'+t_{2}'+\gamma_{1}+\gamma_{2} \rightarrow 1$,
$x_{0}({t_{1}',t_{2}'}) \rightarrow 0$ and
$\psi_{ext}(t_{1}',t_{2}') \rightarrow 0$. For given $\epsilon
>0$ in the statement of Lemma \ref{lemma:extasy}, there is a largest
$\eta_{\epsilon} <1$ such that as long as
$\tau_1+\tau_2+\rho_{1}+\rho_{2}
>\eta_{\epsilon}$, $\psi_{ext}(t_{1}',t_{2}')<\epsilon$. Note that
$\zeta(\G,\mathcal{P}_{\w}) \leq 1$, so that for
$\tau_1+\tau_2+\rho_{1}+\rho_{2}
>\eta_{\epsilon}$,
\begin{equation*}
n^{-1}\log(\zeta(t_1+k_1,t_2+k_2))\leq 0<
-\psi_{ext}(t_{1}',t_{2}')+\epsilon,
\end{equation*}
for $n \geq 1$.  Applying the uniformity in $t_{1}',t_{2}'$ given in
Lemma \ref{lemma:extlap}, we have as $n \rightarrow \infty$,
uniformly over the feasible region for $t_{1}',t_{2}'$,
\begin{equation}
n^{-1}\log(\zeta(t_1+k_1,t_2+k_2)) \leq
-\psi_{ext}(t_{1}',t_{2}')+o(1).
\end{equation}
Then Lemma \ref{lemma:extasy} follows.

\section{Proof of Lemma \ref{lemma:intasy}}
\label{App: proof of lemma ext. int.}

Recall Theorem \ref{thm:main internal angle thm}. By applying the
large deviation techniques as in \cite{D}, we have
\begin{equation}
p_{Z}(0) \leq \frac{2}{\sqrt{\pi\Omega}}
 \left( \int_{0}^{\mu_{m'}} {v
e^{-v^2-m'\Lambda^*(\frac{\sqrt{2\Omega}}{m'}v)}}\,dv+e^{-\mu_{m'}^2}
\right), \label{eq:pz}
\end{equation}
where $\Omega$ is the same as defined in Section \ref{sec:
Derivation of Inter.}, $w_{K_1}=1$, $w_{K_2}=\omega$, $m'=t_1+t_2$,
$\mu_{m'}=(t_1+t_2\omega)\sqrt{\frac{1}{\pi\Omega}}$ is the
expectation of
$\frac{1}{\sqrt{\Omega'}}(w_{K_1}^2\sum_{i=1}^{t_1}X_i'-w_{K_2}^2\sum_{i=1}^{t_2}X_i'')$,
($X_i'$ and  $X_i''$ are defined as in Theorem \ref{thm:main
internal angle thm}), and
\begin{equation*}
\Lambda^*(y)={\max_{s}}~sy-\frac{t_1}{t_1+t_2}\Lambda_{1}(s)-\frac{t_2}{t_1+t_2}\Lambda_{2}(s),
\end{equation*}
with
\begin{equation*}
\Lambda_{1}(s)=\frac{s^2}{2}+\log(2\Phi(s)),~~\Lambda_{2}(s)=\Lambda_{1}(\omega
s).
\end{equation*}
In fact, the second term in the sum can be argued to be negligible
\cite{D}. After a changing of variables
$y=\frac{\sqrt{2\Omega}}{m'}v$, we know that the first term of
(\ref{eq:pz}) is upper-bounded by
\begin{equation}
\frac{2}{\sqrt{\pi}}\cdot\frac{1}{\sqrt{\Omega}}\cdot\frac{m'^2}{2\Omega}
\cdot \int_{0}^{\frac{t_1+t_2\omega}{t_1+t_2}\sqrt{2/\pi}} {y e^{-m'
(\frac{m'}{2\Omega})y^2-m'\Lambda^*(y)}}\,dy. \label{eq:intld}
\end{equation}

As we know, $m'$ in the exponent of (\ref{eq:intld}) is $t_1+t_2$.
Similar to evaluating the external angle decay exponent, we will
resort to the Laplace's method in evaluating the internal angle
decay exponent.

\noindent Define the function
\begin{equation*}
 f_{t_1,t_2}(y)=y e^{-m'
(\frac{m'}{2\Omega})y^2-m'\Lambda^*(y)}.
\end{equation*}
If we apply similar arguments as in proving Lemma \ref{lemma:extlap}
and take care of the uniformity, we have the following lemma:
\begin{lem}
Let $y_{t_1,t_2}*$ denotes the minimizer of
$(\frac{m'}{2\Omega})y^2+\Lambda^*(y)$. Then
\begin{equation*}
\int_{0}^{\infty} f_{t_1,t_2}(x) \,dx \leq
e^{-m'\left((\frac{m'}{2\Omega})y_{t_1,t_2}*^2 +\Lambda^*(
y_{t_1,t_2}* )\right)} \cdot R_{m'}(t_1,t_2)
\end{equation*}
where for $\eta>0$
\begin{equation*}
m'^{-1} \sup_{t_1,t_2} \log(R_{m'}(t_1,t_2))=o(1)~~as~~m'\rightarrow
\infty.
\end{equation*}
\end{lem}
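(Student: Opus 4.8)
The plan is to treat $\int_0^\infty f_{t_1,t_2}(x)\,dx$ as a Laplace-type integral and extract its exponential rate, following exactly the scheme used for the external angle in Lemma~\ref{lemma:extlap} (and in \cite{D}). Write $\phi_{t_1,t_2}(x)=\left(\frac{m'}{2\Omega}\right)x^2+\Lambda^*(x)$, so that $f_{t_1,t_2}(x)=x\,e^{-m'\phi_{t_1,t_2}(x)}$. The first step is to record that $\phi_{t_1,t_2}$ is smooth and strictly convex on $(0,\infty)$: the quadratic term is strictly convex, and $\Lambda^*$ is convex because it is the Legendre--Fenchel conjugate of the smooth, strictly convex cumulant generating function $\Lambda(s)=\frac{t_1}{m'}\Lambda_1(s)+\frac{t_2}{m'}\Lambda_1(\omega s)$. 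Hence $\phi_{t_1,t_2}$ has a unique minimizer $y^*=y^*_{t_1,t_2}$, which is precisely the point described in the statement, and the first-order condition reads $\frac{m'}{\Omega}y^*+(\Lambda^*)'(y^*)=0$. I would also note $\Lambda_1(0)=\log(2\Phi(0))=0$, so $\Lambda(0)=0$ and therefore $\Lambda^*\ge 0$, a fact used below.

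The second step is an exact factorization followed by a quadratic minorant. Since $\phi_{t_1,t_2}(x)\ge\phi_{t_1,t_2}(y^*)$,
\[
\int_0^\infty f_{t_1,t_2}(x)\,dx=e^{-m'\phi_{t_1,t_2}(y^*)}\,R_{m'}(t_1,t_2),\qquad R_{m'}(t_1,t_2)=\int_0^\infty x\,e^{-m'(\phi_{t_1,t_2}(x)-\phi_{t_1,t_2}(y^*))}\,dx .
\]
Using convexity of $\Lambda^*$ together with the first-order condition, a one-line computation gives the clean quadratic lower bound $\phi_{t_1,t_2}(x)-\phi_{t_1,t_2}(y^*)\ge\frac{m'}{2\Omega}(x-y^*)^2=\kappa\,(x-y^*)^2$, where $\kappa=\frac{\lambda}{2\Omega'}$, $\lambda=\tau_1+\tau_2$, and $\Omega=n\Omega'$. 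Substituting into $R_{m'}$ and completing the Gaussian integral bounds $R_{m'}(t_1,t_2)$ by a quantity of order $\frac{1}{m'\kappa}+y^*\sqrt{\pi/(m'\kappa)}$, i.e. only polynomially large in $m'$. This already yields $m'^{-1}\log R_{m'}(t_1,t_2)=o(1)$ pointwise, which is the content of the lemma up to uniformity.

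The hard part is the uniformity $m'^{-1}\sup_{t_1,t_2}\log R_{m'}(t_1,t_2)=o(1)$. It requires two uniform estimates over the feasible region $0\le t_1\le n_1-k_1$, $0\le t_2\le n_2-k_2$, $t_1+t_2>m-k_1-k_2+1$: a uniform lower bound $\kappa\ge\kappa_0>0$ on the curvature, and a uniform upper bound on the minimizer $y^*_{t_1,t_2}$. The curvature bound is immediate, since the measurement constraint forces $\lambda>\delta-\gamma_1p_1-\gamma_2p_2$ to be bounded away from $0$ while $\Omega'=\gamma_1p_1+\omega^2\gamma_2p_2$ is fixed. The uniform bound on $y^*$ is the delicate point, because as $(\tau_1,\tau_2)$ approaches the boundary of the feasible set the rate function $\Lambda^*$ and its effective domain can degenerate, so that $y^*$ could in principle blow up; this is the same regularity issue Donoho controls in \cite{D} and that we invoked for Lemma~\ref{lemma:extlap}. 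I would dispose of it exactly as there: since $\Lambda^*\ge 0$, the bound $\frac{m'}{2\Omega}y^{*2}\le\phi_{t_1,t_2}(y^*)\le\phi_{t_1,t_2}(\tilde y)$ for a fixed interior evaluation point $\tilde y$ at which $\Lambda^*$ is uniformly finite shows $y^*_{t_1,t_2}$ is uniformly bounded, and the remainder estimate then carries the uniform constants, giving $R_{m'}$ bounded by a fixed polynomial in $m'$. Finally, the $\int_0^\infty$ in the statement upper-bounds the finite-range integral that actually appears in (\ref{eq:intld}); this extension is harmless precisely because the quadratic term dominates the tail, so the same bound applies.
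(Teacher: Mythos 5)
Your argument is correct, and it is considerably more concrete than what the paper itself offers: the paper's entire ``proof'' of this inner lemma is the remark that one should ``apply similar arguments as in proving Lemma \ref{lemma:extlap} and take care of the uniformity,'' deferring all regularity and remainder control to the Laplace-method machinery of \cite{D}. Your route replaces that local Laplace expansion with a global convexity argument: writing $\phi_{t_1,t_2}(x)=\frac{m'}{2\Omega}x^2+\Lambda^*(x)$, the supporting-line inequality $\Lambda^*(x)\ge\Lambda^*(y^*)+(\Lambda^*)'(y^*)(x-y^*)$ combined with the stationarity condition $(\Lambda^*)'(y^*)=-\frac{m'}{\Omega}y^*$ (which is exactly the paper's (\ref{eq:sy2})) yields the clean minorant $\phi_{t_1,t_2}(x)-\phi_{t_1,t_2}(y^*)\ge\frac{m'}{2\Omega}(x-y^*)^2$, after which $R_{m'}$ is a Gaussian integral of at most polynomial size in $m'$, hence subexponential as the statement requires. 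This buys a self-contained, quantitative remainder bound and makes the uniformity transparent, reducing it to (i) the curvature $\kappa=\lambda/(2\Omega')$ being bounded below, which follows from $t_1+t_2>m-k_1-k_2+1$, and (ii) a uniform bound on $y^*_{t_1,t_2}$; for the latter your comparison $\frac{m'}{2\Omega}y^{*2}\le\phi_{t_1,t_2}(\tilde y)$ can be made fully explicit by choosing $\tilde y=\Lambda'(0)=\frac{t_1+\omega t_2}{m'}\sqrt{2/\pi}$ (the upper limit of the integral in (\ref{eq:intld})), where $\Lambda^*(\tilde y)=0$, giving $y^*\le\tilde y\le\max(1,\omega)\sqrt{2/\pi}$ immediately. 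Two small points worth recording to make the argument airtight: differentiability of $\Lambda^*$ at the interior minimizer holds because $\Lambda$ is a smooth, strictly convex cumulant generating function whose derivative maps $\mathbb{R}$ onto $(0,\infty)$, so $\Lambda^*$ is smooth and strictly convex on $(0,\infty)$ and blows up at both endpoints, forcing $y^*$ into the interior; and the supporting-line inequality holds trivially where $\Lambda^*=+\infty$, so the minorant is valid on all of $(0,\infty)$.
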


\noindent This means that
\begin{equation*}
p_{Z}(0) \leq e^{-m'\left((\frac{m'}{2\Omega})y_{t_1,t_2}*^2
+\Lambda^*( y_{t_1,t_2}* )\right)} \cdot R_{m'}(t_1,t_2),
\end{equation*}
where $m'^{-1} \sup_{\frac{(t1+t_2)}{n} \in [\delta-\rho_1-\rho_2,
1]} \log(R_{m'}(t_1,t_2))=o(1)~~as~~m'\rightarrow \infty$.

Now in order to find a lower bound on the decay exponent for
$p_Z(0)$,(ultimately the decay exponent
$\psi_{int}(\tau_1,\tau_2)$), we need to focus on finding the
minimizer $y_{t_1,t_2}*$ for $(\frac{m'}{2\Omega})y^2+\Lambda^*(y)$.
On this way, by setting the derivative of
$(\frac{m'}{2\Omega})y^2+\Lambda^*(y)$ with respect to $y$ to 0, and
also noting the derivative ${\Lambda^{*}}'(y)=s$, we have
\begin{equation}
s=-\frac{m'}{\Omega}y. \label{eq:sy2}
\end{equation}

\noindent At the same time, the $s$ maximizing $\Lambda^{*}(y)$ must
satisfy
\begin{equation}
y=\frac{t_1}{t_1+t_2}
\Lambda_{1}'(s)+\frac{t_2}{t_1+t_2}\Lambda_{2}'(s), \label{eq:ys2}
\end{equation}

\noindent namely, (by writing out (\ref{eq:ys2})),
\begin{equation}
y=\frac{t_1+\omega^2t_2}{t_1+t_2}s+Q(s),
\end{equation}
where $Q(s)$ is defined as in Theorem \ref{thm:main internal angle
thm}. By combining (\ref{eq:sy2}) and (\ref{eq:ys2}), we can solve
for the $s$ and $y$, thus resulting in the decay exponent for
$\psi_{int}(\tau_1,\tau_2)$ as calculated in Theorem \ref{thm:main
internal angle thm}
\section{Proof of Theorem \ref{thm:p1=1 theorem}}
\label{App: proof of theorem p1=1}
Let $\delta' = \delta_c(\gamma_1,\gamma_2,1,p_2,\omega)$ and $\delta'' = \delta_c(0,1,0,p_2,1)$. From Theorem \ref{thm:main delta bound} we know that:
\bea \delta' &=& \min\{\delta~|~\psi'_{com}(0,\tau_2)-\psi'_{int}(0,\tau_2)-\psi'_{ext}(0,\tau_2)<0~ \forall~
0\leq \tau_2\leq \gamma_2(1-p_2)\nonumber\\
&&~~~~~~~~, \tau_2 > \delta-\gamma_1-\gamma_2p_2 \},\nonumber \\
   &=& \gamma_2\min\{\delta~|~\psi'_{com}(0,\gamma_2\tau_2)-\psi'_{int}(0,\gamma_2\tau_2)-\psi'_{ext}(0,\gamma_2\tau_2)<0~ \forall~
0\leq \tau_2\leq 1-p_2\nonumber\\
&&~~~~~~~~, \tau_2 > \delta-p_2 \}+\gamma_1,\label{eq:aux1}
\eea
\noindent and
\bea \delta'' &=& \min\{\delta~|~\psi''_{com}(0,\tau_2)-\psi''_{int}(0,\tau_2)-\psi''_{ext}(0,\tau_2)<0~ \forall~
0\leq \tau_2\leq 1-p_2\nonumber\\
&&~~~~~~~~, \tau_2 > \delta-p_2 \},\label{eq:aux2}\eea
\noindent where the exponents $\psi'_{com}$,$\psi'_{int}$,$\psi'_{ext}$,$\psi''_{com}$,$\psi''_{int}$ and $\psi''_{ext}$ can be found using Theorem \ref{thm:main delta bound}. Here, we basically show that when $\omega\rightarrow \infty$:
\bea \psi'_{com}(0,\gamma_2\tau_2)&=&\gamma_2\psi''_{com}(0,\tau_2),\label{eq:1}\\
\psi'_{int}(0,\gamma_2\tau_2)&=&\gamma_2\psi''_{int}(0,\tau_2),\label{eq:2}\\ \psi'_{ext}(0,\gamma_2\tau_2)&=&\gamma_2\psi''_{ext}(0,\tau_2).\label{eq:3}\eea

\noindent (\ref{eq:1}) follows immediately from the definition of $\psi_{com}$ in (\ref{eq:comb angle}). On the other hand, from (\ref{eq:ext angle}), for $\omega\rightarrow$ we know that
\bea \psi'_{ext}(0,\gamma_2\tau_2) =
c'{x'_0}^2-\alpha'_2\log{G(\omega x'_0)}, \nonumber\\
\psi''_{ext}(0,\tau_2) =
c''{x''_0}^2-\alpha''_2\log{G(x''_0)}.\nonumber
\eea
\noindent  Following the details of derivations as in Theorem \ref{thm:main delta bound}, we realize that:
\beq c'=\gamma_2\omega^2c'',~\omega x'_0 = x''_0,~\alpha'_2 = \gamma_2\alpha''_2, \eeq
\noindent which implies that  $\psi'_{ext}(0,\gamma_2\tau_2)=\gamma_2\psi''_{ext}(0,\tau_2)$. Finally, from (\ref{eq:int angle}), we know that
\bea
\psi'_{int}(0,\gamma_2\tau_2) =
(\Lambda^*(y')+\frac{\gamma_2\tau_2}{2\Omega'}y'^2+\log2)\gamma_2\tau_2,\nonumber \\
\psi''_{int}(0,\tau_2) =
(\Lambda^*(y'')+\frac{\tau_2}{2\Omega''}y''^2+\log2)\tau_2.\nonumber
\eea
\noindent  Following the details of derivations as in Theorem \ref{thm:main delta bound}, we realize that for $\omega\rightarrow\infty$:
\beq y'=y'',~\Omega'=\gamma_2\Omega''.\eeq
\noindent which implies that  $\psi'_{int}(0,\gamma_2\tau_2)=\gamma_2\psi''_{int}(0,\tau_2)$. From (\ref{eq:aux1}), (\ref{eq:aux2}) and (\ref{eq:1})-(\ref{eq:3}) it follows that

\beq \delta' = \gamma_2\delta'' + \gamma_1.\eeq

\end{document}